\newcommand*{\rom}[1]{\expandafter\@slowromancap\romannumeral #1@}
\newcommand{\bethe}{\mathrm{bethe}}
\newcommand{\sink}{\mathrm{sinkhorn}}
\newcommand{\ssink}{\mathrm{scaledsinkhorn}}
\newcommand{\betheq}{\mathrm{F}}
\newcommand{\frst}{\mathrm{U}}
\newcommand{\scnd}{\mathrm{V}}
\newcommand{\mc}{\textbf{C}}
\newcommand{\my}{\textbf{Y}}
\newcommand{\vones}{\textbf{1}}
\newcommand{\prob}[1]{\mathrm{Pr}\left(#1 \right)}
\newcommand{\permD}{S_{\bX}}
\newcommand{\mj}{\textbf{m}_{j}}
\newcommand{\perm}{\mathrm{perm}}
\newcommand{\mx}{\textbf{X}}
\newcommand{\kds}{\R_{\geq 0}^{\bX \times \bX}}
\newcommand{\expt}[2]{\mathbb{E}_{#1}\left[#2\right]}
\newcommand{\KL}[1]{\mathrm{KL}\left(#1\right)}
\newcommand{\Axy}{\ma_{x,y}}
\newcommand{\Ns}{N}
\newcommand{\boo}{b_1}
\newcommand{\btt}{b_2}
\newcommand{\ztbtt}{[0,\btt]}
\newcommand{\otboo}{[1,\boo]}
\newcommand{\bttpo}{(\btt+1)}
\newcommand{\F}{\phi}
\newcommand{\1}{\overrightarrow{1}}
\newcommand{\R}{\mathbb{R}}
\newcommand{\bbP}{\mathbb{P}}
\newcommand{\E}{\mathbb{E}}
\newcommand{\Z}{\mathbb{Z}_{+}}
\newcommand{\simplex}{\Delta^{\bX}}
\newcommand{\psimplex}{\Delta_{pseudo}^{\bX}}
\newcommand{\dsimplex}{\Delta_{\bR}^{\bX}}
\newcommand{\probpml}{\bbP}
\newcommand{\bg}{\textbf{g}}
\newcommand{\bp}{\textbf{p}}
\newcommand{\bq}{\textbf{q}}
\newcommand{\bff}{\textbf{f}}
\newcommand{\bX}{\mathcal{D}}
\newcommand{\bQ}{\textbf{Q}}
\newcommand{\bZ}{\textbf{Z}}
\newcommand{\bR}{\textbf{R}}
\newcommand{\bS}{\textbf{S}}
\newcommand{\expo}[1]{\exp \left(#1 \right)}
\newcommand{\exps}[1]{\exp (#1 )}
\newcommand{\fnh}{\textbf{h}}
\newcommand{\otilde}{\widetilde{O}}
\newcommand{\lb}{\ell}
\newcommand{\ub}{u}
\newcommand{\cphi}{C_{\phi}}
\newcommand{\level}{\ell}
\newcommand{\ma}{\textbf{A}}
\newcommand{\mb}{\textbf{B}}
\newcommand{\pvec}{\zeta}
\newcommand{\eqdef}{\stackrel{\mathrm{def}}{=}}
\newcommand{\defeq}{\eqdef}
\newcommand{\onevec}{\overrightarrow{\mathrm{1}}}
\newcommand{\vecc}{c}
\newcommand{\poly}{\mathrm{poly}}
\newcommand{\setd}{\textbf{M}}
\newcommand{\eled}{\textbf{m}}
\newcommand{\dstoc}{\mathbf{K}_{rc}}
\newcommand{\disc}{\mathrm{disc}}
\newcommand{\mz}{\textbf{m}_{0}}
\newcommand{\bZfrac}{\textbf{Z}^{\phi,frac}_{\bR}}
\newcommand{\bqS}{\bq_{\bS}}
\newcommand{\ro}{\textbf{r}_{1}}
\newcommand{\rl}{\textbf{r}_{\ell}}
\newcommand{\ri}{\textbf{r}_{i}}
\newcommand{\zrm}{\R_{\geq 0}^{\ell \times [0,k]}}
\newcommand{\za}{\textbf{Z}_{\bR}^{{\bq},\phi}}
\newcommand{\lpi}{\level^{\bp}_{i}}
\newcommand{\Sij}{\bS_{i,j}}
\newcommand{\bigO}[1]{O\left(#1 \right)}
\newcommand{\bSp}{\bS'}
\newcommand{\logparam}{\Delta}
\newcommand{\bRext}{\bR^{\mathrm{ext}}}
\newcommand{\bZext}{\textbf{Z}^{\phi}_{\bRext}}
\newcommand{\bSext}{\textbf{S}^{\mathrm{ext}}}
\newcommand{\newk}{\bttpo}
\newcommand{\rones}{\textbf{1}}
\newcommand{\cones}{\textbf{1}}
\newcommand{\vr}{\textbf{r}}
\newcommand{\rmin}{\textbf{r}_{min}}
\DeclarePairedDelimiter{\abs}{\lvert}{\rvert}
\DeclarePairedDelimiter{\ceil}{\lceil}{\rceil}
\DeclarePairedDelimiter{\floor}{\lfloor}{\rfloor}
\DeclareMathOperator*{\argmax}{arg\,max}
\global\long\def\norm#1{\big\|#1\big\|}
\declaretheorem[name=Theorem,numberwithin=section]{thm}
\declaretheorem[name=Theorem,numberlike=thm]{theorem}
\declaretheorem[name=Lemma,numberlike=thm]{lemma}
\declaretheorem[name=Corollary,numberlike=thm]{cor}
\declaretheorem[name=Definition,numberlike=thm,style=definition]{defn}
\title{Instance Based Approximations to Profile Maximum Likelihood}
\author{%
Nima Anari\\
Stanford University\\
\texttt{anari@stanford.edu} \\
\and
Moses Charikar\\
Stanford University\\
\texttt{moses@cs.stanford.edu} \\
\and
Kirankumar Shiragur\\
Stanford University\\
\texttt{shiragur@stanford.edu} \\
\and
Aaron Sidford\\
Stanford University\\
\texttt{sidford@stanford.edu} \\
}
\begin{document}
	
\maketitle

\begin{abstract}
In this paper we provide a new efficient algorithm for approximately computing the profile maximum likelihood (PML) distribution, a prominent quantity in symmetric property estimation. We provide an algorithm which matches the previous best known efficient algorithms for computing approximate PML distributions and improves when the number of distinct observed frequencies in the given instance is small. We achieve this result by exploiting new sparsity structure in approximate PML distributions and providing a new matrix rounding algorithm, of independent interest. Leveraging this result, we obtain the first provable computationally efficient implementation of PseudoPML, a general framework for estimating a broad class of symmetric properties. Additionally, we obtain efficient PML-based estimators for distributions with small profile entropy, a natural instance-based complexity measure. Further, we provide a simpler and more practical PseudoPML implementation that matches the best-known theoretical guarantees of such an estimator and evaluate this method empirically. 
\end{abstract}

\section{Introduction}\label{sec:intro}
We consider the fundamental problem of \emph{symmetric property estimation}: given access to $n$ i.i.d.\ samples from an unknown distribution, estimate the value of a given symmetric property (i.e. one invariant to label permutation). This is an incredibly well-studied problem with numerous applications  \cite{Chao84, BF93, CCGLMCL12,TE87, Fur05, KLR99, PBGELLSD01, DS13, RCSWTKRWC09, GTPB07, HHRB01} and property-specific estimators, e.g. for support~\cite{VV11a, WY15}, support coverage~\cite{ZVVKCSLSDM16,OSW16}, entropy~\cite{VV11a, WY16, JVHW15}, and distance to uniformity~\cite{VV11b, JHW16}.

However, in a striking recent line of work it was shown that there is \emph{a universal approach} to achieving sample optimal\footnote{Sample optimality is up to constant factors. See \cite{ADOS16} for details.} estimators for a broad class of symmetric properties, including those above. \cite{ADOS16} showed that the value of the property on a distribution that (approximately) maximizes the likelihood of the observed profile (i.e. multiset of observed frequencies) is an optimal estimator up to accuracy\footnote{We use $\epsilon \gg n^{-c}$ to denote $\epsilon > n^{-c+\alpha}$ for any constant $\alpha>0$.} $\epsilon \gg n^{-1/4}$. Further, \cite{ACSS20} ,which in turn built on \cite{ADOS16,CSS19}, provided a polynomial time algorithm to compute an $ \exp(-O(\sqrt{n}\log n))$-approximate profile maximum likelihood distribution (PML).  Together, these results yield efficient sample optimal estimators for various symmetric properties up to accuracy $\epsilon \gg n^{-1/4}$.
  
Despite this seemingly complete picture of the complexity of PML, recent work has shown that there is value in obtaining improved approximate PML distributions. In \cite{CSS19pseudo,HO19} it was shown that variants of PML called \emph{PseudoPML} and \emph{truncated PML} respectively, which compute an approximate PML distribution on a subset of the coordinates, yield sample optimal estimators in broader error regime for a wide range of symmetric properties. Further, in \cite{HO20pentropy} an instance dependent quantity known as \emph{profile entropy} was shown to govern the accuracy achievable by PML and their analysis holds for all symmetric properties with no additional assumption on the structure of the property. Additionally, in \cite{HS20} it was shown that PML distributions yield a sample optimal universal estimator up to error $\epsilon \gg n^{-1/3}$ for a broad class of symmetric properties. However, the inability to obtain approximate PML distributions of approximation error better than $ \exp(-O(\sqrt{n}\log n))$ has limited the provably efficient implementation of these methods.

In this paper we enable many of these applications by providing improved efficient approximations to PML distributions. Our main theoretical contribution is a polynomial time algorithm that computes an $\exp(-O(k\log n))$-approximate PML distribution where $k$ is the number of distinct observed frequencies. As $k$ is always upper bounded by $\sqrt{n}$, our work generalizes the previous best known result from \cite{ACSS20} that computed an $ \exp(-O(\sqrt{n}\log n))$-approximate PML. Leveraging this result, our work provides the first provably efficient implementation of PseudoPML. Further, our work also yields the first provably efficient estimator for profile entropy and efficient estimators with instance-based high-accuracy guarantees via profile entropy. We obtain our approximate PML result by leveraging interesting sparsity structure in convex relaxations of PML~\cite{ACSS20,CSS19} and additionally provide a novel matrix rounding algorithm that we believe is of independent interest. 

Finally, beyond the above theoretical results we provide a simplified instantiation of these results that is sufficient for implementing PseudoPML. We believe this result is a key step towards practical PseudoPML. We provide preliminary experiments in which we perform entropy estimation using the PseudoPML approach implemented using our simpler rounding algorithm. Our results match other state-of-the-art estimators for entropy, some of which are property specific. 

\newcommand{\phik}{\psi}
\newcommand{\phikf}{h}
\newcommand{\Phik}{\Psi}
\newcommand{\Phis}{\Phi_{S}}
\newcommand{\Phikn}{\Psi^{(k,n)}}
\newcommand{\Fk}{\psi}
\newcommand{\dfreq}[1]{\mathrm{Freq}\left(#1\right)}
\newcommand{\fsubp}{\mathrm{F'}}
\newcommand{\consto}{2c_{1}}
\newcommand{\constt}{c_{5}}
\newcommand{\gset}{\mathrm{G}}
\newcommand{\bpmlp}{\bp_{\phi_{S'}}^{\beta}}
\newcommand{\phisp}{\phi_{S'}}
\newcommand{\phis}{\phi_{S}}
\newcommand{\rvS}{\textbf{S}}
\newcommand{\bgg}{\textbf{g}}
\newcommand{\Phisn}{\Phi_{S}^{n}}
\newcommand{\bpml}{\bp^{\beta}_{\phis}}
\newcommand{\fsub}{\mathrm{F}}
\newcommand{\Prob}[1]{\bbP\left(#1 \right)}
\newcommand{\xon}{x_1^n}
\newcommand{\xtn}{x_2^n}

\paragraph{Notation and basic definitions:}
Throughout this paper we assume we receive a sequence of $n$ independent samples from an underlying distribution $\bp \in \simplex$, where $\bX$ is a domain of elements and $\simplex$ is the set of all discrete distributions supported on this domain. We let  $[a,b]$ and $[a,b]_{\R}$ denote the interval of integers and reals $\geq a$ and $\leq b$ respectively, so $\simplex \defeq \{\bq \in [0,1]_{\R}^{\bX} | \norm{q}_1 = 1\}$.

We let $\bX ^n$ be the set of all length $n$ sequences and $y^n \in \bX^n$ be one such sequence with $y^n_{i}$ denoting its $i$th element. We let $\bff(y^n,x)\defeq |\{i\in [n] ~ | ~ y^n_i = x\}|$ and $\bp_{x}$ be the frequency and probability of $x\in \bX$ respectively. For a sequence $y^n \in \bX^n$, let $\setd=\{ \bff(y^n,x) \}_{x \in \bX} \backslash \{0\}$ be the set of all its non-zero distinct frequencies and $\eled_1,\eled_2,\dots, \eled_{|\setd|}$ be these distinct frequencies. 

The \emph{profile} of a sequence $y^n$, denoted $\phi=\Phi(y^n)$, is a vector in $\Z^{|\setd|}$, where $\F_j \defeq |\{x\in \bX ~|~\bff(y^n,x)=\eled_{j} \}|$ is the number of domain elements with frequency $\eled_{j}$. We call $n$ the length of profile $\F$ and let $\Phi^n$ denote the set of all profiles of length $n$. The probability of observing sequence $y^n$ and profile $\phi$ with respect to a distribution $\bp$ are as follows,
$$\bbP(\bp,y^n) = \prod_{x \in \bX}\bp_x^{\bff(y^n,x)} \quad \text{ and } \quad \probpml(\bp,\phi)=\sum_{\{y^n \in \bX^n~|~ \Phi (y^n)=\phi \}} \bbP(\bp,y^n)~.$$
	For a profile $\phi \in \Phi^{n}$, $\bp_{\phi}$ is a \emph{profile maximum likelihood} (PML) distribution if $\bp_{\phi} \in \argmax_{\bp \in \simplex}$ $\probpml(\bp,\phi)$. 
	Further, a distribution $\bp^{\beta}_{\phi}$ is a $\beta$-\emph{approximate PML} distribution if $\probpml(\bp^{\beta}_{\phi},\phi)\geq \beta \cdot \probpml(\bp_{\phi},\phi)$.	

For a distribution $\bp$ and $n$, we let $\mx$ be a random variable that takes value $\phi \in \Phi^n$ with probability $\prob{\bp,\phi}$. The distribution of $\mx$ depends only on $\bp$ and $n$ and we call $H(\mx)$ (entropy of $\mx$) the \emph{profile entropy} with respect to $(\bp,n)$ and denote it by $H(\Phi^n,\bp)$.

We use $\otilde(\cdot)$, $\widetilde{\Omega}(\cdot)$ notation to hide all polylogarithmic factors in $n$ and $N$.

\paragraph{Paper organization:} In \Cref{sec:results} we formally state our results. In \Cref{sec:convex}, we provide the convex relaxation~\cite{CSS19,ACSS20} for the PML objective. Using this convex relaxation, in \Cref{sec:mainfirst} we state our algorithm that computes an $\exp(-O(k\log n))$-approximate PML and sketch its proof. Finally, in \Cref{sec:practical}, we provide a simpler algorithm that provably implements the PseudoPML approach; we implement this algorithm and provide experiments in the same section. Due to space constraints, we defer most of the proofs to appendix.


\section{Results}\label{sec:results}
Here we provide the main results of our paper. These include computing approximations to PML where the approximation quality depends on the number of distinct frequencies, as well as efficiently implementing results on profile entropy and PseudoPML.

\paragraph{Distinct frequencies:}
Our main approximate PML result is the following. 
\begin{theorem}[Approximate PML]\label{thm:resultmainone}
There is an algorithm that given a profile $\phi \in \Phi^n$ with $k$ distinct frequencies,  computes an $\expo{-O(k\log n)}$-approximate PML distribution in time polynomial in $n$.
\end{theorem}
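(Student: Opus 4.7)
}

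The plan is to follow the convex-relaxation-then-round template of \cite{CSS19,ACSS20}, replacing their rounding step with one whose error scales with the number of distinct frequencies $k$ rather than with $\sqrt{n}$. Concretely, let $\Phi^{n}$-probability $\probpml(\bp,\phi)$ be expressed as a permanent of a structured non-negative matrix, and recall that by \cite{CSS19,ACSS20} there is a convex relaxation (phrased in terms of the scaled Sinkhorn / Bethe surrogates) whose optimum value lies within an $\exp(-O(\log n))$ factor of $\max_{\bp}\probpml(\bp,\phi)$ and whose maximizer is a fractional matrix $\bX^{*}\in\R_{\ge 0}^{\bX\times [k]}$ telling us ``what fraction of mass of each probability level goes to each frequency class $\eled_{j}$.'' This convex program is the starting point given to us by \Cref{sec:convex} of the paper.

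The first key step is to exploit a sparsity property of near-optimal solutions to this convex program. Because there are only $k$ distinct frequencies, the constraint matrix has only $O(k)$ effective ``column types,'' so by a standard basic-feasible-solution argument the convex program has a near-optimal fractional solution $\bX^{*}$ supported on at most $O(k)$ rows across the full discretization of $[0,1]$ of probability levels (equivalently, the support is contained in a ``small'' sub-matrix of size $O(k)\times k$). I would prove this by starting from any optimum, observing that rows corresponding to the same discretized probability are interchangeable, and repeatedly collapsing/merging rows until the support is described by $O(k)$ distinct probability values; the loss in the convex objective along the way is at most $\exp(-O(k\log n))$ because each merged row contributes a factor bounded by the discretization granularity $(1+\alpha)^{-O(1)}$ used in~\cite{CSS19}.

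The second and main step is a new matrix-rounding algorithm that takes this sparse fractional $\bX^{*}$ and produces an integral $\hat\bX\in\Z_{\ge 0}^{\bX\times[k]}$ (i.e.\ a genuine distribution: row sums encode how many domain elements we assign to each probability level, and the implied $\bp$ is the one the theorem returns), while losing only an $\exp(-O(k\log n))$ factor in the scaled Sinkhorn / permanent objective. Because $\bX^{*}$ has only $O(k)$ non-trivial rows, the rounding needs to fix at most $O(k)$ coordinates per column and at most $k$ columns; a careful pipage/dependent-rounding procedure that preserves column sums exactly and row sums up to $1$ then incurs a multiplicative error of $e^{O(\log n)}$ per rounded coordinate, and there are only $O(k)$ such coordinates, giving the claimed $\exp(-O(k\log n))$ overall loss. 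This is where I expect the main difficulty: one must show that each local rounding move changes the objective by at most $\poly(n)$, which requires Lipschitz-type bounds on $\scnd$ and $\frst$ from \eqref{eq:ftsd}-style terms under integer perturbations, and must simultaneously preserve the combinatorial feasibility that the assignment corresponds to an actual distribution of total mass $1$.

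Combining these steps, the returned distribution $\hat\bp$ satisfies
\[
\probpml(\hat\bp,\phi) \;\ge\; \exp(-O(k\log n))\cdot \max_{\bp\in\simplex}\probpml(\bp,\phi),
\]
which is the desired $\exp(-O(k\log n))$-approximate PML distribution. The running time is polynomial because (i) the convex program from \cite{CSS19,ACSS20} is solvable in $\poly(n)$ time, (ii) the sparsification step is a polynomial sequence of vertex-solution extractions, and (iii) the rounding touches only $O(k)=O(\sqrt{n})$ coordinates, each in $\poly(n)$ time. The hard part will be rigorously bounding the per-coordinate rounding loss and arranging the rounding order so that the cumulative error remains $\exp(-O(k\log n))$ rather than $\exp(-O(k\log^{2} n))$ or worse.
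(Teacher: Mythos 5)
Your overall architecture --- solve the convex relaxation of \cite{CSS19,ACSS20}, extract a solution supported on $O(k)$ probability levels via a basic-feasible-solution argument, then round so that the loss scales with $k$ --- matches the paper's plan, and your sparsification sketch is close in spirit to \Cref{lem:sparse} (the paper does it losslessly, via $1$-homogeneity of each row's contribution and an LP over row scalings, rather than by merging rows with a small discretization loss, but either is within budget). The genuine gap is in the rounding step. You propose a pipage/dependent-rounding procedure that fixes row sums while preserving column sums, and you justify the loss by a per-coordinate Lipschitz bound of $e^{O(\log n)}$ on the objective. That bound is false for the objective at hand: the linear term has coefficients $\mc_{ij}=\mj\log\ri$, so moving even one unit of mass of a frequency-$\mj$ column between two probability levels $\ri$ and $\textbf{r}_{i'}$ multiplies the objective by $(\ri/\textbf{r}_{i'})^{\mj}$, which can be $\exp(\Theta(n\log n))$ since $\mj$ can be as large as $n$ and the levels span $[1/(2n^2),1]$. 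This is precisely why the paper's simpler mass-shifting rounding (\Cref{thm:secmain}, Algorithm 2) only achieves a loss of $\exp(-O((\rmax-\rmin)n+k\log(\ell n)))$ and is therefore restricted to the PseudoPML setting where the probabilities lie in a narrow interval; it cannot give \Cref{thm:resultmainone}.

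What the paper needs instead, and what your proposal does not supply, is a \emph{round-down} guarantee: given the sparse $\ma\in\bZfrac$, produce $\mb$ with $\mb\le\ma$ entrywise, \emph{both} row and column sums integral, and total $\ell_1$ change only $O(k)$ (\Cref{thm:matrixround}). Entrywise domination keeps the mass constraint $\vr^{\top}\mb\vones\le 1$ intact, and integrality of the column deficits $\phi_j-[\mb^{\top}\vones]_j$ is what lets \Cref{lem:create} reassign the removed mass to \emph{newly created} probability values (weighted averages of the levels it was removed from), which is how the linear term is controlled and the loss capped at $\exp(-O(k\log n))$. Standard pipage/dependent rounding does not deliver this simultaneous ``dominated, integral row and column sums, $O(s'+t')$ total change'' structure; establishing it is the paper's main new combinatorial contribution, proved by reducing to finding a subgraph of a bipartite multigraph with all degrees $\equiv 0 \pmod k$ while deleting few edges, and combining Thomassen's mod-$k$ theorem (via \cite{LMTWZ13}), Nash-Williams' edge-disjoint spanning trees, and Karger's enumeration of near-minimum cuts to make it algorithmic. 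Without some substitute for this step (or for the ``create new probability values'' mechanism), your rounding plan either violates the column constraints $\phi_j$, breaks the unit-mass constraint, or incurs a loss far exceeding $\exp(-O(k\log n))$.
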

Our result generalizes \cite{ACSS20} which computes an $\exps{-O(\sqrt{n}\log n)}$-approximate PML. Through \cite{ADOS16} our result also provides efficient optimal estimators for class of symmetric properties when $\epsilon \gg n^{-1/4}$. Further, for distributions that with high probability output a profile with $O(n^{1/3})$ distinct frequencies, through \cite{HS20} our algorithm enables efficient optimal estimators for the same class of properties when $\epsilon \gg n^{-1/3}$. In \Cref{sec:mainfirst} we provide a proof sketch for the above theorem and defer the proof details to \Cref{app:approxpml}.

\paragraph{Profile entropy:}
One key application of our instance-based, i.e. distinct-frequency-based, approximation algorithm is the efficient implementation of the following approximate PML version of the profile entropy result from \cite{HO20pentropy}.\footnote{Theorem 3 in \cite{HO20pentropy} discuss instead exact PML and the authors discuss the approximate PML case in the comments; we confirmed the sufficiency of approximate PML claimed in the theorem through private communication with the authors.}. See \Cref{sec:intro} for the definition of profile entropy.
\begin{lemma}[Theorem 3 in \cite{HO20pentropy}]\label{lem:profileentropy}
	Let $f$ be a symmetric property. For any $\bp \in \simplex$ and a profile $\phi \sim \bp$ of length $n$ with $k$ distinct frequencies, with probability at least $1-O(1/\sqrt{n})$,
	$$|f(\bp)-f(\bp^{\beta}_{\phi})| \leq 2\epsilon_{f}\left(\frac{\widetilde{\Omega}(n)}{\ceil{H(\Phi^n,\bp)}}\right)~,$$
	where $\bp^{\beta}_{\phi}$ is any $\beta$-approximate PML distribution for $\beta>\exp(-O(k\log n))$ and $\epsilon_{f}(n)$ is
	the smallest error that can be achieved by any estimator with sample size $n$ and success proability at least $9/10$.\footnote{See \cite{HO20pentropy} for general success probability $1-\delta$; our work also holds for the general case.}
\end{lemma}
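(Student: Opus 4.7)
The plan is to follow the PML-competitiveness framework of \cite{ADOS16}, refined by replacing the crude bound $|\Phi^n|$ on the number of profiles with an effective-support bound driven by the profile entropy $H(\Phi^n,\bp)$, which is the key innovation of \cite{HO20pentropy}. Recall the template: if some estimator $g$ achieves error $\epsilon_{f}(m)$ on $m$ samples with success probability $9/10$, then for any $\beta$-approximate PML estimator, one gets error at most $2\epsilon_{f}(m)$ on $n$ samples with failure probability at most $(\text{effective number of profiles})\cdot (1/9)^{\Omega(n/m)}/\beta$. Setting this equal to $O(1/\sqrt{n})$ determines the usable sample complexity $m$, and the whole game is to minimize the effective profile count.

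First I would construct a \emph{typical set} $T_n \subseteq \Phi^n$ satisfying $\Prob{\bp,\phi \notin T_n} \leq O(1/\sqrt{n})$ and $|T_n| \leq \expo{H(\Phi^n,\bp)}\cdot \poly(n)$. This follows from a standard AEP-style argument: order profiles by decreasing $\probpml(\bp,\phi)$ and truncate once the tail mass drops below $O(1/\sqrt{n})$; applying Markov's inequality to $-\log \probpml(\bp,\phi)$, whose expectation is exactly $H(\Phi^n,\bp)$, bounds $|T_n|$. The $1/\sqrt{n}$ slack matches the stated failure probability in the lemma.

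Next I would execute the competitive argument but restrict the union bound to $T_n$. For any symmetric property $f$ and target error $\epsilon = \epsilon_{f}(m)$, boost the base estimator $g$ by majority voting over $\Theta(n/m)$ independent chunks of length $m$, yielding success probability $1-(1/9)^{\Omega(n/m)}$ on $n$-length samples. Using the approximate PML inequality $\probpml(\bp^{\beta}_{\phi},\phi) \geq \beta \cdot \probpml(\bp,\phi)$, the probability that the PML estimator errs by more than $2\epsilon$ on a profile drawn from $\bp$ is at most
$$\Prob{\bp,\phi \notin T_n} + \frac{|T_n|}{\beta}\cdot (1/9)^{\Omega(n/m)} \leq O(1/\sqrt{n}) + \frac{\expo{H(\Phi^n,\bp)}\poly(n)}{\beta}\cdot (1/9)^{\Omega(n/m)}.$$
Choosing $m = \widetilde{\Omega}(n)/\ceil{H(\Phi^n,\bp)}$ makes the exponent $n/m$ large enough that, provided $\beta \geq \expo{-O(k\log n)}$ and we use the bound $H(\Phi^n,\bp) \leq \widetilde{O}(k\log n)$ (since profiles with $k$ distinct frequencies live in a space parameterized by $k$ counts each of magnitude at most $n$), the second term is also $O(1/\sqrt{n})$.

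The main obstacle will be tightening the typicality step to achieve the $\widetilde{\Omega}$ in the numerator and, relatedly, verifying that $|T_n| \leq \exp(H(\Phi^n,\bp))\cdot \poly(n)$ holds with the polylog factors that the theorem is willing to absorb. A subtler issue is that the ``bad event'' must technically be phrased as a union bound over candidate near-PML distributions $\bp^\ast$ rather than over profiles; this is handled by the reduction from distinguishing distributions to distinguishing their profile samples, exactly as in \cite{ADOS16}, and the approximate-PML extension simply costs the multiplicative $1/\beta$ factor visible above. Controlling this reduction, together with the tight relation between $k$ distinct frequencies and the profile entropy bound that the approximation factor $\beta = \exp(-O(k\log n))$ must dominate, is the technical heart of the argument.
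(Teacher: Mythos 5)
You should first note that this paper never proves \Cref{lem:profileentropy}: it is imported verbatim as Theorem~3 of \cite{HO20pentropy} (the footnote even records that the sufficiency of \emph{approximate} PML was confirmed through private communication with those authors), so there is no in-paper proof to compare against and your reconstruction must be judged against what such a proof actually requires. Judged that way, it has a genuine gap at the typical-set step. Markov's inequality applied to $-\log\probpml(\bp,\phi)$, whose mean is indeed $H(\Phi^n,\bp)$, only shows that a set of size $\expo{\lambda H(\Phi^n,\bp)}$ captures all but a $1/\lambda$ fraction of the mass; to reach failure probability $O(1/\sqrt{n})$ this way you need $\lambda=\Omega(\sqrt{n})$, giving $|T_n|\leq \expo{\Omega(\sqrt{n}\,H(\Phi^n,\bp))}$, not $\expo{H(\Phi^n,\bp)}\cdot\poly(n)$. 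The stronger bound you assert is simply false for general distributions (spread a $2/\sqrt{n}$ fraction of mass uniformly over $\expo{\sqrt{n}\,H'}$ profiles and the rest on one profile); it requires concentration of the information content $-\log\probpml(\bp,\Phi)$ around its mean, and there is no i.i.d.\ product structure here on which to run a ``standard AEP'' argument, since $\Phi$ is a single draw. Establishing this concentration from the specific structure of profile distributions is one of the main technical contributions of \cite{HO20pentropy}, not a routine truncation.

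A second, related gap is how you absorb the $1/\beta$ factor. After restricting the \cite{ADOS16}-style union bound to $T_n$ and amplifying the base estimator over $\Theta(n/m)$ blocks with $m=\widetilde{\Omega}(n)/\ceil{H(\Phi^n,\bp)}$, the amplification buys you roughly $\expo{-\widetilde{\Omega}(H(\Phi^n,\bp))}$, while the uncontrolled factors are $\expo{\otilde(H(\Phi^n,\bp))}$ from $|T_n|$ \emph{and} $\expo{O(k\log n)}$ from $1/\beta$. The inequality you invoke, $H(\Phi^n,\bp)\leq\otilde(k\log n)$, points the wrong way and does not make the amplification exponent dominate $O(k\log n)$; what is needed is that with probability $1-O(1/\sqrt{n})$ the observed number of distinct frequencies satisfies $k\log n\leq\otilde(\ceil{H(\Phi^n,\bp)})$. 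That relation---profile entropy is $\widetilde{\Theta}$ of the expected number of distinct frequencies, together with concentration of that count---is again a core structural theorem of \cite{HO20pentropy}. Your overall skeleton (typical set, median/majority amplification, profile-wise union bound paying $1/\beta$ for approximate PML) is the right shape and matches the known competitive framework, but without these two profile-specific ingredients the second failure term is not $O(1/\sqrt{n})$ at the claimed sample size, so the argument as written does not establish the lemma.
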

As the above result requires an $\exp(-O(k\log n))$-approximate PML, our \Cref{thm:resultmainone} immediately provides an efficient implementation of it. \Cref{lem:profileentropy} holds for any symmetric property with no additional assumptions on the structure. Further, it trivially implies a weaker result in \cite{ADOS16} where $\ceil{H(\Phi^n,\bp)}$ is replaced by $\sqrt{n}$. For further details and motivation, see \cite{HO20pentropy}.



\paragraph{PseudoPML:}
Our approximate PML algorithm also enables the efficient implementation of PseudoPML~\cite{CSS19pseudo,HO19}. Using PseudoPML, the authors in \cite{CSS19pseudo,HO19} provide a general estimation framework that is sample optimal for many properties in wider parameter regimes than the previous universal approaches. At a high level, in this framework, the samples are split into two parts based on the element frequencies. The empirical estimate is used for the first part and for the second part, they compute the estimate corresponding to approximate PML.
To efficiently implement the approach of PseudoPML required efficient algorithms with either strong or instance dependent approximation guarantees and our result (\Cref{thm:resultmainone}) achieves the later.
%
We first state a lemma that relates the approximate PML computation to the PseudoPML.

\begin{lemma}[PseudoPML]\label{lem:pseudomain}
	Let $\phi \in \Phi^n$ be a profile with $k$ distinct frequencies and $\ell,u \in [0,1]$.
	If there exists an algorithm that runs in time $T(n,k,u,\ell)$ and returns a distribution $\bp'$ such that
	\begin{equation}
	\label{eq:pseudopml_cond}
	\probpml(\bp',\phi) \geq \expo{-O((u-\ell)n\log n+k\log n)} \max_{\bq \in \simplex_{[\ell,u]}}\probpml(\bq,\phi)~,
	\end{equation}
	where $\simplex_{[\ell,u]}\defeq \{\bp\in \simplex\Big|\bp_x \in [\ell,u] ~\forall x\in \bX\}$. Then we can implement the PseudoPML approach with the following guarantees,
	\begin{itemize}[noitemsep,nosep,leftmargin=*]
	\item For entropy, when error parameter  $\epsilon>\Omega\left(\frac{\log N}{N^{1-\alpha}}\right)$ for any constant $\alpha>0$, the estimator is sample complexity optimal and runs in $T(n,O(\log n),O(\log n /n),1/\poly(n))$ time.
	\item For distance to uniformity, when $\epsilon>\Omega\left(\frac{1}{N^{1-\alpha}}\right)$ for any constant $\alpha>0$, the estimator is sample complexity optimal and runs in $T(n,\otilde(1/\epsilon),O(1/N),\Omega(1/N))$ time.
	\end{itemize}
	\end{lemma}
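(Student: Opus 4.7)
The plan is to instantiate the PseudoPML framework of \cite{CSS19pseudo,HO19}, using the hypothesized algorithm as a black box to handle the ``PML side'' of a split-sample estimator. Given the profile $\phi$, one partitions the domain into ``high-frequency'' elements $S_1$ and ``low-frequency'' elements $S_2$ via a property-dependent threshold $\tau$. For $S_1$ we use the empirical plug-in estimate, whose error is controlled by standard concentration; for $S_2$ we form a pseudo-profile and plug in the value of the property on an approximate pseudo-PML distribution restricted to the probability window allowed by $S_2$.

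To obtain the required pseudo-PML on $S_2$, observe that by the threshold choice the pseudo-profile on $S_2$ has at most $k$ distinct frequencies (frequencies are integers in a bounded range) and every $S_2$-element has probability mass in some interval $[\ell,u]$. So the pseudo-PML problem is exactly an instance of maximizing $\probpml(\bq,\phi)$ over $\bq \in \simplex_{[\ell,u]}$, and the hypothesis on the algorithm with parameters $(n,k,u,\ell)$ returns a candidate $\bp'$ satisfying \eqref{eq:pseudopml_cond} in time $T(n,k,u,\ell)$.

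For entropy, I would choose $\tau = \Theta(\log n/n)$; then $S_2$ contains only elements with observed frequency in $[1,O(\log n)]$, giving $k = O(\log n)$, while the standard quantization from \cite{CSS19pseudo} yields $u = O(\log n/n)$ and $\ell = 1/\poly(n)$. For distance to uniformity I would center $\tau$ at the uniform frequency $n/N$ with window width $\widetilde{\Theta}(\epsilon)$, giving $k = \otilde(1/\epsilon)$ and $u,\ell = \Theta(1/N)$. In both cases one verifies that the approximation factor $\exp(-O((u-\ell)n\log n + k\log n))$, when multiplied through the PseudoPML error decomposition in \cite{CSS19pseudo,HO19}, is small enough to recover the sample-optimal rate in the stated regimes $\epsilon \gg \log N/N^{1-\alpha}$ and $\epsilon \gg 1/N^{1-\alpha}$ respectively.

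The main obstacle is matching the exponent in \eqref{eq:pseudopml_cond} to the tolerance budget of the PseudoPML analysis. The $(u-\ell)n\log n$ term absorbs the discrepancy between the true (unrestricted) pseudo-PML and its restriction to $\simplex_{[\ell,u]}$, which arises because the analysis rediscretizes probabilities near the window boundary; the $k\log n$ term absorbs the combinatorial rounding cost of collapsing a continuous pseudo-distribution onto $k$ level sets, exactly as in the proof of \Cref{thm:resultmainone}. Both bounds are implicit in the original proofs, so the substantive work is to extract a clean restricted-simplex approximate-PML interface and check that the parameter choices above make its guarantee compatible with the framework's requirements; the runtime bound then follows by inspection since the only non-trivial subroutine is the call to the hypothesized algorithm.
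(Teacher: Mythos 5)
Your high-level plan coincides with the paper's: run the split-sample framework of \cite{CSS19pseudo,HO19}, use the empirical estimate on the high-frequency part, and invoke the hypothesized algorithm for the low-frequency part, with exactly the parameter choices the paper makes ($k=O(\log n)$, $u=O(\log n/n)$, $\ell=1/\poly(n)$ for entropy; $k=\otilde(1/\epsilon)$, $u,\ell=\Theta(1/N)$ for distance to uniformity). The gap is in the sentence claiming that ``the pseudo-PML problem is exactly an instance of maximizing $\probpml(\bq,\phi)$ over $\bq\in\simplex_{[\ell,u]}$.'' It is not: the pseudo-PML objective $\Prob{\bq,\phi_{S}}$ is an optimization over distributions $\bq$ on the \emph{whole} domain $\bX$, in which only the coordinates indexed by $S$ are constrained to $[\ell,u]$, the pseudo-profile records frequencies of $S$-elements only, and the mass placed on $S$ is strictly less than $1$; whereas the interface \eqref{eq:pseudopml_cond} you are given constrains \emph{every} coordinate to $[\ell,u]$ and maximizes an ordinary profile probability over distributions of total mass $1$. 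Bridging these two problems is the substantive new step, and it is supplied in the paper by \Cref{lem:computepseudo}: one factors the pseudo-profile probability as $\Prob{\bq,\phi_S}=\probpml(\bq_{S},\phi_{S})\,\probpml(\bq_{\bar S},\phi_{\bar S})$, pulls out the total mass $\alpha$ assigned to $S$, observes that the two restricted maximizations decouple from $\alpha$, solves $\max_{\alpha\in[0,1]}\alpha^{n_1}(1-\alpha)^{n_2}$ in closed form ($\alpha^{*}=n_1/n$), and then rescales the $S$-supported approximate maximizer returned by the hypothesized algorithm (run with $S$ as the domain) by $n_1/n$. Your proposal defers exactly this step as ``implicit in the original proofs,'' but it is not in \cite{CSS19pseudo,HO19}; without it the hypothesis of the lemma cannot be applied to the pseudo-PML objective at all, so the argument as written does not close.

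A secondary point you should make explicit (the paper does): the downstream guarantees of the framework only use the inequality $\Prob{\bp_{\mathrm{pml}},\phi_S}\geq\beta\,\Prob{\bp,\phi_S}$ against the \emph{true} distribution $\bp$, and with high probability $\bp$ lies in the constrained set because all elements of $S$ have true probabilities in $[\ell,u]$; this is what licenses substituting the approximate maximizer over the constrained set as a proxy for a genuine $(\beta,S)$-approximate pseudo-PML distribution, and it is also where the lower endpoint $\ell=1/\poly(n)$ for entropy has to be justified (probabilities below $1/\poly(n)$ contribute negligibly). With \Cref{lem:computepseudo} and this proxy observation in place, your parameter bookkeeping and the runtime claim go through as in the paper.
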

The proof of the lemma is divided into two main steps. In the first step, we relate \eqref{eq:pseudopml_cond} to conditions considered in PseudoPML literature.
In the second step, we leverage this relationship and the analysis in \cite{CSS19pseudo,HO19} to obtain the result. See \Cref{app:pseduo} for the proof of the lemma and other details. As discussed in \cite{CSS19pseudo,HO19}, the above results are interesting because we have a general framework (PseudoPML approach) that is sample optimal in a broad range of non-trivial estimation settings; for instance when $\epsilon < \frac{\log N}{N}$ for entropy and $\epsilon <  \frac{1}{N^C}$ for distance to uniformity where $C>0$ is a constant, we know that the empirical estimate is optimal.


As our approximate PML algorithm (\Cref{thm:resultmainone}) runs in time polynomial in $n$ (for all values of $k$) and returns a distribution that satisfies the condition of the above lemma; we immediately obtain an efficient implementation of the results in \Cref{lem:pseudomain}. However for practical purposes, we present a simpler and faster algorithm that outputs a distribution which suffices for the application of PseudoPML. 
We summarize this result in the following theorem.
\begin{theorem}[Efficient PseudoPML]\label{thm:resultmaintwo}
	There exists an algorithm that implements \Cref{lem:pseudomain} in time $T(n,k,u,\ell)=\otilde(n~ k^{\omega-1} \log \frac{u}{\ell})$, where $\omega$ is the matrix multiplication constant. Consequently, this provides estimators for entropy and distance to uniformity in time $\otilde(n)$ and $\otilde({n/{\epsilon^{\omega-1}}})$ under their respective error parameter restrictions.
\end{theorem}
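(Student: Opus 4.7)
The plan is to reduce Lemma 2.2 to solving a restricted convex relaxation followed by a rounding step, avoiding the heavy matrix rounding machinery that the general $\exp(-O(\sqrt{n}\log n))$-approximate PML algorithm of \cite{ACSS20} needs. I would start from the Section 3 convex relaxation, but apply it only over $\simplex_{[\ell,u]}$: discretize the probability range $[\ell,u]$ geometrically with ratio $(1+\alpha)$ to a set $\bR$ of size $t = O(\log(u/\ell)/\alpha)$, and choose $\alpha$ so that the standard discretization loss of $\exp(-O(\alpha n))$ (in the spirit of Lemma 4.4 of \cite{CSS19}) fits inside the $\exp(-O((u-\ell)n\log n))$ slack permitted by Lemma 2.2. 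This reduces the task to a smooth concave maximization over a nonnegative $k \times t$ matrix $X$ whose row sums are pinned by the profile $\phi$ and whose column sums encode how many domain elements sit at each discrete probability level in $\bR$.

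Next, I would solve this restricted relaxation to high accuracy with a short sequence of Newton-style iterations. Because the program has only $O(k \log(u/\ell))$ variables and its objective factorizes across columns once row marginals are fixed, each iteration can be reduced, by eliminating the column slack variables, to linear algebra on a $k \times k$ block, giving $\otilde(k^{\omega-1})$ work per column and $\otilde(k^{\omega-1} \log(u/\ell))$ per outer iteration; fast matrix multiplication is what introduces the $k^{\omega-1}$ factor. The additional factor of $n$ in $T(n,k,u,\ell)$ comes from manipulating quantities of bit-length $O(\log n)$, enforcing the global length-$n$ profile constraint, and reaching the requisite accuracy in $\otilde(n)$ arithmetic operations.

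The heart of the argument, and what I expect to be the main obstacle, is rounding the fractional optimum $\hat X$ to a nonnegative integer matrix $X$ from which a valid $\bp' \in \simplex_{[\ell,u]}$ can be extracted while losing at most an $\exp(-O(k\log n))$ factor in the objective. I would use a simple column-wise pipage-style procedure that preserves column sums and changes each entry by at most one; since each column has only $k$ entries and each swap perturbs the multiplicative objective by $\exp(O(\log n))$, the total rounding loss is at most $\exp(-O(k\log n\cdot \log(u/\ell)))$, which fits in the $\exp(-O((u-\ell)n\log n + k\log n))$ budget in every parameter regime contemplated by Lemma 2.2. The delicate point is keeping the row-sum constraints feasible after rounding: a residual $O(k)$ discrepancy per row must be absorbed into the multiplicative slack, which is possible because the relaxation is tight up to $\exp(-O(k\log n))$ and at most $O(k)$ additional local corrections are needed to restore integrality of $\phi$.

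Finally, the entropy and distance-to-uniformity corollaries follow by substituting the parameter settings given in Lemma 2.2 into $T(n,k,u,\ell)$: for entropy, $k = O(\log n)$ and $\log(u/\ell) = O(\log n)$ gives $\otilde(n)$; for distance to uniformity, $k = \otilde(1/\epsilon)$ and $\log(u/\ell) = O(1)$ gives $\otilde(n/\epsilon^{\omega-1})$, matching the stated bounds.
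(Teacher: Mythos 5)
There is a genuine gap, and it stems from an inconsistent accounting of the discretization. To keep the discretization loss $\exp(-O(\alpha n))$ inside the budget $\exp(-O((u-\ell)n\log n+k\log n))$ of \Cref{lem:pseudomain} you are forced to take $\alpha = \otilde(k/n)$ (the paper takes $\alpha=k/n$ in \Cref{lem:probdisc}), so the number of probability levels is $|\bR| = \Theta(\log(u/\ell)/\alpha)=\tilde\Theta\bigl(n\log(u/\ell)/k\bigr)$, not $O(\log(u/\ell))$. Hence your relaxation has $\tilde\Theta(n\log(u/\ell))$ variables, your ``$O(k\log(u/\ell))$ variables, $\otilde(k^{\omega-1})$ per column'' Newton accounting does not apply, and the extra factor of $n$ in $T(n,k,u,\ell)$ cannot be attributed to bit-length or to ``$\otilde(n)$ arithmetic operations to reach accuracy'' --- in the paper it is exactly the number of levels $|\bR|=O(n\log(u/\ell)/k)$ multiplied by a per-level cost of $\otilde(k^{\omega})$, where the $k^{\omega}$ comes from the sparsification subroutine (\Cref{lem:sparse}) and the convex solve itself costs only $\otilde(|\bR|k^2)$ by Theorem 4.17 of \cite{CSS19} (\Cref{lem:maximizer}).

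The same miscount breaks your rounding analysis. The paper's key idea, which your proposal omits, is to first sparsify the fractional optimum using $1$-homogeneity of the objective and a basic feasible solution of an auxiliary LP (\Cref{lem:sparse}), so that only $k+1$ rows (levels) are nonzero; only then does a per-level rounding (the cascade in \Cref{alg:round2}, or the matrix rounding of the first algorithm) lose $\exp(-O(k\log n))$. Without sparsification, any procedure that pays $\exp(O(\log n))$ per nonzero level pays $\exp(-\tilde\Omega(n\log(u/\ell)))$ in total, which is catastrophically outside the budget; and even under your own (incorrect) level count, the claimed loss $\exp(-O(k\log n\cdot\log(u/\ell)))$ is $\exp(-\Theta(\log^3 n))$ for entropy, whereas \eqref{eq:pseudopml_cond} there requires $\exp(-O(\log^2 n))$, so the guarantee ``fits in every parameter regime'' is false. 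Finally, after making the row sums integral the constraint $\vr^{\top}\bS\rones\le 1$ can be violated; the paper restores it by rescaling all probability values by $c\le 1+\rmax-\rmin$, and this rescaling is precisely the source of the $(u-\ell)n$ term in \Cref{thm:secmain} --- your proposal never addresses this step, but its loss must be tracked to match the guarantee of \Cref{lem:pseudomain}.
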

See \Cref{sec:practical} for a description of the algorithm and proof sketch. The running time in the above result involves: solving a convex program, $n/k$ number of linear system solves of $k\times k$ matrices and other low order terms for the remaining steps. In our implementation we use CVX\cite{cvx} with package CVXQUAD\cite{FSP17} to solve the convex program. We use couple of heuristics to make our algorithm more practical and we discuss them in \Cref{app:experiments}.

\subsection{Related work}
PML was introduced by \cite{OSSVZ04}. Since then, many heuristic approaches \cite{OSSVZ04,ADMOP10,PJW17,Von12, Von14} have been proposed to compute an approximate PML distribution. Recent work of \cite{CSS19} gave the first provably efficient algorithm to compute a non-trivial approximate PML distribution and gave a polynomialy time algorithm to compute a $\exp(-O(n^{2/3}\log n))$ approximation. Their proof of this result is broadly divided into three steps.
In the first step, the authors in \cite{CSS19} provide a convex program that approximates the probability of a profile for a fixed distribution.
 In the second step, they perform minor modifications to this convex program to reformulate it as instead maximizing over all distributions while maintaining the convexity of the optimization problem. The feasible solutions to the modified convex program represent fractional distributions and in the third step, a rounding algorithm is applied to obtain a valid distribution. The approximation quality of this approach is governed by the first and last step and  \cite{CSS19} showed a loss of $\exp(-O(n^{2/3}\log n))$ for each and thereby obtained $\exp(-O(n^{2/3}\log n))$-approximate PML distribution. In follow up work, \cite{ACSS20} improved the analysis for the first step and then provided a better rounding algorithm in the third step to output an $\exp(-O(\sqrt{n}\log n))$-approximate PML distribution. The authors in \cite{ACSS20} showed that the convex program considered in the first step by \cite{CSS19} approximates the probability of a profile for a fixed distribution up to accuracy $\exp(-O(k\log n))$, where $k$ is the number of distinct observed frequencies in the profile. However they incurred a loss of $\exp(-O(\sqrt{n}\log n))$ in the rounding step; thus returning an $\exp(-O(\sqrt{n}\log n))$ PML distribution. To prove these results, \cite{CSS19} used a combinatorial view of the PML problem while \cite{ACSS20} analyzed the Bethe/Sinkhorn approximation to the permanent~\cite{Von12,Von14}.

Leveraging the connection between PML and symmetric property estimation, \cite{CSS19} and \cite{ACSS20} gave efficient optimal universal estimators for various symmetric properties when $\epsilon \gg n^{-1/6}$ and $\epsilon \gg n^{-1/4}$ respectively. The broad applicability of PML in property testing and to estimate other symmetric properties was later studied in \cite{HO19}. \cite{HS20} showed interesting continuity properties of PML distributions and proved their optimality for sorted $\ell_1$ distance and other symmetric properties when $\epsilon \gg n^{-1/3}$; no efficient version of this result is known yet.


There have been other approaches for designing universal estimators, e.g. \cite{VV11a} based on \cite{ET76}, \cite{HJW18} based on local moment matching, and variants of PML by \cite{CSS19pseudo,HO19} that weakly depend on the property. Optimal sample complexities for estimating many symmetric properties were also obtained by constructing property specific estimators, e.g.  
sorted $\ell_{1}$ distance \cite{VV11b, HJW18}, Renyi entropy~\cite{AOST14, AOST17}, KL divergence~\cite{BZLV16, HJW16} and others.

%
%

\subsection{Overview of techniques}
Here we provide a brief overview of the proof to compute an $\exp(-O(k\log n))$-approximate PML distribution. As discussed in the related work, both \cite{CSS19,ACSS20} analyzed the same convex program; \cite{ACSS20} showed that this convex program approximates the probability of a profile for a fixed distribution up to a multiplicative factor of $\exp(-O(k\log n))$. However in the rounding step, their algorithms incurred a loss of $\exp(-O(n^{2/3}\log n))$ and $\exp(-O(\sqrt{n}\log n))$ respectively. Computing an improved $\exp(-O(k\log n))$-approximate PML distribution required a better rounding algorithm which in turn posed several challenges. We address these challenges by leveraging interesting sparsity structure in the convex relaxation of PML~\cite{ACSS20,CSS19} (\Cref{lem:sparse}) and provide a novel matrix rounding algorithm (\Cref{thm:matrixround}). 

In our rounding algorithm, we first leverage homogeneity in the convex relaxation of PML and properties of basic feasible solutions of a linear program to efficiently obtain a sparse approximate solution to the convex relaxation. This reduces the problem of computing the desired approximate PML distribution to a particular matrix rounding problem where we need to ``round down'' a matrix of non-negative reals to another one with integral row and column sums without changing the entries too much ($O(k)$ overall) in $\ell_1$. Perhaps surprisingly, we show that this is always possible by reduction to a combinatorial problem which we solve by combining seemingly disparate theorems from combinatorics and graph theory. Further, we show that this rounding can be computed efficiently by employing algorithms for enumerating near-minimum-cuts of a graph \cite{Karger96}.

\section{Convex Relaxation to PML}\label{sec:convex}
Here we define the convex program that approximates the PML objective. This convex program was initially introduced in \cite{CSS19} and analyzed rigorously in \cite{CSS19,ACSS20}. We first describe the notation and later state the theorem in \cite{ACSS20} that captures the guarantees of the convex program.

\textbf{Probability discretization:} 
Let $\bR \defeq \{\ri\}_{i \in [1,\ell]}$ be a finite discretization of the probability space, where $\ri =\frac{1}{2n^2}(1+\alpha)^{i}$ for all $i\in [1,\ell-1]$, $\rl=1$ and $\ell\defeq |\bR|$ be such that $\frac{1}{2n^2}(1+\alpha)^{\ell}>1$; therefore $\ell=O(\frac{\log n}{\alpha})$. Let $\vr \in \Z^{\ell}$ be a vector where the $i$'th element is equal to $\ri$.  We call $\bq \in [0,1]^{\bX}_{\R}$ a \emph{pseudo-distribution} if $\|\bq\|_1 \leq 1$ and a \emph{discrete pseudo-distribution} with respect to $\bR$ if all its entries are in $\bR$ as well. We use $\psimplex$ and $\dsimplex$ to denote the set of all pseudo-distributions and discrete pseudo-distributions with respect to $\bR$ respectively.  
For all probability terms defined involving distributions $\bp$, we extend those definitions to pseudo distributions $\bq$ by replacing $\bp_{x}$ with $\bq_{x}$ everywhere. 
The effect of discretization is captured by the following lemma.

\renewcommand{\bp}{\textbf{p}}
\begin{lemma}[Lemma 4.4 in \cite{CSS19}]\label{lem:probdisc}
	For any profile $\phi \in \Phi^{n}$ and distribution $\bp \in \simplex$, there exists $\bq\in \dsimplex$ that satisfies $\probpml(\bp,\phi) \geq \probpml(\bq,\phi) \geq \expo{-\alpha n-6}\probpml(\bp,\phi)$ and therefore, 
	$$\max_{\bp\in \simplex}\probpml(\bp,\phi) \geq \max_{\bq\in \dsimplex} \probpml(\bq,\phi) \geq \expo{-\alpha n-6}\max_{\bp\in \simplex}\probpml(\bp,\phi)~.$$
\end{lemma}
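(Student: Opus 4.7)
The plan is to construct a specific $\bq \in \dsimplex$ by rounding the entries of $\bp$ down onto the multiplicative grid $\bR$ with ratio $1+\alpha$, with careful treatment of probabilities below the smallest grid value $r_1 = (1+\alpha)/(2n^2)$. Partition $\bX$ into $T := \{x : \bp_x \geq r_1\}$ and its complement $T^c$, and set $\delta := \sum_{x \in T^c} \bp_x$. For $x \in T$, let $\bq_x$ be the largest element of $\bR$ with $\bq_x \leq \bp_x$, so by the geometric spacing of $\bR$ one has $\bq_x \leq \bp_x \leq (1+\alpha)\bq_x$. For $T^c$, redistribute its mass by placing $k := \lfloor \delta/r_1 \rfloor$ ``replica'' entries of value $r_1$ on otherwise-unused coordinates of $\bX$ and setting $\bq_x = 0$ on the remaining $T^c$ indices; this yields $\|\bq\|_1 \leq (1-\delta) + k r_1 \leq 1$, so $\bq \in \dsimplex$.

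For the upper bound $\probpml(\bq, \phi) \leq \probpml(\bp, \phi)$, I would exploit the invariance of $\probpml$ under relabeling of $\bX$: relabel $\bp$ to $\widetilde{\bp}$ so that on the support of $\bq$ we have $\widetilde{\bp}_x \geq \bq_x$ pointwise (matching rounded-down entries with their $\bp$-sources, and matching the $k$ replicas with the top-$k$ entries of $T^c$), after which a termwise comparison of $\sum_{y^n : \Phi(y^n)=\phi} \prod_x \bq_x^{\bff(y^n,x)}$ against the analogous sum for $\widetilde{\bp}$ gives the inequality. For the lower bound $\probpml(\bq,\phi) \geq \exp(-\alpha n - 6) \probpml(\bp, \phi)$, I would split the sum defining $\probpml(\bp, \phi)$ according to whether the contributing sequence $y^n$ is supported in $T$ or uses at least one element of $T^c$. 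On $T$-supported sequences, termwise replacement of each $\bp_x$ by $\bq_x$ loses a factor of at most $(1+\alpha)^n \leq e^{\alpha n}$. On $T^c$-touching sequences, I would reroute each $T^c$-usage onto one of the $k$ replicas (which have value $r_1 \geq \bp_x$ for $x \in T^c$), matching injectively when the replica budget permits; a case split on $\delta$ shows that either $k$ suffices (when $\delta$ is moderate, so $k \geq n$ and every $T^c$-use can be rerouted) or the $T^c$-touching contribution is already a constant fraction of $\probpml(\bp, \phi)$ (when $\delta$ is small, by a Chernoff-style bound on the number of $T^c$-draws among $n$ i.i.d.\ trials of success rate $\delta$), yielding an overall loss of $e^{\alpha n + 6}$.

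The main obstacle is handling the combinatorial multiplicity of the rerouting in the lower bound: several distinct $\bp$-sequences may collapse onto a single $\bq$-sequence through the injective matching of $T^c$-elements to replicas, and this collapse factor must be absorbed into the absolute constant $e^6$. In the ``large-$\delta$'' regime the collapse is compensated by the $\binom{k}{j}$-many ways to choose $j$ of the $k$ replicas on the $\bq$-side; in the ``small-$\delta$'' regime one instead leans on the smallness of the $T^c$-contribution so that only a constant factor is lost by ignoring those terms entirely. Balancing the two regimes into a single uniform constant is the crux of the argument and accounts for the additive $-6$ in the exponent.
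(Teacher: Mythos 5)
The paper never proves this lemma---it is imported as Lemma 4.4 of \cite{CSS19}---so your proposal has to stand on its own, and as written it has two concrete gaps. First, your upper bound fails. The replicas you create have value $r_1=\frac{1+\alpha}{2n^2}$, while every element of $T^c$ has $\bp_x<r_1$ by definition, so the relabeling you describe never yields pointwise domination on the replica coordinates, and no relabeling can, because the inequality itself is false for your construction: take $\bp$ uniform on $4n^2$ elements (so $T=\emptyset$, $\delta=1$, $k\approx 2n^2$) and let $\phi$ be the profile with a single element of frequency $n$; then $\probpml(\bp,\phi)=4n^2(4n^2)^{-n}$ while your $\bq$ gives $\probpml(\bq,\phi)=k\,r_1^n\gtrsim(2n^2)^{-n}$, larger by a factor of order $2^n$. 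Consolidating sub-grid mass into chunks of size $r_1$ inflates collision-heavy profiles. Second, the dichotomy in your lower bound is unsound. A Chernoff bound on the number of $T^c$-draws controls the \emph{absolute} mass of $T^c$-touching sequences (at most $n\delta$), not their share of the probability of the \emph{fixed} profile $\phi$, which may itself be far smaller than $n\delta$. Concretely, let $\bp$ put mass $1-\delta$ on one element and spread $\delta=n^{-3/2}$ over many tiny elements, and let $\phi$ be the all-distinct profile: every contributing sequence touches $T^c$, so the ``ignore the $T^c$ part'' branch loses everything, while $k=\lfloor\delta/r_1\rfloor\approx 2n^2\delta\approx 2\sqrt{n}<n-1$ replicas are too few for your injective rerouting (note also that your ``moderate $\delta$'' branch needs $k\geq n$, i.e.\ $\delta\gtrsim \frac{1}{2n}$).

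These examples also show that no repair of your construction alone can salvage the per-$\bp$ two-sided claim in its literal form: in the first example any $\bq\in\dsimplex$ with a nonzero entry already has $\probpml(\bq,\phi)\geq r_1^n>\probpml(\bp,\phi)$, and the all-zero $\bq$ violates the lower bound, so the displayed per-$\bp$ sandwich cannot hold for every $(\bp,\phi)$ with the grid bounded below by $\frac{1+\alpha}{2n^2}$; the restatement here is loose, and what the paper actually uses downstream is the max-version chain. For that chain, the first inequality follows immediately by normalizing any $\bq\in\dsimplex$ to $\bq/\norm{\bq}_1\in\simplex$ and using entrywise monotonicity of $\probpml(\cdot,\phi)$, and the real content is the second inequality, which one proves for a (near-)optimal $\bp$ rather than for arbitrary $\bp$: first replace $\bp$ by a distribution all of whose nonzero probabilities are at least $\frac{1}{2n^2}$ at the cost of an absolute constant factor (this per-profile comparison, not a $\phi$-independent case split on $\delta$, is where the additive $-6$ comes from), and then round down onto the multiplicative grid, losing at most $(1+\alpha)^{-n}\geq\expo{-\alpha n}$. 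Your treatment of the $T$-part matches that last step, but the handling of sub-grid mass needs the constant-factor argument to be carried out profile by profile, which your proposal does not do.
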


For any probability discretization set $\bR$, profile $\phi$ and $\bq \in \dsimplex$, we define the following sets that help lower and upper bound the PML objective by a convex program.
\renewcommand{\boo}{\ell}
\renewcommand{\btt}{k}
\renewcommand{\bttpo}{\btt}
\renewcommand{\ztbtt}{[\btt]}
\renewcommand{\pvec}{\textbf{r}}
\renewcommand{\bZ}{\textbf{Z}^{\phi}_{\bR}}
\begin{align}
&\bZ \defeq \Big\{\bS \in \R_{\geq 0}^{\ell \times [0,k]} ~\Big|~ \bS \rones \in \Z^{\ell},  [\bS^\top \cones]_{j}=\phi_{j} \text { for all }j \in [1,k] \text{ and } \vr^{\top} \bS \rones\leq 1\Big\}~,\label{eq:zrphi} \\ 
&\bZfrac \defeq \Big\{\bS \in \R_{\geq 0}^{\ell \times [0,k]}~\Big|~[\bS^\top \cones]_{j}=\phi_{j} \text { for all }j \in [1,k] \text{ and } \vr^{\top} \bS \rones\leq 1 \Big\}~,\label{def:bzfrac}
\end{align}
where in the above definitions the $0$'th column corresponds to domain elements with frequency $0$ (unseen) and we use $\mz \defeq 0$. We next define the objective of the convex program.

Let $\mc_{ij}\defeq \mj \log \ri$ and for any $\bS \in \R_{\geq 0}^{\ell \times [0,k]}$ define,
\begin{equation}
\bg(\bS)\defeq \exp\Big(\sum_{i\in[1,\ell],j\in[0,k]}\left[\mc_{ij}\mx_{ij}-\mx_{ij}\log\mx_{ij}\right]+\sum_{i\in[1,\ell]}[\mx\vones]_{i}\log[\mx\vones]_{i}\Big)~.
\end{equation}
The function $\bg(\bS)$ approximates the $\probpml(\bq,\phi)$ term and the following theorem summarizes this result.
\begin{theorem}[Theorem 6.7 and Lemma 6.9 in \cite{ACSS20}]\label{pmlprob:approx}
	Let $\bR$ be a probability discretization set. Given a profile $\phi \in \Phi^n$ with $k$ distinct frequencies 
	the following inequalities hold,
	\begin{equation}
	\expo{-O(k \log n)}\cdot	\cphi \cdot	\max_{\bS \in \bZ}\bg(\bS) \leq \max_{\bq \in \dsimplex}\probpml(\bq,\phi) \leq \expo{\bigO{k \log n}} \cdot \cphi \cdot \max_{\bS \in \bZ}\bg(\bS)~,
	\end{equation}
	\begin{equation}
	\max_{\bq \in \dsimplex}\probpml(\bq,\phi) \leq \expo{\bigO{k \log n}} \cdot \cphi \cdot \max_{\bS \in \bZfrac}\bg(\bS)~,
	\end{equation}
	where $\cphi\defeq \frac{n!}{\prod_{j\in [1,k]}(\eled_{j}!)^{\phi_{j}}}$ is a term that only depends on the profile.
\end{theorem}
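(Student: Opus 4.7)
The plan is to approximate $\probpml(\bq,\phi)$ combinatorially by decomposing it as a sum over integer matrices $\bS$ that record the joint counts of probability levels and observed frequencies, then apply Stirling's formula termwise so each summand matches $\cphi \cdot \bg(\bS)$. First I fix $\bq \in \dsimplex$ and let $\ell^{\bq}_i$ denote the number of domain elements assigned probability $\ri$ under $\bq$. Grouping sequences with profile $\phi$ by which domain elements receive which frequencies, and letting $\bS_{i,j}$ count the domain elements at probability $\ri$ observed exactly $\eled_j$ times (with $j=0$ meaning unseen), a standard multinomial argument yields
\begin{equation*}
\probpml(\bq,\phi) \;=\; \frac{n!}{\prod_{j\in[1,k]}(\eled_j!)^{\phi_j}}\,\sum_{\bS}\,\prod_{i\in[1,\ell]}\binom{\ell^{\bq}_i}{\bS_{i,0},\ldots,\bS_{i,k}}\,\prod_{i,j}\ri^{\eled_j \bS_{i,j}},
\end{equation*}
where $\bS$ ranges over nonnegative integer $\ell\times (k{+}1)$ matrices with row sums $\ell^{\bq}_i$ and column sums $\phi_j$ for $j\geq 1$. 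The prefactor is exactly $\cphi$ and the probability product equals $\exp\bigl(\sum_{i,j}\mc_{ij}\bS_{i,j}\bigr)$.

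Next I apply Stirling, $m! = \Theta(\sqrt{m})\exp(m\log m - m)$, to every factorial in the multinomial coefficients. Up to polynomial-in-$n$ factors per factorial, each summand becomes
\begin{equation*}
\cphi \cdot \exp\Big(\sum_i [\bS\vones]_i \log[\bS\vones]_i - \sum_{i,j}\bS_{i,j}\log\bS_{i,j} + \sum_{i,j}\mc_{ij}\bS_{i,j}\Big) \;=\; \cphi\cdot \bg(\bS).
\end{equation*}
This is the main obstacle. Entry-wise Stirling costs $O(\log n)$ per nonzero entry, and $\bS$ has up to $k\ell = O(k\log n)$ positions, which naively gives $\exp(O(k\log^2 n))$ rather than the claimed $\exp(O(k\log n))$. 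To recover the right exponent I would exploit that $\log \bg$ is concave on the transportation polytope $\{\bS \geq 0 : \bS\vones = \ell^{\bq},\, \bS^{\top}\vones = (\phi_0,\ldots,\phi_k)\}$, hence its maximum is attained at a basic feasible solution supported on at most $k + \ell - 1$ entries; the accumulated Stirling distortion is then $\exp(O((k+\ell)\log n)) = \exp(O(k\log n))$ since $\ell = O(\log n)$.

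The three inequalities then assemble as follows. For the upper bound with $\bZ$, I bound the sum by (number of terms) times (maximum term); the number of integer matrices with prescribed row and column sums is $\poly(n)$, which is absorbed into $\exp(O(k\log n))$. Maximizing over $\bq \in \dsimplex$ coincides with maximizing over $\bS \in \bZ$ because $\vr^{\top}\bS\vones \leq 1$ encodes $\|\bq\|_1 \leq 1$ and $\bS\vones \in \Z^{\ell}$ encodes that row sums are integer element counts. For the lower bound, given any $\bS^* \in \bZ$ I construct $\bq_{\bS^*}$ by assigning $[\bS^*\vones]_i$ elements probability $\ri$; the single $\bS^*$-indexed summand is already at least $\cphi\cdot \bg(\bS^*)\cdot \exp(-O(k\log n))$, so $\probpml(\bq_{\bS^*},\phi)$ inherits this bound. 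The fractional upper bound is immediate from $\bZ \subseteq \bZfrac$, since dropping the integrality constraint $\bS\vones \in \Z^\ell$ only enlarges the feasible set. A secondary subtlety worth flagging is that $\phi_0$, the count of unseen elements, is not specified by the profile; the definitions of $\bZ$ and $\bZfrac$ handle this by leaving the $0$-th column sum unconstrained, which implicitly maximizes over the effective support size of $\bq$.
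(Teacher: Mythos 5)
Your decomposition of $\probpml(\bq,\phi)$ as $\cphi\sum_{\bS}\prod_{i}\binom{\ell^{\bq}_i}{\bS_{i,0},\dots,\bS_{i,k}}\prod_{i,j}\ri^{\eled_j\bS_{i,j}}$ and the lower-bound direction (keep the single summand indexed by a chosen $\bS\in\bZ$) are fine, but your control of the approximation loss does not give the stated bound. The theorem asserts a factor $\expo{O(k\log n)}$ for an \emph{arbitrary} discretization set $\bR$, independent of $\ell=|\bR|$, and this independence is exactly what the paper needs: in the proof of \Cref{thm:resultmainone} it is invoked with $\alpha=k\log n/n$, i.e.\ $\ell=O(n/k)$, which is far larger than $k\log n$ whenever $k\ll\sqrt n$. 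Your accounting loses (i) a Stirling correction for every row with $\ell^{\bq}_i>0$ in the upper-bound direction (there can be $\ell$ such rows, and the correction does not cancel there), and (ii) a ``number of terms'' factor, which is of order $\prod_{j}\binom{\phi_j+\ell-1}{\ell-1}$ and is \emph{not} $\poly(n)$ — both scale with $\ell$. Asserting ``$\ell=O(\log n)$'' contradicts both the generality of the statement and its use in the paper, and you cannot repair it by coarsening $\bR$, since the discretization loss $\expo{-\alpha n}$ of \Cref{lem:probdisc} then blows up. In effect your argument reproduces the CSS19-style combinatorial analysis, whose loss inherently depends on $\ell$ (this is why that paper only achieved $\exp(-O(n^{2/3}\log n))$), whereas the result quoted here is imported from ACSS20, which obtains the $k$-only dependence by a genuinely different route: writing the profile probability via a permanent and analyzing its Bethe/scaled-Sinkhorn approximation, exploiting that the associated matrix has only $k+1$ distinct column types.

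The sparsification you invoke to limit the Stirling cost is also incorrect as stated: $\log\bg$ is concave, and a concave function is in general \emph{not} maximized at a vertex (basic feasible solution) of the transportation polytope — vertex optimality is a property of convex maximization. The paper's own sparsity result (\Cref{lem:sparse}) is a different argument: it uses the $1$-homogeneity of each row term (\Cref{lem:hom}) to pass to an auxiliary LP whose BFS has at most $k+1$ nonzero \emph{rows}, and it concerns the relaxed maximizer over $\bZfrac$, not the integer summands of your expansion. For your upper bound you must control every summand arising from every $\bq\in\dsimplex$, so sparsity of one (continuous) maximizer bounds neither the accumulated Stirling error nor the term count. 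The gap is therefore at the heart of the theorem: obtaining $\expo{O(k\log n)}$ with no dependence on $|\bR|$ or the support size, which your max-term-plus-Stirling scheme does not deliver.
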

See \Cref{app:sparse} for citations related to convexity of the function $\bg(\bS)$ and running time to solve the convex program. For any $\bS \in \bZ$, define a pseudo-distribution associated with it as follows.
\begin{defn}\label{defn:distS}
	For any $\bS \in \bZ$, the discrete pseudo-distribution $\bqS$ associated with $\bS$ and $\bR$ is defined as follows: For any arbitrary $\sum_{j\in[0,k]}\bS_{i,j}$ number of domain elements assign probability $\ri$. Further $\bp_{\bS}\defeq \bqS/\|\bqS\|_1$ is the distribution associated with $\bS$ and $\bR$.
\end{defn}
Note that $\bqS$ is a valid pseudo-distribution because of the third condition in \Cref{eq:zrphi} and these pseudo distributions $\bp_{\bS}$ and $\bqS$ satisfy the following lemma. 
\begin{lemma}[Theorem 6.7 in \cite{ACSS20}]\label{lem:associateddist}
	Let $\bR$ and $\phi \in \Phi^n$ be any probability discretization set and a profile respectively. For any $\bS \in \bZ$, the discrete pseudo distribution $\bqS$ and distribution $\bp_{\bS}$ associated with $\bS$ and $\bR$ satisfies: $\expo{-O(k \log n)}\cphi \cdot \bg(\bS) \leq \probpml(\bq,\phi) \leq \probpml(\bp,\phi)~.$
	\end{lemma}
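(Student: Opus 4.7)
The plan is to prove the two inequalities separately, relying on a combinatorial expansion of the profile probability together with Stirling's approximation.

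For the right inequality $\probpml(\bqS,\phi)\leq \probpml(\bp_{\bS},\phi)$, note that the constraint $\vr^\top \bS \rones \leq 1$ in the definition of $\bZ$ forces $\|\bqS\|_1\leq 1$, so $\bp_{\bS}=\bqS/\|\bqS\|_1$ has $\bp_{\bS,x}\geq \bqS_x$ for every $x$. Since every length-$n$ sequence $y^n$ contributing to $\probpml$ satisfies $\sum_x \bff(y^n,x)=n$, we get
$$\prod_x \bp_{\bS,x}^{\bff(y^n,x)}=\|\bqS\|_1^{-n}\prod_x \bqS_x^{\bff(y^n,x)}\geq \prod_x \bqS_x^{\bff(y^n,x)},$$
and summing over sequences $y^n$ with $\Phi(y^n)=\phi$ yields the claim.

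For the left inequality, I would first write out $\probpml(\bqS,\phi)$ combinatorially. A sequence $y^n$ with profile $\phi$ is specified by (i) a partition of the $[\bS\vones]_i$ domain elements at probability $\ri$ into classes $0,1,\dots,k$ (encoded by a non-negative \emph{integer} matrix $\bS'$ with the same row sums as $\bS$ and column sums $\phi_j$ for $j\in[1,k]$), and (ii) an ordering of symbols within the sequence. Counting these yields
$$\probpml(\bqS,\phi)=\cphi \sum_{\bS'} \prod_{i\in[1,\ell]}\binom{[\bS\vones]_i}{\bS'_{i,0},\ldots,\bS'_{i,k}}\prod_{i,j}\ri^{\eled_j\bS'_{i,j}}.$$
The plan is to lower-bound the sum by a single term $\bS'$ obtained by rounding $\bS$ to a nearby integer matrix (preserving row sums and column sums $\phi_j$). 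Applying Stirling's approximation row by row to the multinomial, together with the identity $\ri^{\eled_j\bS'_{i,j}}=\exp(\mc_{ij}\bS'_{i,j})$, converts this single term into exactly the form $\exp(\sum_{i,j}[\mc_{ij}\bS'_{i,j}-\bS'_{i,j}\log\bS'_{i,j}]+\sum_i[\bS'\vones]_i\log[\bS'\vones]_i)=\bg(\bS')$, up to a multiplicative Stirling error which can be shown to be at most $\exp(O(k\log n))$.

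The main obstacle is bookkeeping the two sources of loss simultaneously: (a) the Stirling error, which contributes $O(\log n)$ per nonzero entry of $\bS'$ and must be controlled because the number of rows $\ell=O(\log n)$ could naively blow up the bound; and (b) the integer-rounding error, which changes $\bS$ by $O(1)$ per entry and must not change $\bg$ by more than $\exp(O(k\log n))$. For (a), only the $O(k)$ columns with positive $\eled_j$ matter once rows that touch only column $0$ are collapsed, since such rows have trivial multinomial $\binom{N}{N}=1$. For (b), a standard Birkhoff-style integral decomposition or a simple column-by-column rounding (similar in spirit to the matrix rounding developed later in Section~\ref{sec:mainfirst}) produces an integer $\bS'$ with $\|\bS'-\bS\|_1=O(k)$, and a direct bound on the derivatives of $\log \bg$ then gives $\bg(\bS')\geq \bg(\bS)\cdot \exp(-O(k\log n))$. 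Chaining these estimates completes the proof.
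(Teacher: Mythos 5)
You are right that the lemma itself is not proved in this paper: it is quoted as Theorem 6.7 of \cite{ACSS20}, whose argument goes through the Bethe/Sinkhorn-permanent machinery applied to the \emph{entire} combinatorial sum. Your proof of the right inequality (scaling $\bqS$ up to $\bp_{\bS}$ only increases the probability of every sequence since $\sum_x \bff(y^n,x)=n$ and $\|\bqS\|_1\le 1$) is correct, and your combinatorial expansion $\probpml(\bqS,\phi)=\cphi\sum_{\bS'}\prod_i\binom{[\bS\vones]_i}{\bS'_{i,0},\dots,\bS'_{i,k}}\prod_{i,j}\ri^{\eled_j\bS'_{i,j}}$ is also correct. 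The gap is in the plan for the left inequality: lower-bounding this sum by a \emph{single} rounded integer term and paying Stirling per row cannot yield a loss of only $\expo{-O(k\log n)}$, because the loss per occupied row is not absorbable and the number of occupied rows is not $O(k)$. The lemma is stated for an arbitrary discretization set $\bR$, so $\ell=|\bR|$ is not $O(\log n)$ (in the paper's own application $\ell=O(n/k)$), and "collapsing rows that touch only column $0$" does not help with rows that carry mass in column $0$ \emph{and} in some column $j\ge 1$: each such row costs at least a constant factor, and there can be $\Theta(n)$ of them.

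Concretely, take $k=1$ with $\eled_1=2$, $\phi_1=m$, $n=2m$, and let $\bS$ have $m$ nonzero rows with $\bS_{i,0}=\bS_{i,1}=1$ and $\ri\approx\frac{1}{2m}$ (so $\bS\in\bZ$). Then $\bg(\bS)=\prod_i\frac{2^2}{1\cdot 1}\ri^{2}=4^m\ri^{2m}$, while every integer matrix $\bS'$ with these row and column sums has term at most $\prod_i\binom{2}{1}\ri^{2}=2^m\ri^{2m}$; that is, \emph{every} single term of the expansion, including the one obtained by any rounding of $\bS$, is at most $\expo{-\Omega(n)}\bg(\bS)$. The lemma is nevertheless true here because the exponentially many terms sum to $\binom{2m}{m}\ri^{2m}$, which is within $\mathrm{poly}(n)$ of $\bg(\bS)$ — so the statement genuinely requires aggregating the whole sum (as in the permanent-based proof of \cite{ACSS20}), not a largest-term argument. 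A secondary issue, which would also need care even if the single-term route worked, is step (b): the linear coefficients $\mc_{ij}=\eled_j\log\ri$ can be of size $\Theta(n\log n)$, so an $O(k)$ total $\ell_1$ rounding does not automatically bound the change in $\log\bg$ by $O(k\log n)$ unless mass is moved only between nearby probability levels, which is exactly the kind of bookkeeping the paper's rounding lemmas (\Cref{lem:create}, \Cref{thm:secmain}) are designed to handle.
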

\section{Algorithm and Proof Sketch of \Cref{thm:resultmainone}}\label{sec:mainfirst}
\newcommand{\sparse}{\mathrm{Sparse}}
\newcommand{\mround}{\mathrm{MatrixRound}}
\newcommand{\create}{\mathrm{CreateNewProbabilityValues}}
Here we provide the algorithm to compute an $\expo{-O(k \log n)}$-approximate PML distribution, where $k$ is the number of distinct frequencies. We use the convex relaxation from \Cref{sec:convex}; the maximizer of this convex program is a matrix $\bS \in \bZfrac$ and its $i$'th row sum denotes the number of domain elements with probability $\ri$. As the row sums are not necessarily integral, 
we wish to round $\bS$ to a new matrix $\bS'$ that has integral row sums and $\bS' \in \textbf{Z}^{\phi}_{\bR'}$ for some probability discretization set $\bR'$. Our algorithm does this rounding and incurs only a loss of $\expo{-O(k \log n)}$ in the objective; finally the distribution associated with $\bS'$ and $\bR'$ is the desired $\expo{-O(k \log n)}$-approximate PML. 
We first provide a general algorithm that holds for any probability discretization set $\bR$ and the guarantees of this algorithm are stated below.
\begin{theorem}\label{thm:generalone}
	Given a profile $\phi \in \Phi^n$ with $k$ distinct observed frequencies and $\bR$, there exists an algorithm that runs in polynomial of $n$ and $|\bR|$ time and returns a distribution $\bp'$ that satisfies,
	$$\Prob{\bp', \phi} \geq \expo{-O(k \log n)} \max_{\bq \in \dsimplex}\Prob{\bq, \phi} ~.$$
\end{theorem}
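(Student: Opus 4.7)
The plan is to exploit the convex relaxation $\max_{\bS \in \bZfrac} \bg(\bS)$ from \Cref{sec:convex}: compute a near-optimal $\bS^{\star} \in \bZfrac$, round it to an integer-row-sum matrix $\bS' \in \textbf{Z}^{\phi}_{\bR'}$ on a (possibly refined) discretization $\bR'$, and output the distribution $\bp' \defeq \bp_{\bS'}$ associated with $\bS'$ per \Cref{defn:distS}. The reduction is clean: chaining the upper bound $\max_{\bq \in \dsimplex}\probpml(\bq,\phi) \leq \exp(O(k\log n))\,\cphi\,\max_{\bS \in \bZfrac}\bg(\bS)$ from \Cref{pmlprob:approx} with the lower bound $\probpml(\bp_{\bS'},\phi) \geq \exp(-O(k\log n))\,\cphi\,\bg(\bS')$ from \Cref{lem:associateddist} reduces the theorem to producing, in polynomial time, an integer-row-sum $\bS'$ with $\bg(\bS') \geq \exp(-O(k\log n))\cdot\max_{\bS \in \bZfrac}\bg(\bS)$.

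First I would solve the convex program $\max_{\bS \in \bZfrac}\bg(\bS)$ to near-optimality using standard interior-point methods; the feasible set is a polytope with $|\bR|(k+1)$ variables, $k$ equality constraints, and one inequality, so this runs in time polynomial in $n$ and $|\bR|$. Call the near-optimal solution $\bS^{\star}$. Next comes a \emph{sparsification} step. The key observation, formalized by \Cref{lem:sparse} and flagged in the techniques overview, is that by fixing the row sums of $\bS^{\star}$ and moving to a basic feasible solution of the residual linear system (column-sum equalities and the budget inequality $\vr^{\top}\bS\rones \leq 1$), we obtain a solution with only $O(k)$ nonzero entries per column (and a comparable total sparsity), all while keeping $\bg(\bS^{\star})$ unchanged. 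This sparse representation is crucial for the rounding step below, as it bounds the amount of fractional mass that must be redistributed.

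The heart of the argument is the matrix rounding step. Given the sparse $\widetilde{\bS} \in \bZfrac$, I must produce $\bS' \in \textbf{Z}^{\phi}_{\bR'}$ whose row sums are integral while (i) preserving the column sums $[\bS^{\top}\cones]_j = \phi_j$ exactly, (ii) respecting the probability budget (perhaps on a slightly refined set $\bR'$), and (iii) changing $\bg$ by at most an $\exp(-O(k\log n))$ factor. This is precisely the regime handled by \Cref{thm:matrixround}: the problem reduces to a combinatorial rounding in which one ``rounds down'' a nonnegative matrix to one with integral row sums, changing the entries by only $O(k)$ total in $\ell_1$. The reduction goes through near-minimum-cut enumeration on an auxiliary graph, as sketched in the techniques overview. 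This is the step I expect to be the main obstacle, since one must simultaneously control all $k$ column sums, the budget, and membership in a discretization, while paying only $O(k)$ total $\ell_1$ perturbation; the interplay of these constraints is what forces the combinatorial (rather than purely analytic) rounding.

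Finally, with $\bS' \in \textbf{Z}^{\phi}_{\bR'}$ in hand, I output $\bp' \defeq \bp_{\bS'}$ via \Cref{defn:distS} and chain the inequalities above: the convex-relaxation upper bound from \Cref{pmlprob:approx}, lossless sparsification, matrix-rounding loss $\exp(-O(k\log n))$ from \Cref{thm:matrixround}, and the passage to $\bp_{\bS'}$ at cost $\exp(-O(k\log n))$ from \Cref{lem:associateddist} together give $\probpml(\bp',\phi) \geq \exp(-O(k\log n))\,\max_{\bq \in \dsimplex}\probpml(\bq,\phi)$, as desired. Each substep runs in time polynomial in $n$ and $|\bR|$, so the overall algorithm does as well.
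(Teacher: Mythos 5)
Your proposal follows the paper's own proof essentially step for step: solve the convex relaxation, sparsify with \Cref{lem:sparse} to get at most $k+1$ nonzero rows, apply the combinatorial rounding of \Cref{thm:matrixround} with total $\ell_1$ change $O(k)$, pass to a feasible matrix in $\textbf{Z}^{\phi}_{\bR'}$, and chain \Cref{pmlprob:approx} (via \Cref{lem:maximizer}) with \Cref{lem:associateddist} to conclude. The one packaging difference is that the guarantees you fold into the rounding step---restoring the column sums to exactly $\phi_j$, producing the refined discretization $\bR'$, and bounding the loss in $\bg$ by $\expo{-O(k\log n)}$ in terms of the $O(k)$ rounded mass---are in the paper handled by a separate repair step, \Cref{lem:create} (the $\create$ routine of \cite{ACSS20}), applied after \Cref{thm:matrixround}; since you correctly identified that this repair is needed and it is available as a stated lemma, this is a misattribution rather than a gap.
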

For an appropriately chosen $\bR$, the above theorem immediately proves \Cref{thm:resultmainone} and we defer its proof to \Cref{app:thmgeneralone}. In the remainder of this section we focus our attention towards the proof of \Cref{thm:generalone} and we next provide the algorithm that satisfies the guarantees of this theorem.
\begin{algorithm}[H]
	\caption{ApproximatePML$(\phi,\bR)$}\label{euclid}
	\begin{algorithmic}[1]
		\State Solve $\bSp=\argmax_{\bS\in \bZfrac} \log \bg(\bS)$.  \Comment{Step 1}
		\State $\bS'' = \sparse(\bSp)$. \Comment{Step 2}
		\State $(\bS'',\mb'')=\mround(\bS'')$. \Comment{Step 3}
		\State $(\bSext,\bRext)=\create(\bS'',\mb'',\bR)$.  \Comment{Step 4}
		\State Return distribution $\bp'$ with respect to $\bSext$ and $\bRext$ (See \Cref{defn:distS}). \Comment{Step 5}
	\end{algorithmic}
	\label{alg:final_temp}
\end{algorithm}
We divide the analysis of the above algorithm into 5 main steps. See \Cref{lem:associateddist} for the guarantees of Step 5 and here we state results for the remaining steps; we later combine it all to prove \Cref{thm:generalone}.
\begin{lemma}[\cite{CSS19,ACSS20}]\label{lem:maximizer}
	Step 1 of the algorithm can be implemented in $\otilde(|\bR|~k^2)$ time and the maximizer $\bSp$ satisfies: 
	$\cphi \cdot \bg(\bSp)\geq \expo{\bigO{-k \log n}} \max_{\bq \in \dsimplex}\probpml(\bq,\phi)$.
\end{lemma}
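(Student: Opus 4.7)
The plan is to verify the two claims in the lemma separately. The approximation bound is essentially immediate from the convex-relaxation guarantees of \Cref{pmlprob:approx}, while the running-time bound follows from applying interior-point methods to the convex program, using concavity of the objective and the per-iteration complexity that were established in prior work \cite{CSS19,ACSS20}.

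For the approximation bound, the approach is a one-line rearrangement. Since $\bSp$ is defined as the maximizer of $\bg$ over $\bZfrac$, we have $\bg(\bSp) = \max_{\bS \in \bZfrac} \bg(\bS)$. The second displayed inequality of \Cref{pmlprob:approx} then gives
\[
\max_{\bq \in \dsimplex}\probpml(\bq,\phi) \leq \expo{O(k \log n)} \cdot \cphi \cdot \bg(\bSp),
\]
and moving the factor $\expo{O(k\log n)}$ to the left-hand side yields the stated inequality. This part contributes no new technical difficulty; it is purely a restatement of a result already cited.

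For the running time, the key structural facts are that $\log \bg(\bS)$ is a concave function of $\bS$ (established in \cite{CSS19,ACSS20}, ultimately because $-x\log x$ is concave and the $[\bS\vones]_i \log[\bS\vones]_i$ terms combine with the entrywise entropy terms to form a concave Sinkhorn/Bethe-style approximation to a permanent-like quantity) and that $\bZfrac$ is a polytope in $\R^{\ell \times (k+1)}$ described by only $k$ equality constraints $[\bS^\top\cones]_j = \phi_j$, a single inequality $\vr^\top \bS \rones \leq 1$, and nonnegativity on the $O(|\bR|\cdot k)$ entries. Consequently, the program has $O(|\bR|\cdot k)$ variables and $O(|\bR|\cdot k)$ constraints in total. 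Applying an interior-point method, together with the per-iteration linear-algebra analysis that exploits the $(k+1)$-column structure of $\bS$ (each row interacting only through the single linear constraint $\vr^\top \bS \rones \leq 1$), yields the claimed $\otilde(|\bR|\cdot k^2)$ arithmetic complexity to reach additive accuracy $\expo{-O(k\log n)}$ in the objective — which is sufficient because we only need multiplicative accuracy $\expo{O(k\log n)}$.

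The main obstacle is the precise $\otilde(|\bR|\cdot k^2)$ bound, as opposed to a cruder polynomial bound. To handle this, the plan is to invoke directly the interior-point/solver analysis already developed for the analogous relaxation in \cite{ACSS20}: the dimensions and constraint structure of our program over $\bZfrac$ match those treated there, so the same iteration count ($\otilde(1)$ Newton-type steps to reach the target accuracy, thanks to self-concordance of the entropic barrier terms) and the same per-iteration cost (dominated by a linear system of size $O(k)$ repeated $O(|\bR|)$ times) apply. Thus no new optimization machinery is needed; the entire argument reduces to citing \cite{CSS19,ACSS20} for the approximation bound, concavity, and running-time analysis.
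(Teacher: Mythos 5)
Your proposal matches the paper's argument: the approximation guarantee is exactly the rearrangement of the second inequality in \Cref{pmlprob:approx} (Lemma 6.9 in \cite{ACSS20}), and the running time is obtained by citing the prior solver analysis, which the paper attributes to Theorem 4.17 in \cite{CSS19} (restated as \Cref{thm:convrt} in the appendix) rather than re-deriving the interior-point complexity. Your extra sketch of the interior-point details is harmless but unnecessary, since both you and the paper ultimately take the $\otilde(|\bR|\,k^2)$ bound as a black box from prior work.
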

The running time follows from Theorem 4.17 in \cite{CSS19} and the guarantee of the maximizer follows from Lemma 6.9 in \cite{ACSS20}.
The lemma statements for the remaining steps are written in a general setting; we later invoke each of these lemmas in the context of the algorithm to prove \Cref{thm:generalone}.
\begin{lemma}[Sparse solution]\label{lem:sparse}
	For any $\ma \in \bZfrac$, the algorithm $\sparse(\ma)$ runs in $\otilde(|\bR|~k^{\omega})$ time and returns a solution $\ma' \in \bZfrac$ such that $\bg(\ma') \geq \bg(\ma)$ and $\big|\{i \in [1,\ell]~|~[\ma'\1]_{i}  > 0\}\big| \leq k+1$.
\end{lemma}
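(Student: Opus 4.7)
The plan is to reduce the rounding task to sparse recovery in a small linear program, leveraging a row-wise decomposition of $\log \bg$. The key algebraic observation is: for $\ma \in \bZfrac$ with row sums $y_i \defeq [\ma\rones]_i$ and, for each row with $y_i > 0$, row-normalizations $z_{ij} \defeq \ma_{ij}/y_i$ (so $\sum_j z_{ij} = 1$), substituting $\ma_{ij} = y_i z_{ij}$ into the definition of $\bg$ and using $\sum_j z_{ij} = 1$ to cancel the $\sum_j y_i z_{ij} \log y_i$ terms against the $y_i \log y_i$ term yields
\begin{equation*}
\log \bg(\ma) \;=\; \sum_{i\,:\, y_i > 0} y_i \cdot c_i, \qquad c_i \defeq (\log \ri)\sum_{j \in [0,k]} \mj\, z_{ij} \;-\; \sum_{j \in [0,k]} z_{ij}\log z_{ij}.
\end{equation*}
Thus, with the row-normalizations $z_{i\cdot}$ held fixed, $\log \bg$ is \emph{linear} in the row-mass vector $y$. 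This is the ``homogeneity'' of the convex relaxation alluded to in the overview.

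I would then freeze each $z_{i\cdot}$ to its value in the input $\ma$ (restricting to rows with $y_i > 0$) and consider the linear program
\begin{equation*}
\max \; \sum_i c_i y_i \quad \text{s.t.} \quad \sum_i z_{ij}\, y_i = \phi_j \ \forall j \in [1,k], \qquad \sum_i \ri\, y_i \le 1, \qquad y_i \ge 0.
\end{equation*}
The original row-sum vector of $\ma$ is LP-feasible with objective value exactly $\log \bg(\ma)$, and for any feasible $\tilde y$ the matrix $\tilde\ma_{ij} \defeq \tilde y_i z_{ij}$ lies in $\bZfrac$ (the column-sum and probability-budget constraints of $\bZfrac$ are exactly the LP constraints) with $\log \bg(\tilde \ma) = \sum_i \tilde y_i c_i$. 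Since the LP has only $k$ equality constraints and one inequality, any basic feasible solution has support of size at most $k+1$. Starting from the feasible $y$, a BFS $y^*$ can be obtained by the standard support-reduction pivot: repeatedly find a nonzero $d$ in the kernel of the active constraints restricted to the current support, move $y \to y + t\,d$ with whichever sign keeps the linear objective nondecreasing, and stop at the largest $t$ preserving feasibility. Each pivot zeros out at least one additional coordinate, so at most $|\bR|$ pivots suffice; setting $\ma'_{ij} \defeq y^*_i z_{ij}$ gives the output, with $\bg(\ma') \ge \bg(\ma)$ by LP optimality of $y^*$ over the initial feasible $y$ and with at most $k+1$ nonzero rows by the BFS support bound.

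For the $\otilde(|\bR|\,k^\omega)$ running time, each pivot amounts to finding a kernel vector of the $(k+1)\times O(k)$ constraint submatrix on the current support; this costs $O(k^\omega)$ per pivot, and $O(|\bR|)$ pivots give the stated bound. The main subtlety I anticipate is the algebraic verification of the row-wise decomposition of $\log \bg$, in particular that the $\log y_i$ terms cancel correctly (including the boundary case $y_i = 0$ under the convention $0\log 0 = 0$) and that column $j=0$, which is unconstrained in $\bZfrac$, is handled consistently by the LP. Once that identity is in hand, all three conclusions---$\ma' \in \bZfrac$, $\bg(\ma') \ge \bg(\ma)$, and support of $y^*$ at most $k+1$---follow directly from standard properties of basic feasible solutions.
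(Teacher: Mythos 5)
Your proposal is correct and follows essentially the same route as the paper: your row-normalization identity $\log \bg(\ma)=\sum_i y_i c_i$ is exactly the paper's $1$-homogeneity of each $\bff_i$, and your LP over the row masses $y$ is the paper's LP over the per-row scalings $\alpha_i$ (via $\tilde y_i=\alpha_i[\ma\rones]_i$), with the same BFS support bound of $k+1$ and the same $O(k^\omega)$-per-pivot, $O(|\bR|)$-pivot running-time analysis.
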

We defer description of the algorithm $\sparse(\mx)$ and the proof to \Cref{app:sparse}. In the proof, we use homogeneity of the convex program to write an LP whose optimal basic feasible solution satisfies the lemma conditions.
\begin{theorem}\label{thm:matrixround}
	For a matrix $\ma\in \R_{\geq 0}^{s\times t}$, the algorithm $\mround(\ma)$ runs in time polynomial in $s,t$ and returns a matrix $\mb\in \R_{\geq 0}^{s\times t}$ such that $\mb_{ij} \leq \ma_{ij}~\forall ~i \in [s],j \in [t]$, $\mb\1\in \Z^s$, $\mb^\top \1\in \Z^t$ and $\sum_{i,j} (\ma_{ij}-\mb_{ij})\leq O(s'+t')$, where $s',t'$ denote the number of non-zeros rows and columns.
\end{theorem}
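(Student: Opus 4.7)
The plan is to reduce the matrix rounding to a max-flow feasibility problem on an auxiliary bipartite network, then handle the integer target selection using the structure of near-minimum cuts in the network. Let $r_i = \sum_j \ma_{ij}$ and $c_j = \sum_i \ma_{ij}$ be the row and column sums of $\ma$, and let $T \defeq \sum_{ij}\ma_{ij}$. I would construct a flow network with source $\sigma$, sink $\tau$, one vertex per non-zero row $i$ and non-zero column $j$, source edges $\sigma \to i$ of capacity $r^b_i \in \Z_{\geq 0}$ (the integer target for row $i$), sink edges $j \to \tau$ of capacity $c^b_j \in \Z_{\geq 0}$, and middle edges $i \to j$ of capacity $\ma_{ij}$. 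A matrix $\mb \leq \ma$ entrywise with $\mb\1 = r^b$ and $\mb^\top\1 = c^b$ exists iff the max $\sigma$-$\tau$ flow equals $\sum_i r^b_i = \sum_j c^b_j$, and a valid $\mb$ is read off any such flow. The problem thus reduces to choosing integer targets $(r^b, c^b)$ that are flow-feasible with $\sum_i r^b_i \geq T - O(s'+t')$.

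First, I would initialize $r^b_i = \lfloor r_i \rfloor$ and $c^b_j = \lfloor c_j \rfloor$ (losing $\sum_i \{r_i\} + \sum_j \{c_j\} \leq s'+t'$) and equalize the row and column totals by subtracting the (integer) difference, at most $\max(s',t')$, from the larger side. If these targets are flow-feasible, we are done. Otherwise, by max-flow-min-cut there exists a violating cut indexed by $(I, J) \subseteq [s'] \times [t']$ with $\sum_{i\in I} r^b_i - \sum_{j \in J} c^b_j > \sum_{i \in I,\, j \notin J}\ma_{ij}$. Comparing to the analogous identity with the real $r_i, c_j$ in place of $r^b_i, c^b_j$ (which always holds with the opposite sign), the shortfall at any violating cut is at most $\sum_{j \in J}\{c_j\} \leq t'$. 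Thus one can eliminate that cut by reducing a few $r^b_i$'s or $c^b_j$'s by further integer amounts, at a cost of at most its shortfall. A charging scheme attributing these extra reductions to the at most $s'+t'$ rows and columns with non-zero fractional parts keeps the cumulative loss at $O(s'+t')$.

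The main obstacle is to resolve all violated cuts algorithmically, since naively there could be exponentially many and iterated resolution may create new violations. This is where Karger's near-minimum-cut enumeration~\cite{Karger96} enters: since every shortfall is at most $O(s'+t')$ and the target-saturating flow value is $\Theta(T)$, the violating cuts form a polynomially-sized near-minimum-cut family in the $(\sigma,\tau)$-network and are enumerable in polynomial time. I would iterate---enumerate violated near-minimum cuts, pick one, and resolve it by a minimal integer decrease of the relevant targets---while tracking a potential function (for instance, the total fractional excess across all not-yet-resolved cuts) to bound the number of rounds and ensure the accumulated extra loss remains $O(s'+t')$. Once a flow-feasible pair $(r^b, c^b)$ is attained, $\mb$ is extracted in polynomial time by computing any max flow in the resulting network, giving a matrix with $\mb \leq \ma$ entrywise, integer row and column sums, and total loss $\sum_{ij}(\ma_{ij}-\mb_{ij}) = O(s'+t')$.
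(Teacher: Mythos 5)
Your reduction of the rounding problem to choosing integer row/column targets that are feasible for a capacitated transportation (max-flow) instance is fine, and your observation that any single cut $(I,J)$ is violated by at most $\sum_{j\in J}\{c_j\}\leq t'$ is correct. But the heart of the theorem is exactly the step you leave to an unspecified ``charging scheme'': showing that \emph{all} violated cuts can be repaired with total additional reduction $O(s'+t')$. Repairs interact badly: to fix a violated cut $(I,J)$ you must lower some $r^b_i$ with $i\in I$, and to keep $\sum_i r^b_i=\sum_j c^b_j$ you must then lower some $c^b_j$; lowering $c^b_{j_0}$ decreases the capacity of every cut whose column side contains $j_0$, so previously satisfied cuts can become violated. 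No potential function or charging argument is given that rules out a cascade costing $\omega(s'+t')$, and the existence of integer targets with total $\geq \sum_{ij}\ma_{ij}-O(s'+t')$ is essentially a restatement of \cref{thm:matrixround} itself. The paper's proof indicates this existence is genuinely nontrivial: it reduces to a modulo-$k$ degree-prescription problem on a bipartite multigraph with $k=\min(s,t)$ (\cref{lem:mod}) and proves it by combining Thomassen's theorem (which rests on the Lov\'asz--Thomassen--Wu--Zhang result on modulo orientations of $(3k-3)$-edge-connected graphs), Nash-Williams' spanning-tree packing with matroid partitioning to control the number of deleted edges, and a decomposition into $6k$-edge-connected pieces; none of this machinery is replaced by your sketch.

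A second concrete problem is the appeal to Karger's near-minimum-cut enumeration. Karger's polynomial bound counts global cuts of an undirected graph within a constant multiplicative factor of the minimum cut; it does not bound the number of near-minimum $\sigma$--$\tau$ cuts in your capacitated network, which can be exponential (already a disjoint union of $s$ source--row--column--sink paths with equal source and sink capacities has $2^{s}$ minimum $\sigma$--$\tau$ cuts). In the paper, Karger enumeration is used for a different purpose: inside the constructive version of the Thomassen/LMTWZ induction, to check a refined edge-connectivity condition on a graph that is already $(2k-2)$-edge-connected, where the relevant cuts are within a constant factor of the (large) global minimum cut and hence polynomially enumerable. So both the termination/loss bound of your iterative repair process and the enumeration step it relies on are unsupported; as written the proposal does not yield \cref{thm:matrixround}.
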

For continuity of reading, we defer the description of $\mround(\ma)$ and its proof to \Cref{subsec:matrixround}.
\begin{lemma}[Lemma 6.13 in \cite{ACSS20}]\label{lem:create}
	For any $\ma\in \bZfrac \subseteq \R_{\geq 0}^{\ell \times [0,k]}$ and $\mb \in \R_{\geq 0}^{\ell \times [0,k]}$ such that $\mb_{ij} \leq \ma_{ij}$ for all $i \in [\ell],j \in [0,k]$, $\mb\1\in \Z^\ell$, $\mb^\top \1\in \Z^{[0,k]}$ and $\sum_{i\in [\ell], j \in [0,k]} (\ma_{ij}-\mb_{ij})\leq t$. The algorithm $\create(\ma,\mb,\bR)$ runs in polynomial time and returns a solution $\ma'$ and a probability discretization set $\bR'$ such that $\ma' \in \textbf{Z}^{\phi}_{\bR'}$ and $\bg(\ma') \geq \expo{-O\left(t\log n\right)}\bg(\ma)~.$
\end{lemma}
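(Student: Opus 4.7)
The plan is to construct $\ma'$ from $\mb$ by appending at most $k$ new rows at freshly introduced probability values. Let $\bff \defeq \ma - \mb \ge 0$, so $\sum_{i,j}\bff_{ij} \le t$. For $j \in [1,k]$ let $d_j \defeq \phi_j - [\mb^\top \1]_j = \sum_i \bff_{ij} \ge 0$ be the column deficit; column $0$ of $\bZfrac$ is unconstrained, so it needs no fix. For each $j \in [1,k]$ with $d_j > 0$, I introduce one new row whose only nonzero entry is $d_j$ in column $j$, at a new probability $r^{(j)}_{\mathrm{new}}$. Then the first $\ell$ rows of $\ma'$ inherit the integer row sums of $\mb$, each new row sums to the integer $d_j$, and $[(\ma')^\top \1]_j = [\mb^\top \1]_j + d_j = \phi_j$ for $j \in [1,k]$. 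Setting $\bR' = \bR \cup \{r^{(j)}_{\mathrm{new}} : d_j > 0\}$ gives $|\bR'| \le \ell + k$, and the only remaining condition for $\ma' \in \textbf{Z}^{\phi}_{\bR'}$ is the probability-budget bound $(\vr')^\top \ma' \1 \le 1$, which is secured by the choice below.

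The crucial step is to set
\[
r^{(j)}_{\mathrm{new}} \defeq \prod_{i \in [\ell]} r_i^{\,\bff_{ij}/d_j},
\]
the weighted geometric mean of the original probabilities with weights proportional to the fractional mass drained from column $j$ at row $i$. Three virtues follow: (i) $r^{(j)}_{\mathrm{new}} \in [1/(2n^2), 1]$ since each $r_i$ is; (ii) by the weighted AM--GM inequality, $d_j r^{(j)}_{\mathrm{new}} \le \sum_i \bff_{ij} r_i$, so $(\vr')^\top \ma' \1 \le \vr^\top \mb \1 + \vr^\top \bff \1 = \vr^\top \ma \1 \le 1$, certifying $\ma' \in \textbf{Z}^{\phi}_{\bR'}$; and (iii) each new row contributes exactly $m_j d_j \log r^{(j)}_{\mathrm{new}} = \sum_i \mc_{ij}\, \bff_{ij}$ to $\log \bg(\ma')$, precisely matching the contribution of the fractional excess $\bff$ to the probability part of $\log \bg(\ma)$ (with $j=0$ requiring no new row since $\mc_{i,0} = m_0 \log r_i = 0$). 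The within-row entropy terms $-d_j \log d_j$ and $d_j \log d_j$ of each new row cancel identically since each such row has a single nonzero entry.

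After these cancellations the loss reduces to a row-wise entropy comparison
\[
\log \bg(\ma) - \log \bg(\ma') \;=\; \sum_i \bigl([\ma\1]_i H(q_i) - [\mb\1]_i H(q'_i)\bigr),
\]
where $q_{ij} = \ma_{ij}/[\ma\1]_i$ and $q'_{ij} = \mb_{ij}/[\mb\1]_i$ are the row-wise conditional column distributions (each supported on at most $k+1$ atoms) and $H$ denotes Shannon entropy. Writing $e_i \defeq [\ma\1]_i - [\mb\1]_i = \sum_j \bff_{ij}$, the per-row difference decomposes as $e_i H(q_i) + [\mb\1]_i(H(q_i) - H(q'_i))$; the first piece is at most $e_i \log(k+1)$. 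For the second, an easy direct computation gives $\|q_i - q'_i\|_{TV} \le e_i/[\mb\1]_i$, and the Fannes/Audenaert continuity of entropy then yields $[\mb\1]_i \lvert H(q_i) - H(q'_i)\rvert = O(e_i \log n + 1)$, using the a priori bound $[\mb\1]_i \le 1/r_{\min} = O(n^2)$ (from $\vr^\top \ma \1 \le 1$) to absorb the $\log(1/e_i)$ term into $O(\log n)$. Summing across the $O(s' + t') = O(k)$ rows with $e_i > 0$ and using $\sum_i e_i \le t$, the total loss is $O(t\log n + k) = O(t\log n)$, since $k = O(t)$ in the regime delivered by the matrix rounding of \Cref{thm:matrixround}.

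The main obstacle, and the key insight, is the weighted-geometric-mean choice of $r^{(j)}_{\mathrm{new}}$: without it, the probability contributions of the $t$ units of displaced fractional mass would incur an additive loss as large as $\Theta(tn\log n)$ (pairing frequencies $m_j \le n$ with log-probabilities $|\log r_i| = O(\log n)$), overwhelming the target by a factor of $n$. A secondary delicate point is the entropy-continuity bound in the row-wise step, where one must accommodate arbitrarily small $\mb_{ij}$; this is handled by row-normalizing so that only the total-variation distance $e_i/[\mb\1]_i$ enters, with the potentially large $\log(1/e_i)$ term controlled by $[\mb\1]_i \le \poly(n)$. The construction of $\ma'$, $\bR'$, and the verification of integrality and budget are all immediate in polynomial time.
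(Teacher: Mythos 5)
Your proof is correct in its essentials but takes a genuinely different route from the paper. The paper does not re-derive the construction at all: it restates the full guarantee of the $\create$ algorithm of \cite{ACSS20} as \Cref{lem:createoriginal} (new rows with diagonal structure, new probability values equal to the weighted \emph{arithmetic} mean $\frac{\sum_i(\ma_{ij}-\mb_{ij})\pvec_i}{\sum_i(\ma_{ij}-\mb_{ij})}$, and loss $\expo{-O(\sum_i\alpha_i\log\logparam)}$), and then only (a) checks that integrality of the new matrix's row sums follows from $\mb\1\in\Z^\ell$, $\mb^\top\1\in\Z$ and guarantees 1 and 3, and (b) bounds $\sum_i\alpha_i\leq t$ and $\logparam=O(n^2)$ (using, notably, the ``WLOG $\ell\leq k+1$'' sparsity argument). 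You instead rebuild and reanalyze the construction from scratch, choosing the weighted \emph{geometric} mean for the new probability values: this preserves the linear term of $\log\bg$ exactly and gives the budget constraint via AM--GM, whereas the cited construction preserves the budget exactly and can only improve the linear term. Your reduction of the loss to the row-wise comparison $\sum_i([\ma\1]_iH(q_i)-[\mb\1]_iH(q'_i))$ is an exact identity and the TV/Fannes continuity step is sound (including the $[\mb\1]_i\leq 1/\rmin=O(n^2)$ bound); so as an alternative, self-contained proof of the construction this is a nice argument that the paper simply outsources.

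The one genuine gap is in the final summation. Your per-row bound is $O(e_i\log n+1)$, so what you actually prove is a loss of $O(t\log n+R)$ where $R$ is the number of rows on which $\ma$ and $\mb$ differ. The lemma as stated makes no assumption bounding $R$, and ``$k=O(t)$'' is not a hypothesis either: \Cref{thm:matrixround} guarantees $t=O(k)$, not the reverse, and the actual rounded mass can be far smaller than $k$ (your argument is rescued in the application only because the lemma is invoked with the parameter $t$ set to $\Theta(k)$ and, after $\sparse$ and $\mround$, only $O(k)$ rows are touched). To close this within your own framework, avoid spending a constant per row: bound $\sum_i e_i\log(1/e_i)$ globally by concavity, $\sum_i e_i\log(1/e_i)\leq t\log(R/t)+O(1)$ with $R\leq\ell\leq\poly(n)$ for the paper's discretization sets, which turns the additive $O(R)$ into $O(t\log n)$ up to the same mild boundary caveats the paper itself absorbs (its proof likewise appeals to the application context via $\ell\leq k+1$). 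With that adjustment, or with an explicit statement that the bound is $O(t\log n + R)$ and that $R=O(t)$ in every invocation, your argument fully substitutes for the citation.
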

The algorithm $\create$ is the same algorithm from \cite{ACSS20} and the above lemma is a simplified version of Lemma 6.13 in \cite{ACSS20}; see \Cref{app:create} for its proof. 

The proof of \Cref{thm:generalone} follows by combining results for each step and we defer it to \Cref{app:thmgeneralone}. 

\newcommand{\set}[1]{\{#1\}}
\newcommand{\card}[1]{|#1|}
\subsection{Matrix rounding algorithm and proof sketch of \Cref{thm:matrixround}}\label{subsec:matrixround}
In this section we prove \cref{thm:matrixround}. Given a matrix $\ma\in \R^{s\times t}_{\geq 0}$, our goal is to produce a rounded-down matrix $\mb$ with integer row and column sums, such that $0\leq \mb\leq \ma$ (entry wise) and the total amount of rounding $\sum_{i, j}(\ma_{ij}-\mb_{ij})$ is bounded by $O(s'+t')$, where $s', t'$ are the number of nonzero rows and columns respectively. For simplicity we may assume $s=s'$ and $t=t'$ by simply dropping the zero rows and columns from $\ma$ and re-appending them to the resulting $\mb$. As our first step, we reduce the problem to a statement about graphs. Below we use $\deg_F(v)$ to denote the number of edges adjacent to a vertex $v$ within a set of edges $F$.
\begin{lemma}\label{lem:mod}
	Suppose that $G=(V, E)$ is a bipartite graph and $k$ is a positive integer. There exists a polynomial time algorithm that outputs a subgraph $F\subseteq E$, such that $\deg_F(v)=0$ modulo $k$ for every vertex $v$, and $\card{E-F}\leq O(k\card{V})$.
\end{lemma}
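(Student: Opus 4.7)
The plan is to produce $F$ by choosing a small removal set $R = E \setminus F$ so that $\deg_R(v) \equiv \deg_G(v) \pmod k$ for every vertex $v$; then $\deg_F(v) = \deg_G(v) - \deg_R(v) \equiv 0 \pmod k$, as required. Let $r_v = \deg_G(v) \bmod k \in \{0, 1, \ldots, k-1\}$. The size bound $|R| \leq O(k|V|)$ will follow as long as every vertex satisfies $\deg_R(v) \leq r_v + O(k)$: this gives $|R| = \tfrac{1}{2} \sum_v \deg_R(v) \leq \tfrac{1}{2} \sum_v (r_v + O(k)) = O(k|V|)$. So the goal reduces to proving the \emph{existence} of such an $R$ and giving an efficient algorithm to construct it.

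My first attempt would be to realize $R$ as an exact bipartite $b$-matching. With $V = U \cup W$, the sums $\sum_{u\in U} r_u$ and $\sum_{w\in W} r_w$ are both congruent to $|E| \bmod k$ and so differ by a multiple of $k$; one can raise targets by multiples of $k$ on the smaller side to obtain a target vector $b$ with $b_v \equiv r_v \pmod k$, $b_v \leq \deg_G(v)$, $\sum_{u\in U} b_u = \sum_{w\in W} b_w$, and $\sum_v b_v = O(k|V|)$ (the adjustment costs at most $(k-1)|V|$). A subgraph $R \subseteq E$ with $\deg_R(v) = b_v$ exactly is then a bipartite perfect $b$-matching. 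When feasible, it is constructible in polynomial time via max-flow and yields $|R| = \sum_u b_u = O(k|V|)$.

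The harder case is when no such $R$ exists: by max-flow--min-cut duality, infeasibility is witnessed by a cut in the auxiliary flow network where demand exceeds supply. The plan here is to enumerate all near-minimum cuts using Karger's algorithm (which outputs a polynomial-size list), identify an obstructing cut $(A, V \setminus A)$ in $G$, place its edges into $R$, and recurse on $G[A]$ and $G[V\setminus A]$ separately with shifted targets (the cut edges already contribute to $\deg_R(v)$ on each side, so the residual sub-problems have the same form). A potential argument across the recursion shows that if each level's cut contributes $O(k)$ edges to $R$ and the recursion has depth at most $|V|$ (each step cleaves off a proper sub-component), the total satisfies $|R| = O(k|V|)$.

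The main obstacle I anticipate is the combinatorial guarantee that obstructing cuts can always be chosen of size $O(k)$ and that the recursion correctly ``absorbs'' the modular constraints at each level, including at singleton leaves (where $\deg_R(v) = \deg_G(v)$ forced, which is automatically consistent with $r_v$). This is where ``disparate'' combinatorial and graph-theoretic results need to be combined: one likely needs a bipartite analogue of the Nash-Williams / Tutte edge-disjoint spanning trees theorem, or a defect version of Hall's theorem, to certify that in high-connectivity regimes the exact $b$-matching is feasible after a bounded modular adjustment. Marrying this existence-style combinatorial argument with Karger's efficient enumeration of near-minimum cuts is the core of the proof, and I expect it to require a carefully designed potential function to track both the modular deficits and the cumulative cut cost across the recursion tree.
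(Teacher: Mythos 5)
There is a genuine gap, and it sits exactly where you flag it yourself: the combinatorial core is never established. Your plan reduces the lemma to the claim that one can always either (i) realize a removal set $R$ as an exact bipartite degree-prescribed subgraph with $b_v\equiv \deg_G(v)\pmod k$ and $\sum_v b_v=O(k|V|)$, or (ii) find an obstructing cut of size $O(k)$ and recurse with total cost $O(k|V|)$. Neither branch is proved. For (i), a single max-flow check only decides feasibility for one fixed choice of representatives $b_v$ in each residue class; the freedom to shift by multiples of $k$ is what makes the problem hard (degree-prescription with gaps of size $\geq 2$ in the allowed degree sets is NP-hard in general, so no simple Hall/Gale--Ryser-type characterization over all representative choices should be expected), and your balancing step can already fail when one side has no room to raise $b_v$ by $k$ without exceeding $\deg_G(v)$. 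For (ii), an infeasibility cut coming from max-flow--min-cut in the auxiliary network has no reason to have size $O(k)$, your potential argument bounding the recursion's total removal is not carried out, and Karger's enumeration only applies to cuts within a constant factor of the \emph{minimum} cut, which nothing in your setting guarantees (in the paper it is invoked only after condition 2 of the induction forces roughly $(2k-2)$-edge-connectivity, so cuts of size $<3k$ are near-minimum).

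The statement you hope to certify ``in high-connectivity regimes'' is precisely the deep ingredient the paper imports: Thomassen's theorem that a $(3k-3)$-edge-connected bipartite graph admits a subgraph with \emph{arbitrary} prescribed degree residues (itself resting on the Lov\'asz--Thomassen--Wu--Zhang modulo-$k$ orientation theorem), combined with Nash--Williams' spanning-tree packing to confine all deletions to a subgraph of $3k$ edge-disjoint spanning trees, hence $O(k|V|)$ edges. Your outer recursion (peel off small cuts until the pieces are highly connected) does match the paper's reduction to the $6k$-edge-connected case, but without a proved substitute for the Thomassen/LMTWZ base case the argument does not go through; a defect Hall theorem or a bipartite spanning-tree analogue will not by itself deliver prescribed residues modulo $k$ while keeping all but $O(k|V|)$ edges.
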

\begin{proof}[Proof of \cref{lem:mod} $\implies$ \cref{thm:matrixround}]
	Let $k=\min(s, t)$. Given $\ma$ we produce a bipartite graph with $s$ and $t$ vertices on two sides; for every entry $\ma_{ij}$ we round down to the nearest integer multiple of $1/k$, say $c_{ij}/k$, and introduce $c_{ij}$ parallel edges between vertices $i$ and $j$ of the bipartite graph. Now \cref{lem:mod} produces a subgraph $F$, and we let $\mb_{ij}$ be $1/k$ times the number of edges left in $F$ between $i,j$. By \cref{lem:mod}, $\mb$ will have integer row and column sums, and $0\leq \mb\leq \ma$. We next show that the total amount of rounding is bounded by $O(s+t)$.
	
	Notice that when rounding each entry of $\ma$ down to $c_{ij}/k$, the total amount of change is at most $st/k=O(s+t)$. By the guarantee that $\card{E-F}\leq O(k\card{V})$, the total amount of rounding in the second step is also bounded by $O(k(s+t))/k=O(s+t)$.
\end{proof}

So it remains to prove \cref{lem:mod}. As our main tool, we will use a result from \cite{thomassen2014graph} which was obtained by reduction to an earlier result from \cite{LMTWZ13}. Roughly, this result says that as long as $G$ is sufficiently connected, we can choose a subgraph whose degrees are \emph{arbitrary} values modulo $k$.
\begin{lemma}[{\cite[Theorem 1]{thomassen2014graph}}]\label{lem:connected}
	Suppose that $G=(V, E)$ is a bipartite $(3k-3)$-edge-connected graph. Suppose that $f:V \to \set{0, \dots, k-1}$ is an arbitrary function, with the restriction that the sum of $f$ on either side of the bipartite graph $G$ yields the same result modulo $k$. Then, there is a subgraph $F\subseteq E$, such that for each vertex $v$, $\deg_F(v)=f(v)$ modulo $k$.
\end{lemma}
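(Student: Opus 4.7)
The plan is to proceed by recursion on the connectivity of $G$, combined with a sparsification step that lets us invoke \cref{lem:connected} with a built-in size bound. The algorithm works as follows: if $G$ (restricted to its current component) contains an edge cut of size at most $3k-4$, let $C$ be the set of cut edges, add them to $E - F$, and recurse separately on the two sides $G[S]$ and $G[V \setminus S]$ induced by the cut; otherwise $G$ is $(3k-3)$-edge-connected and we enter the base case below. Finding a minimum cut and checking this condition is polynomial time, so the overall procedure is polynomial.

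For the base case, the first step is sparsification. By the Nagamochi--Ibaraki sparsifier (computable in polynomial time), there is a spanning subgraph $G' \subseteq G$ that is itself $(3k-3)$-edge-connected with $\card{E(G')} \leq (3k-3)\card{V}$. Define $f' : V \to \set{0, 1, \dots, k-1}$ by $f'(v) \defeq (\deg_{G'}(v) - \deg_G(v)) \bmod k$. The bipartite parity hypothesis of \cref{lem:connected} holds for $G'$ with $f'$: summing over either side $A$ of the bipartition of $G'$ gives $\sum_{v \in A} f'(v) \equiv \card{E(G')} - \card{E(G)} \pmod{k}$, so the two side-sums agree modulo $k$. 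Invoking \cref{lem:connected} on $G'$ with $f'$ yields $F' \subseteq E(G')$ such that $\deg_{F'}(v) \equiv f'(v) \pmod{k}$. We output $F \defeq F' \cup (E \setminus E(G'))$; then at every vertex $\deg_F(v) = \deg_{F'}(v) + (\deg_G(v) - \deg_{G'}(v)) \equiv 0 \pmod{k}$, and $E - F \subseteq E(G')$, which gives $\card{E - F} \leq (3k-3)\card{V} = O(k\card{V})$.

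To bound total deletions across the recursion, note that each recursive split removes at most $3k-4$ edges and strictly refines a component into two, so the number of splits across all initial components is at most $\card{V} - 1$, contributing $O(k\card{V})$ in aggregate. The base case on a $(3k-3)$-edge-connected leaf component with vertex set $V_i$ contributes $O(k\card{V_i})$ deleted edges, and since the $V_i$ partition $V$, these base-case contributions also sum to $O(k\card{V})$. Combining the two, $\card{E - F} \leq O(k\card{V})$, as required.

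The main obstacle is the base case: \cref{lem:connected} by itself gives only existence of a subgraph with prescribed degree residues, with no a priori bound on its size, so applying it directly to $G$ with $f \equiv 0$ would not prove $\card{E - F} = O(k\card{V})$. The sparsification trick circumvents this by running \cref{lem:connected} on a subgraph $G'$ of edge count $O(k\card{V})$ rather than on $G$, with the offset $f'(v) = (\deg_{G'}(v) - \deg_G(v)) \bmod k$ precisely compensating for the degree contribution of the edges in $E \setminus E(G')$ that we keep in $F$ for free; the bipartite parity identity is what ensures \cref{lem:connected} is still applicable to this adjusted $f'$.
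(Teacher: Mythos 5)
What you have written is not a proof of \cref{lem:connected}; it is a proof of \cref{lem:mod}, and it invokes \cref{lem:connected} itself as a black box in its base case (``Invoking \cref{lem:connected} on $G'$ with $f'$ yields $F'$''). Since \cref{lem:connected} is precisely the statement to be established, the argument is circular as a proof of that statement. The actual content of \cref{lem:connected} --- that in a $(3k-3)$-edge-connected bipartite graph one can realize an \emph{arbitrary} prescribed degree-residue function $f$, subject only to the two sides of the bipartition summing to the same value modulo $k$ --- is never addressed: your cut-recursion and sparsification steps only preprocess the instance before handing the hard combinatorial core back to the lemma being proved. In the paper this lemma is an imported result of \cite{thomassen2014graph}, obtained by an (efficient) reduction to the modulo-$k$ orientation theorem of \cite{LMTWZ13} (\cref{lem:orientation}), which is in turn proved by a delicate induction on the generalized statement of \cref{lem:induct}, involving a pre-oriented special vertex $z_0$ and cut-specific lower bounds of the form $\card{E(S,S^c)}\geq (2k-2)+\abs{\tau(S)}$. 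None of that machinery, nor any substitute for it, appears in your proposal, so the key idea is missing.

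For what it is worth, read as a proof of \cref{lem:mod} your argument is correct and takes a genuinely different route from the paper's. The paper reduces to $6k$-edge-connected components, extracts $3k$ edge-disjoint spanning trees via Nash-Williams \cite{nash1961edge} and matroid partitioning \cite{GW92}, keeps the complement $H^c$ of the tree union, and applies \cref{lem:connected} with $f(v)=-\deg_{H^c}(v) \bmod k$. You instead reduce only to $(3k-3)$-edge-connected components and replace the spanning-tree packing by a Nagamochi--Ibaraki sparse certificate $G'$ of $(3k-3)$-edge-connectivity with at most $(3k-3)\card{V}$ edges, applying \cref{lem:connected} to $G'$ with the offset $f'(v)=(\deg_{G'}(v)-\deg_G(v))\bmod k$; the bipartite parity check, namely that $\sum_{v\in A}f'(v)\equiv \card{E(G')}-\card{E(G)}\pmod{k}$ on either side $A$, is right, and the accounting of removed edges across the recursion is sound. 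This buys a weaker connectivity threshold and avoids Nash-Williams entirely, but it does not advance the proof of \cref{lem:connected} itself, which remains unproved in your write-up.
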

Note that $(3k-3)$-edge-connectivity means that for every cut, i.e., every partitioning of vertices into two nonempty sets $S, S^c$, the number of edges between $S$ and $S^c$ is $\geq 3k-3$. We show that \cref{lem:connected} can also be made constructive, giving the polynomial time guarantee for \cref{lem:mod}.
\begin{lemma}\label{lem:connected-alg}
	There is a polynomial time algorithm that produces the subgraph of \cref{lem:connected}.	
\end{lemma}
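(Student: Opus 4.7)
My plan is to turn the existence argument of \cite{thomassen2014graph} (which in turn invokes \cite{LMTWZ13}) into a polynomial-time algorithm by following its inductive structure and realizing each step constructively. Thomassen's proof reduces the bipartite mod-$k$ degree problem to a mod-$k$ \emph{orientation} problem on an auxiliary multigraph $G'$ derived from $G$: writing $G=(A\cup B, E)$, identify any orientation of $E$ with the subgraph $F$ consisting of edges directed from $B$ to $A$, so that the condition $\deg_F(v)\equiv f(v)\pmod k$ becomes a constraint on in-degrees at vertices in $A$ and on out-degrees at vertices in $B$. The consistency condition on $f$ is exactly what is needed to match the global Eulerian identity $\sum_v (d^-(v)-d^+(v))\equiv 0 \pmod k$, and the $(3k-3)$-edge-connectivity of $G$ survives the reduction, so the LMTWZ existence theorem applies to $G'$.

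The algorithmic core is a search-to-decision procedure driven by the known existence result. Process the edges of $G'$ one by one; for each edge $e=uv$, tentatively commit to the orientation $u\to v$, shift the target residues at $u$ and $v$ modulo $k$ accordingly, and delete $e$ from the working graph. Verify that the residual instance still satisfies the hypotheses of the existence theorem, namely that the updated residue function is still consistent modulo $k$ and that the remaining graph retains enough edge-connectivity for a feasible completion. Each such check reduces to a single minimum cut computation, which can be done in polynomial time by any standard max-flow algorithm. If the hypothesis persists, recurse; otherwise reverse the tentative orientation of $e$ and repeat the check. Once every edge has been committed, read off $F$ directly from the orientation.

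The main obstacle is arguing that at each iteration \emph{at least one} of the two orientations of $e$ leaves the hypothesis intact, so that the procedure never gets stuck. This is precisely the content of the inductive step in the existence proof: the splitting-off / edge-deletion arguments used in \cite{thomassen2014graph} to reduce an instance on $G'$ to a smaller instance show that one of the two choices preserves both the connectivity threshold and the residue consistency. Turning this into an effective subroutine uses only polynomial-time primitives (minimum cuts and modular arithmetic on an $O(|V|)$-size residue table), and we invoke edge-splitting theorems of Lov\'asz/Mader only for their structural conclusion, which we verify algorithmically by trying both orientations and testing the hypothesis.

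Since each iteration strictly decreases the number of unprocessed edges, the procedure terminates after $O(|E|)$ rounds, each running in polynomial time, for a total running time polynomial in $|V|$, $|E|$, and $k$. The final orientation yields the desired subgraph $F$ of \cref{lem:connected}, establishing the lemma.
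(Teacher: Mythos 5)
Your plan has a genuine gap at its central step. You process edges one at a time, commit a tentative orientation, delete the edge, and claim that ``at least one of the two orientations leaves the hypothesis intact,'' attributing this to the inductive step of the existence proof. That is not what the induction in \cite{thomassen2014graph}/\cite{LMTWZ13} establishes. The existence theorem (\cref{lem:orientation}) has a global hypothesis of $(3k-3)$-edge-connectivity, and edge deletion only ever destroys connectivity: after peeling off more than a few edges the residual graph cannot satisfy that hypothesis for \emph{either} orientation, so the invariant you propose to maintain is simply not maintainable. The actual induction in \cite{LMTWZ13} (\cref{lem:induct}) does not peel single edges; it works with a cut-specific condition $\card{E(S,S^c)}\geq (2k-2)+\abs{\tau(S)}$ involving the residue-dependent function $\tau$, a pre-oriented special vertex $z_0$, and reductions that contract tight sets and split the instance. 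Because the theorem supplies only a \emph{sufficient} condition for feasibility, not a characterization, your search-to-decision greedy has no valid decision oracle: when both tentative orientations violate the hypothesis you cannot conclude anything about which (if either) branch still admits a completion, so the procedure can get stuck without certifying failure.

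A secondary problem is the claim that each check ``reduces to a single minimum cut computation.'' Even verifying the nuanced condition of \cref{lem:induct} is nontrivial, since $\tau(S)$ varies with $S$; a graph can have large global minimum cut yet be delicate exactly at the cuts where $\abs{\tau(S)}$ is large, and the special vertices $z_0,v_0$ are exempted from the condition. This is precisely where the paper's proof does its real work: it follows the \cite{LMTWZ13} induction verbatim, observes that every reduction step is already constructive except for \emph{finding} a set $S$, $\card{S}\geq 2$, violating \cref{eq:falsify}, and shows that any such violator is a cut of size $<3k$, hence a near-minimum cut that can be enumerated in polynomial time via \cite{Karger96} (after a contraction trick over edges incident to $v_0,z_0$ to restore $(2k-2)$-edge-connectivity). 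If you want to salvage your approach, you would have to replace the invariant ``$(3k-3)$-edge-connected'' by the full condition of \cref{lem:induct}, prove that some single-edge commitment always preserves it, and give a polynomial-time test for it---the first two of these are exactly what is missing, and the last already requires the near-minimum-cut enumeration machinery you were trying to avoid.
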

We defer the proof of \cref{lem:connected-alg} to \cref{app:matrixround}. At a high level, the proof of \cref{lem:connected} works by formulating an assumption about the graph that is more general and more nuanced than edge-connectivity; instead of a constant lower bound on every cut, this assumption puts a cut-specific lower bound on each cut, the details of which can be found in \cref{app:matrixround}. The rest of the argument follows a clever induction. To make this argument constructive, we show how to check the nuanced variant of edge-connectivity in polynomial time. We do this by proving that only cuts of size smaller than a constant multiple of the minimum cut have to checked, and these can be enumerated in polynomial time \cite{Karger96}.

Note that \cref{lem:connected} does not guarantee anything about $\card{E-F}$, even when $f$ is the zero function (the empty subgraph is actually a valid answer in that case). We will fix this using a theorem of \cite{nash1961edge}.  We will first prove \cref{lem:mod} with the extra assumption that $G$ is $6k$-edge-connected, and then prove the general case.
\begin{proof}[Proof of \cref{lem:mod} when $G$ is $6k$-edge-connected]
	By a famous theorem due to \cite{nash1961edge}, a $6k$-edge-connected graph contains $6k/2=3k$ edge-disjoint spanning trees. Moreover the union of these $3k$ edge-disjoint spanning trees can be found in polynomial time by matroid partitioning algorithms \cite{GW92}. Let $H$ be the subgraph formed by these $3k$ edge-disjoint spanning trees. We will ensure that all edges outside $H$ are included in $F$; as a consequence, we will automatically get that $\card{E-F}$ is bounded by the number of edges in $H$, which is at most $3k(\card{V}-1)=O(k\card{V})$.
	
	Let $H^c$ denote the complement of $H$ in $G$. Define the function $f:V\to\set{0, \dots, k-1}$ in the following way: let $f(v)$ be $-\deg_{H^c}(v)$ modulo $k$. Note that $f$ has the same sum on either side of the bipartite graph, modulo $k$. We will apply \cref{lem:connected,lem:connected-alg} to the graph $H$ (which is $3k\geq (3k-3)$-edge-connected) and the function $f$. Then we take the union of the subgraph returned by \cref{lem:connected-alg} and $H^c$ and output the result as $F$. Then $\deg_F(v)=\deg_{H^c}(v)+f(v)=0$ modulo $k$, for every vertex $v$. Note again that since we only deleted edges in $H$ to get $F$, the total number of edges we have removed can be at most $O(k\card{V})$.
\end{proof}
We have shown \cref{lem:mod} for highly-connected graphs and the proof for the general case follows by partitioning the graph into union of vertex-disjoint highly-connected subgraphs while removing a small number of edges. We defer the proof for this general case to \cref{app:matrixround}.
\renewcommand{\lb}{\ell}
\renewcommand{\ub}{u}
\newcommand{\rmax}{\textbf{r}_{\max}}
\section{Algorithm, Proof Sketch of \Cref{thm:resultmaintwo} and Experiments}\label{sec:practical}
Here we present a simpler rounding algorithm that further provides a faster implementation of the pseudo PML approach with provable guarantees. Similar to \Cref{sec:mainfirst}, we first provide an algorithm with respect to a probability discretization set $\bR$ that proves \Cref{thm:secmain}; we later choose the discretization set carefully to prove \Cref{thm:resultmaintwo}. 
We perform experiments in \Cref{subsec:expirements} to analyze the performance of this rounding algorithm empirically. We defer all remaining details to \Cref{app:pseudoall}.
\begin{theorem}\label{thm:secmain}
	Given a probability discretization set $\bR$ ($\ell\defeq |\bR|$) and a profile $\phi \in \Phi^n$ with $k$ distinct frequencies, there is an algorithm that runs in time $\otilde(\ell k^{\omega})$ and returns a distribution $\bp'$ such that,
	$$\Prob{\bp', \phi} \geq \expo{-O((\rmax-\rmin)n +k \log (\ell n))} \max_{\bq \in \dsimplex}\Prob{\bq, \phi} ~.$$
\end{theorem}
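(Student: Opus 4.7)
The plan is to follow the five-step template of \cref{alg:final_temp} but replace the intricate matrix rounding of Step 3 with a substantially simpler rounding whose loss can be controlled using the narrow range $[\rmin,\rmax]$ of the discretization $\bR$.

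First, I would solve the convex program $\max_{\bS\in\bZfrac}\log\bg(\bS)$ in $\otilde(\ell k^2)$ time and invoke \cref{lem:sparse} to produce a sparse maximizer $\bS''\in\bZfrac$ with at most $k+1$ non-zero rows such that $\bg(\bS'')\geq\bg(\bSp)$; this sparsification dominates the total running time at $\otilde(\ell k^{\omega})$. By \cref{lem:maximizer} and \cref{pmlprob:approx}, the sparse solution retains
$$\cphi\cdot\bg(\bS'')\geq\expo{-O(k\log n)}\max_{\bq\in\dsimplex}\Prob{\bq,\phi}.$$

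Second, and this is the key simplification, since at most $k+1$ rows of $\bS''$ carry mass, I would round each row sum $s_i=[\bS''\vones]_i$ down to an integer and collect the at most $k+1$ leftover fractional units into the $j=0$ (unseen) column of the row with smallest probability $\rmin$. This preserves the constraints $[\bS^\top\cones]_j=\phi_j$ for $j\geq 1$, produces integer row sums, and modifies only $O(k)$ entries of $\bS''$ by at most $1$ each. A final call to $\create$ from \cref{lem:create} fits the rounded matrix into a valid $\textbf{Z}^{\phi}_{\bR'}$ for some mildly perturbed discretization $\bR'$ at an additional $\expo{-O(k\log n)}$ cost, after which \cref{lem:associateddist} extracts the output distribution $\bp'$.

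The technical heart of the argument, and the main obstacle I expect, is the loss analysis of this rounding step. Because only $O(k)$ entries are altered and each change is at most $1$, the entropy-like contributions $-\bS_{ij}\log\bS_{ij}+[\bS\vones]_i\log[\bS\vones]_i$ in $\log\bg$ are perturbed by at most $O(k\log(\ell n))$, yielding the second term of the exponent. The linear contribution $\sum_{i,j}\mj\log(\ri)\bS_{ij}$ is more delicate: the reassigned mass relocates domain elements between probabilities in $[\rmin,\rmax]$, and since $\sum_j\mj\phi_j=n$, a careful accounting of these per-element log-likelihood changes (exploiting that mass is only routed to the row of probability $\rmin$ and that all source probabilities differ from $\rmin$ by at most $\rmax-\rmin$) bounds the resulting change by $O((\rmax-\rmin)n)$, producing the first term. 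Combining the two bounds gives the advertised $\expo{-O((\rmax-\rmin)n+k\log(\ell n))}$ guarantee and completes the proof.
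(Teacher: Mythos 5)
Your Step 1 matches the paper, but your rounding step has two genuine problems. First, feasibility: removing the fractional part of each row (which in general sits in columns $j\geq 1$) and depositing it in the $j=0$ entry of the $\rmin$ row changes the column sums, so the constraint $[\bS^\top\cones]_j=\phi_j$ for $j\geq 1$ is \emph{not} preserved, contrary to your claim; once the column sums fall below $\phi_j$, the matrix is no longer in $\bZfrac$ (or any $\textbf{Z}^{\phi}_{\bR'}$), \cref{lem:associateddist} does not apply, and $\bg$ of the rounded matrix no longer controls the likelihood of the associated distribution for the profile $\phi$. If instead you intend the rounded matrix to play the role of $\mb$ in \cref{lem:create}, that lemma also requires integral \emph{column} sums and $\mb_{ij}\leq\ma_{ij}$ entrywise; your row-wise rounding provides neither (the $(\rmin,0)$ entry increases, and the per-column deficits are fractional, so the new level sets created by $\create$ would have non-integral row sums). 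Achieving integral row and column sums simultaneously with only $O(k)$ total change is exactly the hard matrix-rounding problem of \cref{thm:matrixround}, which the simpler algorithm behind \cref{thm:secmain} is specifically designed to avoid.

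Second, your loss analysis of the linear term does not give the claimed bound. Moving mass $\delta$ in column $j$ from probability $\ri$ to $\rmin$ changes $\sum_{i,j}\m{j}\log(\ri)\bS_{ij}$ by $-\delta\,\m{j}\log(\ri/\rmin)$, a quantity governed by the \emph{ratio} $\ri/\rmin$, not the difference $\ri-\rmin$; since $\rmin$ can be as small as $\frac{1}{2n^2}$ and $\m{j}$ as large as $n$, this can be of order $n\log n$ even when $\rmax-\rmin$ is tiny. The paper's algorithm avoids this by cascading the fractional mass \emph{within each column upward} to the next higher probability value, so column sums are preserved exactly and the linear term only increases; the price is that the constraint $\vr^{\top}\bS\rones\leq 1$ may be violated, and it is repaired at the end by rescaling the whole discretization by $c=\sum_{i}\ri'\|\bSext_i\|_1\leq 1+(\rmax-\rmin)$, which multiplies the likelihood by $c^{-n}$, i.e.\ a loss $\exp(-n\log c)\geq\exp(-O((\rmax-\rmin)n))$ because $\log(1+x)\leq x$. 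That global-rescaling argument is the true source of the $(\rmax-\rmin)n$ term, and it is the ingredient missing from your proposal.
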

For an appropriately chosen $\bR$, the above theorem immediately proves \Cref{thm:resultmaintwo} and we defer both their proofs to \Cref{app:pseudoapproxpml}. 
We now present the algorithm that proves \Cref{thm:secmain}. 
\begin{algorithm}[H]
	\caption{ApproximatePML2$(\phi,\bR)$}\label{alg:round2}
	\begin{algorithmic}[1]
		\State Solve $\mx=\argmax_{\bS\in \bZfrac} \log \bg(\bS)$ and let $\mx'=\sparse(\mx)$.  \Comment{Step 1}
		\State Let $\bS'$ be the sub matrix of $\mx'$ corresponding to its non-zero rows.  \Comment{Step 2}
		\State Let $\bR'$ denote the elements in $\bR$ corresponding to non-zero rows of $\mx'$. Let $\ell'\defeq |\bR'|$. \Comment{Step 3}
		\For{$i=1\dots \ell'-1$} \Comment{Step 4}
		\State $\bSext_{i,j}=\bS'_{i,j}\frac{\floor{\|\bS'_i\|_1}}{\|\bS'_i\|_1}$ for all $j \in [0,k]$. \Comment{Step 5}
		\State $\bS'_{i+1,j}=\bS'_{i+1,j} + (\bS'_{i,j}-\bSext_{i,j})$ for all $j \in [0,k]$. \Comment{Step 6}
		\EndFor \Comment{Step 7}
		\State $\bSext_{\ell',j}=\bS'_{{\ell'},j}\frac{\floor{\|\bS'_{\ell'}\|_1}}{\|\bS'_{\ell'} \|_1}$ for all $j \in [0,k]$. \Comment{Step 8}
		\State Let $c=\sum_{i\in [1,\ell']}\ri' \|\bSext_{i}\|_{1}$, where $\ri'$ are the elements of $\bR'$. \Comment{Step 9}
		\State Define $\bRext=\{\ri''\}_{i\in [1,\ell']}$, where $\ri''=\frac{\ri'}{c}$ for all $i \in [1,\ell']$. \Comment{Step 10}
		\State Return distribution $\bp'$ with respect to $\bSext$ and $\bRext$ (See \Cref{defn:distS}).\Comment{Step 11}
	\end{algorithmic}
	\label{alg:final}
\end{algorithm}

%
%
%
%
\subsection{Experiments}\label{subsec:expirements}
Here we present experimental results for entropy estimation. We analyze the performance of the PseudoPML approach implemented using our rounding algorithm with the other state-of-the-art estimators. 
Each plot depicts the performance of various algorithms for estimating entropy of different distributions with domain size $N=10^5$. The x-axis corresponds to the sample size (in logarithmic scale) and the y-axis denotes the root mean square error (RMSE). 
Each data point represents 50 random trials. 
``Mix 2 Uniforms'' is a mixture of two uniform distributions, with half the probability mass on
the first $N/10$ symbols and the remaining mass on the last $9N/10$ symbols, and $\mathrm{Zipf}(\alpha) \sim 1/i^{\alpha}$ with $i \in [N]$. 
MLE is the naive approach of using empirical distribution with correction bias;
all the remaining algorithms are denoted using bibliographic citations. 


\begin{figure}[ht]
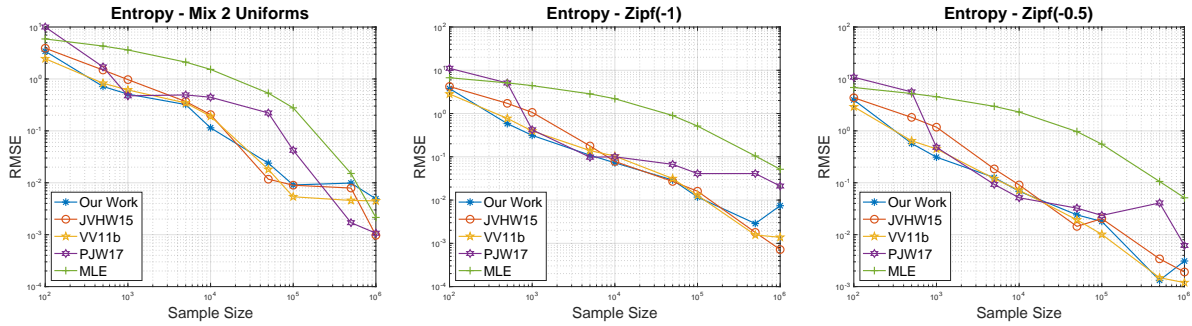

	\begin{center}
		\begin{tabular}{ccc}
			\begin{overpic}[width=.3\columnwidth]{
					Plots/Mix-crop}
			\end{overpic} &
			\begin{overpic}[width=.3\columnwidth]{
					Plots/Zipf1N-crop}
			\end{overpic} &
			\begin{overpic}[width=.3\columnwidth]{
					Plots/Zipf5N-crop}
			\end{overpic}
		\end{tabular}
\caption{
	\label{fig1}
	Experimental results for entropy estimation.
}
	\end{center}
\end{figure}

In the above experiment, 
note that the error achieved by our estimator is competitive with the other state-of-the-art estimators. As for the running times in practice, the other approaches tend to perform better than the current implementation of our algorithm. 
To further improve the running time of our approach or any other provable PML based approaches involves building an efficient practical solver for the convex optimization problem~\cite{CSS19,ACSS20} stated in the first step\footnote{In our current implementation, we use CVX\cite{cvx} with package CVXQUAD\cite{FSP17} to solve the convex program stated in the first step of \Cref{alg:final_temp}.} of our \Cref{alg:final_temp}; we think building such an efficient practical solver is an important research direction.

In \Cref{app:experiments}, we provide experiments for other distributions, compare the performance of the PseudoPML approach implemented using our algorithm with a heuristic approximate PML algorithm~\cite{PJW17} and provide all the implementation details.

\section*{Acknowledgments}

We thank Alon Orlitsky and Yi Hao for helpful clarifications and discussions. 

\section*{Sources of Funding}

Researchers on this project were supported by a Microsoft Research Faculty Fellowship, NSF CAREER Award CCF-1844855, NSF Grant CCF-1955039, a Simons Investigator Award, a Google Faculty Research Award, an Amazon Research Award, a PayPal research gift, a Sloan Research Fellowship, a Stanford Data Science Scholarship and a Dantzig-Lieberman Operations Research Fellowship.


\newpage
\bibliographystyle{alpha}
\bibliography{PML}
\appendix
\newpage
\newpage

\section{Remaining Proofs from \Cref{sec:mainfirst}}\label{app:approxpml}

Here we provide proofs for all the results in \Cref{sec:mainfirst} that were excluded in the main paper. For each of these results we dedicate a subsection that provides further details. Combining all these results from different subsections, in \Cref{app:thmgeneralone} we provide the proof for our main result (\Cref{thm:resultmainone}).

\subsection{Properties of Convex Program and Proof of \Cref{lem:sparse}}\label{app:sparse}
Here we prove important properties of our convex program. For convenience, we define the negative log of function $\bg(\mx)$,
\begin{equation}\label{eq:loggx}
\bff(\mx)\defeq \sum_{i\in[1,\ell],j\in[0,k]}\left[-\mc_{ij}\mx_{ij}+\mx_{ij}\log\mx_{ij}\right]-\sum_{i\in[1,\ell]}[\mx\vones]_{i}\log[\mx\vones]_{i}= -\log \bg(\mx)~.
\end{equation}
In the remainder we prove and state interesting properties of this function that helps us construct sparse approximate solutions. We start by recalling properties showed in \cite{CSS19}.
\begin{lemma}[Lemma 4.16 in \cite{CSS19}]
	Function $\bff(\mx)$ is convex in $\mx$.
\end{lemma}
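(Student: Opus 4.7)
The plan is to decompose $\bff(\mx)$ into a linear part and a nontrivial nonlinear part, and then argue convexity of the nonlinear part via a perspective function argument. The linear piece $-\sum_{i,j} \mc_{ij}\mx_{ij}$ is trivially convex, so everything reduces to the sum $\sum_{i,j} \mx_{ij}\log\mx_{ij} - \sum_i [\mx\vones]_i \log[\mx\vones]_i$. Naively, the second (subtracted) sum is concave in the row sums, so individual terms do not have a fixed sign of curvature; I therefore need to combine them before analyzing the Hessian.

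The key algebraic step I would take is to set $s_i = [\mx\vones]_i = \sum_{j'} \mx_{ij'}$ and observe that $\sum_{i,j} \mx_{ij}\log s_i = \sum_i s_i \log s_i$, so the nonlinear part rewrites as
\begin{equation*}
\sum_{i,j} \mx_{ij}\log\mx_{ij} - \sum_i s_i\log s_i \;=\; \sum_{i,j}\mx_{ij}\log\!\Bigl(\frac{\mx_{ij}}{s_i}\Bigr).
\end{equation*}
Each term is now of the form $g(\mx_{ij}, s_i)$ with $g(x,y) = x\log(x/y)$ on $\R_{>0}^2$ (with the standard continuous extension at the boundary). Both $\mx_{ij}$ and $s_i$ are linear functions of the matrix variable $\mx$, so convexity of the whole sum will follow from joint convexity of $g$ on $\R_{\geq 0}^2$ together with the fact that convexity is preserved under composition with affine maps and under nonnegative sums.

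The main (and only) nontrivial step is therefore to establish joint convexity of $g(x,y) = x\log(x/y)$. I would do this via the perspective construction: $h(t) = t\log t$ is convex on $\R_{\geq 0}$ (standard, $h''(t) = 1/t \geq 0$), and $g(x,y) = y\, h(x/y)$ is precisely the perspective of $h$, which is well known to be jointly convex on $\R_{>0} \times \R_{>0}$ and extends continuously to the nonnegative orthant. Alternatively one can verify this directly by writing down the $2\times 2$ Hessian
\begin{equation*}
\nabla^2 g(x,y) = \begin{pmatrix} 1/x & -1/y \\ -1/y & x/y^2 \end{pmatrix},
\end{equation*}
whose determinant is $0$ and whose trace is positive, hence it is positive semidefinite.

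Putting it together: $\bff(\mx)$ equals a linear function plus $\sum_{i,j} g(\mx_{ij}, s_i(\mx))$ where $(\mx_{ij}, s_i(\mx))$ is an affine function of $\mx$ and $g$ is jointly convex; convexity of $\bff$ then follows immediately. I expect the perspective/PSD-Hessian step to be the only substantive point, and it is a standard computation; the rest is just a careful rewriting to make the cancellation $\sum_{i,j}\mx_{ij}\log s_i = \sum_i s_i \log s_i$ visible.
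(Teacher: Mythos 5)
Your proof is correct and follows the same standard route as the cited source (this paper itself defers the proof to Lemma 4.16 of \cite{CSS19}): after separating the linear term, the cancellation $\sum_{i,j}\mx_{ij}\log[\mx\vones]_i=\sum_i[\mx\vones]_i\log[\mx\vones]_i$ rewrites the nonlinear part as $\sum_{i,j}\mx_{ij}\log\bigl(\mx_{ij}/[\mx\vones]_i\bigr)$, a nonnegative sum of perspectives of $t\log t$ composed with affine maps, hence convex. The perspective/PSD-Hessian verification is accurate, so no gaps.
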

\begin{theorem}[Theorem 4.17 in \cite{CSS19}]\label{thm:convrt}
	Given a profile $\phi \in \Phi^{n}$ with $k$ distinct frequencies, the optimization problem $\min_{\mx \in \bZfrac}\bff(\mx)$
	can be solved in time $\otilde(k^2|\bR|)$.
\end{theorem}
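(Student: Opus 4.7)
The plan is to solve $\min_{\mx \in \bZfrac}\bff(\mx)$ by reducing to a low-dimensional dual problem and applying an interior point / Newton scheme there. The key structural observation is that although the primal has $|\bR|\cdot(k+1) = O(\ell k)$ variables, the only non-trivial linear constraints are the $k$ column-sum equalities $[\bS^\top\cones]_j = \phi_j$ and the single probability-mass inequality $\vr^\top \bS \rones \leq 1$; the remaining constraints are just non-negativity, which we will handle by a log-barrier. Thus Lagrangian duality naturally gives a convex problem in only $k+1$ variables.

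Concretely, I would introduce multipliers $\lambda \in \R^k$ for the column-sum constraints and $\mu \geq 0$ for the mass constraint, and observe that, because the terms of $\bff(\mx)$ couple variables only within a single row (the offending $-[\mx\vones]_i \log [\mx\vones]_i$ still depends only on row $i$), the partial Lagrangian decouples into $\ell$ independent row-problems. Each row-problem is a $(k{+}1)$-dimensional convex optimization that can be solved in closed form (up to a one-dimensional scalar equation for the row sum $r_i = [\mx\vones]_i$) using the first-order condition $\partial \bff/\partial x_{ij} = -c_{ij} + \log(x_{ij}/r_i) = \lambda_j - \mu \ri$, which yields $x_{ij}^*(\lambda,\mu) = r_i \cdot \exp(c_{ij} + \lambda_j - \mu \ri)$ with $r_i$ determined by normalization. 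Plugging back gives a smooth concave dual objective $g(\lambda,\mu)$ on $\R^k \times \R_{\geq 0}$ whose gradient and Hessian assemble from the $\ell$ independent row closed-forms in $O(\ell k)$ and $O(\ell k^2)$ time respectively.

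Next I would run a second-order method (Newton with a self-concordant log-barrier for $\mu \geq 0$, or a standard short-step interior point method) on the $(k{+}1)$-dimensional dual. Each Newton iteration requires one Hessian assembly ($O(\ell k^2)$) and one $(k{+}1)\times(k{+}1)$ linear solve ($O(k^\omega) \leq O(k^3)$); standard self-concordance yields $\otilde(1)$ (or at worst $\otilde(\sqrt{k})$) iterations to drive the duality gap below $1/\poly(n)$, which is sufficient since an $\exp(-O(k\log n))$ additive error in $\bff$ feeds into only logarithmic-factor losses downstream in \Cref{lem:maximizer}. This gives total time $\otilde(\ell k^2) = \otilde(|\bR|\,k^2)$, matching the claim. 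Finally, one recovers a primal iterate from the near-optimal dual via the closed-form $x_{ij}^*(\lambda,\mu)$ and rescales minutely to restore exact feasibility in $\bZfrac$ without losing more than a $\poly(n)$ factor in the objective.

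The main obstacle I expect is the standard bookkeeping for interior point methods: bounding the initial duality gap and verifying that the iterates stay in a region where the self-concordant barrier analysis applies, in particular that the optimal $\mx^*$ has entries bounded away from $0$ by at least $1/\poly(n,|\bR|)$ so that $\log x_{ij}$ is well-conditioned. This can be argued by truncating entries smaller than $1/\poly(n)$ and absorbing the loss into the $\exp(-O(k\log n))$ slack; alternatively, since the statement is a restatement of \cite[Theorem 4.17]{CSS19}, the full quantitative analysis there can be invoked verbatim once one has verified the convex formulation above matches theirs.
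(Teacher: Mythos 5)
First, note that this paper does not actually prove the statement: it is imported wholesale as Theorem~4.17 of \cite{CSS19}, so there is no in-paper argument to compare your route against; your closing remark that one could ``invoke \cite{CSS19} verbatim'' is, in effect, exactly what the paper does. Judged as a standalone proof, your sketch contains a genuine gap, and it is precisely the feature of $\bff$ that the paper itself highlights in \Cref{lem:hom}: each row function $\bff_i$ is $1$-homogeneous. Fix multipliers $(\lambda,\mu)$ and look at the row-$i$ part of your partial Lagrangian; it is a $1$-homogeneous function of $\mx_i$ on the nonnegative orthant, so its infimum is either $0$ or $-\infty$. Consequently your stationarity condition does \emph{not} determine the row sum $r_i=[\mx\vones]_i$ ``by normalization'': summing $x_{ij}=r_i\exp(\mc_{ij}+\lambda_j-\mu\ri)$ over $j$ gives $\sum_j \exp(\mc_{ij}+\lambda_j-\mu\ri)=1$, which is a feasibility condition on $(\lambda,\mu)$, not an equation for $r_i$. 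The dual is therefore not a smooth concave function whose Hessian you can assemble row by row; it is a \emph{linear} objective over a region cut out by $\ell$ log-sum-exp constraints (one per row) plus $\mu\ge 0$. A Newton/barrier scheme that only barriers $\mu\ge 0$ is not well posed here, and the standard self-concordant barrier over all $\ell+1$ constraints gives $\otilde(\sqrt{\ell})$ iterations at $O(\ell k^2)$ Hessian cost each, i.e.\ $\otilde(\ell^{3/2}k^2)$, which overshoots the claimed $\otilde(|\bR|\,k^2)$. Your primal-recovery step inherits the same problem: because the primal objective is flat along row-scaling directions, a near-optimal dual point does not hand you the row sums via a closed form; recovering a feasible $\mx\in\bZfrac$ requires an extra argument (e.g.\ solving for the row scalings subject to the column and mass constraints, which is essentially the linear program underlying \Cref{lem:sparse}).

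The high-level instinct --- exploit that only $k+1$ constraints are non-trivial and work in a $(k{+}1)$-dimensional dual with $O(\ell k)$-time per-row evaluations --- is sound and can be repaired: with the dual written as a linear objective under $\ell$ smooth constraints, a cutting-plane method in dimension $k+1$ needs $\otilde(k)$ oracle calls, each costing $O(\ell k)$ to evaluate all row constraints and extract a violated/gradient cut, giving $\otilde(\ell k^2 + \mathrm{poly}(k))=\otilde(|\bR|\,k^2)$ as claimed; one must then still supply the primal-recovery and accuracy bookkeeping (entries bounded below by $1/\poly(n)$, duality-gap $\exp(-O(k\log n))$ sufficing downstream), which your last paragraph gestures at but does not carry out. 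As written, though, the ``smooth dual $+$ $\otilde(1)$ Newton steps $+$ closed-form primal'' chain does not go through, and the theorem is only established in this paper by citation to \cite{CSS19}.
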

The function $\bff(\mx)$ is separable in each row and we define following notation to capture it.
$$\bff_i(\mx_i)\defeq \sum_{j\in[0,k]}\left[-\mc_{ij}\mx_{ij}+\mx_{ij}\log\mx_{ij}\right]-[\mx\vones]_{i}\log\left([\mx\vones]_{i}\right) \quad \text{ and  }\quad \bff(\mx)=\sum_{i \in [1,\ell]}\bff_i(\mx_i)~.$$
The function $\bff_i(\mx_i)$ defined above is $1$-homogeneous and is formally shown next.
\begin{lemma}\label{lem:hom}
	For any fixed vector $c\in \R^{[0,k]}$, the function $\fnh(v) = \sum_{j\in[0,k]}\left[\vecc_{j} v_j+v_j\log v_j\right] -v^{\top}\1 \log v^{\top}\1$ is $1$-homogeneous, that is, $\fnh(\alpha\cdot v) = \alpha\cdot\fnh(v)$  for all $v \in \R_{\geq 0}^{[0,k]}$ and $\alpha \in \R_{\geq 0}$.
	\end{lemma}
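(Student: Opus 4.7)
The plan is to prove this by direct computation, expanding $\fnh(\alpha v)$ and using the identity $\log(\alpha x) = \log \alpha + \log x$ on both the $v_j \log v_j$ terms and the $v^\top \1 \log v^\top \1$ term. The only subtlety — and really the whole content of the lemma — is that the extra $\log \alpha$ contributions must cancel, which they do precisely because $\sum_j v_j = v^\top \1$.

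In more detail, I would first write
\begin{align*}
\fnh(\alpha v) &= \sum_{j \in [0,k]} \bigl[c_j (\alpha v_j) + (\alpha v_j)\log(\alpha v_j)\bigr] - (\alpha v)^\top \1 \,\log\bigl((\alpha v)^\top \1\bigr) \\
&= \alpha \sum_{j} c_j v_j + \alpha \sum_j v_j (\log \alpha + \log v_j) - \alpha v^\top \1 \,(\log \alpha + \log v^\top \1).
\end{align*}
Then I would isolate the $\log \alpha$ terms, which contribute $\alpha \log\alpha \sum_j v_j - \alpha \log \alpha \cdot v^\top \1$, and observe that this vanishes since $\sum_j v_j = v^\top \1$. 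What remains is exactly $\alpha\fnh(v)$, giving the claim. The case $\alpha = 0$ is handled by the convention $0 \log 0 = 0$, under which $\fnh(0) = 0$ trivially matches $0 \cdot \fnh(v)$.

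There is no real obstacle here: the proof is a two-line computation, and the lemma is essentially a bookkeeping statement saying that the entropy-like correction term $-v^\top \1 \log v^\top \1$ is precisely what is needed to upgrade the non-homogeneous function $\sum_j v_j \log v_j$ (which scales like $\alpha \log \alpha$ under $v \mapsto \alpha v$) into a $1$-homogeneous function. I would flag this interpretation briefly in the proof so the reader sees why the specific form of $\fnh$ is natural, since this homogeneity is exactly the property exploited later to write an LP whose basic feasible solutions give the desired sparse $\mx'$ in the proof of \Cref{lem:sparse}.
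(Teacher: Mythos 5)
Your proof is correct and follows essentially the same route as the paper: expand $\fnh(\alpha v)$, split the logarithms via $\log(\alpha x)=\log\alpha+\log x$, and cancel the $\log\alpha$ contributions using $\sum_j v_j = v^{\top}\1$. Your explicit handling of $\alpha=0$ via the $0\log 0=0$ convention is a minor addition the paper leaves implicit, but otherwise the arguments coincide.
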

\begin{proof}
	Consider any vector $v \in \R_{\geq 0}^{k+1}$ and scalar $\alpha \in \R_{\geq 0}$ we have,
	\begin{align*}
	\fnh(\alpha\cdot v) & =\sum_{j\in[0,k]}\left[\vecc_{j} (\alpha v_j)+(\alpha v_j)\log (\alpha v_j)\right]-(\alpha v)^{\top}\1 \log (\alpha v)^{\top}\1,\\
	& =\sum_{j\in[0,k]}\left[\vecc_{j} (\alpha v_j)+\alpha v_j\log v_j + \alpha v_j\log \alpha \right]-(\alpha v)^{\top}\1 \log v^{\top}\1 - (\alpha v)^{\top}\1 \log \alpha,\\
	& = \sum_{j\in[0,k]}\left[\vecc_{j} (\alpha v_j)+\alpha v_j\log v_j \right]-\alpha   v^{\top}\1 \log v^{\top}\1 = \alpha\cdot\fnh(v)~.\\
	\end{align*}
	The above derivation satisfies the conditions of the lemma and we conclude the proof.
\end{proof}
In the remainder of this section, we provide the proof of \Cref{lem:sparse} and the description of the algorithm $\sparse$ is included inside the proof. The \Cref{lem:sparse} in the notation of $\bff(\cdot)$ can be equivalently written as follows.
\begin{lemma}[\Cref{lem:sparse}]\label{lem:sparseeq}
	For any $\mx \in \bZfrac$, the algorithm $\sparse(\mx)$ runs in $\otilde(|\bR|~k^{\omega})$ time and returns a solution $\mx' \in \bZfrac$ such that $\bff(\mx') \leq \bff(\mx)$ and $\big|\{i \in [1,\ell]~|~[\mx'\1]_{i}  > 0\}\big| \leq k+1$.
\end{lemma}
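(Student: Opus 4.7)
The plan is to exploit the row separability and $1$-homogeneity of $\bff$ to reduce the problem to a small linear program whose basic feasible solutions automatically give the desired sparsity. Concretely, for the input $\mx \in \bZfrac$, I would search for $\mx'$ of the form $\mx'_{ij} = \alpha_i \mx_{ij}$ with $\alpha \in \R_{\geq 0}^{\ell}$. Requiring $\mx' \in \bZfrac$ translates to the $k$ linear equalities $\sum_i \alpha_i \mx_{ij} = \phi_j$ for $j \in [1,k]$, the single linear inequality $\sum_i \ri\, \alpha_i [\mx\vones]_i \le 1$, and the non-negativity constraints $\alpha_i \ge 0$. Note that $\alpha = \vones$ is feasible because $\mx \in \bZfrac$ to begin with.

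By Lemma~\ref{lem:hom}, each row functional $\bff_i$ is $1$-homogeneous on $\R_{\geq 0}^{[0,k]}$, so $\bff(D_\alpha \mx) = \sum_i \alpha_i \bff_i(\mx_i)$, which is \emph{linear} in $\alpha$. I therefore set up the LP
\begin{equation*}
\min_{\alpha \in \R_{\geq 0}^{\ell}} \sum_{i \in [1,\ell]} \alpha_i \bff_i(\mx_i) \quad \text{s.t.}\quad \sum_{i} \alpha_i \mx_{ij} = \phi_j\; \forall j \in [1,k],\ \ \sum_{i}\ri\,\alpha_i[\mx\vones]_i \le 1,
\end{equation*}
and let $\sparse(\mx)$ compute an optimal basic feasible solution $\alpha^{\star}$ and return $\mx' \defeq D_{\alpha^{\star}} \mx$. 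Since the feasible region is non-empty and bounded below (the coefficients $\bff_i(\mx_i)$ are finite and $\alpha_i \ge 0$), a BFS exists, and since there are only $k+1$ non-trivial constraints, at most $k+1$ coordinates of $\alpha^{\star}$ are non-zero. This gives the claimed sparsity, since rows of $\mx$ with $\alpha^{\star}_i=0$ become zero rows of $\mx'$. The guarantee $\bff(\mx') \le \bff(\mx)$ follows because $\alpha=\vones$ is feasible for the LP and the optimum is at most its objective value $\bff(\mx)$; the constraints guarantee $\mx' \in \bZfrac$.

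The one delicate point is the running time bound $\otilde(|\bR|\,k^{\omega})$. The LP has $\ell = |\bR|$ non-negative variables but only $k+1$ non-trivial constraints, so I would solve it using an interior-point or Lee–Sidford-style solver applied to the dual (which has $k+1$ variables), reading off a primal BFS at the end; each iteration is dominated by inverting/multiplying a $k \times k$ system at cost $O(k^{\omega})$ and touching all $\ell$ primal coordinates once, for total cost $\otilde(\ell\, k^{\omega})$. The main obstacle is being careful that the LP solver indeed returns a vertex (BFS) solution and not an arbitrary optimizer in the optimal face; this can be handled by the standard crossover/purification step, which for $k+1$ constraints costs only an additional $\otilde(\ell\, k^{\omega})$ and therefore does not change the asymptotic running time. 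Combining the three properties of $\mx'$ yields the statement of Lemma~\ref{lem:sparseeq}, which is exactly Lemma~\ref{lem:sparse} after rewriting $\bff = -\log \bg$.
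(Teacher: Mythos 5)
Your proposal is correct and takes essentially the same route as the paper: exploit the row-separability and $1$-homogeneity of $\bff$ to make the objective linear in the row scalings $\alpha$, note that $\alpha=\vones$ is feasible, and use the fact that the resulting LP has only $k+1$ non-trivial constraints so that a basic feasible solution has at most $k+1$ nonzero coordinates, giving both the sparsity and $\bff(\mx')\le\bff(\mx)$ with $\mx'\in\bZfrac$. The only difference is procedural: instead of solving the LP to optimality via the dual and then running a crossover step (the delicate point you flag), the paper simply purifies the exactly feasible point $\alpha=\vones$ to a BFS without ever increasing the objective, using at most $\ell$ pivot-like iterations each costing $O(k^{\omega})$, which directly yields the $\otilde(|\bR|\,k^{\omega})$ running time.
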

\begin{proof}
	Let $\ell\defeq |\bR|$ and fix $\mx \in \bZfrac$, consider the following solution $\mx'_{i}=\alpha_{i}\mx_{i}$ for all $i\in [1,\ell]$, where $\alpha \in \R_{\geq 0}^{[1,\ell]}$ and $\mx_{i},\mx'_{i}$ denote the vectors corresponding to the $i$'th row of matrices $\mx,\mx'$ respectively. By \Cref{lem:hom}, each function $\bff_{i}(\mx_{i})$ is 1-homogeneous and we get,
	\begin{align*}
	\bff(\mx')&=\sum_{i\in [1,\ell]} \bff_{i}(\mx'_{i})=\sum_{i\in [1,\ell]}  \bff_{i}(\alpha_{i}\mx_{i})=\sum_{i\in [1,\ell]} \alpha_{i} \bff_{i}(\mx_{i})~.
	\end{align*}
	Let $\alpha \in \R_{\geq 0}^{[1,\ell]}$ be such that the following conditions hold,
	\begin{equation}\label{eq:falpha}
	\sum_{i \in [1,\ell]} \alpha_{i} \mx_{i,j} = \phi_{j} \text{ for all } j\in [1,k] \text{ and }\sum_{i \in [1,\ell]}\alpha_{i} \ri [\mx \vones]_{i}\leq 1~.
	\end{equation}
	For the above set of equations, the solution $\alpha=\vones$ is feasible as $\mx \in \bZfrac$. Further for any $\alpha$ satisfying the above inequalities, the corresponding matrix $\mx'$ satisfies,
	$$\sum_{i \in [1,\ell]} \mx'_{i,j}=\sum_{i \in [1,\ell]} \alpha_{i} \mx_{i,j} = \phi_{j} \text{ for all } j\in [1,k] \text{ and }\sum_{i \in [1,\ell]}\ri [\mx' \vones]_{i}=\sum_{i \in [1,\ell]}\alpha_{i} \ri [\mx \vones]_{i}\leq 1~.$$
	Therefore $\mx' \in \bZfrac$ for all $\alpha \in \R_{\geq 0}^{[1,\ell]}$ that satisfy \Cref{eq:falpha}. In the remainder of the proof we find a sparse $\alpha$ that satisfies the conditions of the lemma. 
	
	Consider the following linear program.
	\begin{align*}
	\min{\alpha \in \R_{\geq 0}^{[1,\ell]}} & \sum_{i \in [1,\ell]} \alpha_{i} \bff_{i}(\mx_{i})~.\\
	\text{ such that, }& \sum_{i \in [1,\ell]} \alpha_{i} \mx_{i,j} = \phi_{j} \text{ for all } j\in [1,k] \text{ and }\sum_{i \in [1,\ell]}\alpha_{i} \ri [\mx \vones]_{i}\leq 1~.
	\end{align*}
	Note in the above optimization problem we fix $\mx \in \bZfrac$ and optimize over $\alpha$. Any basic feasible solution (BFS) $\alpha^{*}$ to the above LP, satisfies $|\{i\in [1,\ell]~|~ \alpha^{*}_{i}>0\}| \leq k+1$ as there are at most $k+1$ non-trivial constraints. Suppose we find a basic feasible solution $\alpha^*$ such that the corresponding matrix $\mx'_{i} = \alpha^{*}_{i}\mx_{i}$ for all $i \in [1,\ell]$ satisfies $\bff(\mx') \leq \bff(\mx)$, then such a matrix $\mx'$ is the desired solution that satisfies the conditions of the lemma. Therefore in the remainder of the proof, we discuss the running time to find such a BFS given a feasible solution to the LP. Finding a BFS to a linear program is quite standard; please refer to lecture notes~\cite{WX14,GO13} for further details. For completeness, in the following we provide an algorithm to find a desired BFS and analyze its running time.
	
	Leveraging these insights, we design the following iterative algorithm. In each iteration $i$ we maintain a set  $S_i \subseteq \R^{k + 1}$ of $1\leq k_i \leq k+1$ linearly independent rows of matrix $\mx$. We update the solution $\alpha$ and try to set a non-zero coordinate of it to value zero while not increasing the objective.  
	Our algorithm starts with $k_i = 1$ and $S_i$ to be the set containing an arbitrary row of $\mx$ in iteration $i=1$. The next iteration is computed by considering an arbitrary row $r$ of matrix $\mx$ that corresponds to a non-zero coordinate in $\alpha$. Letting $\ma_i \in \R^{(k+1) \times k_i}$ be the matrix where the columns are the vectors in $S_i$ we then consider the linear system $\ma_i^\top \ma_i x = r$. Whether or not there is such a solution can be computed in $O(k^{\omega})$, where $\omega < 2.373$ is the matrix multiplication constant \cite{Williams12,Gall14,AW20} using fast matrix multiplication as in this time we can form the $(k + 1) \times (k + 1)$ matrix $\ma_i^\top \ma_i$ directly and then invert it. If this system has no solution we let $S_{i + 1} = S_i \cup r$ and proceed to the next iteration as the lack of a solution proves that $S_i \cup r$ are linearly independent (as $S_i$ is linearly independent). Otherwise, we consider the vector $\alpha'$ in the null space of the transpose of $\mx$ formed by setting $\alpha'_i$ to the value of $x_j$ for the associated rows and setting $\alpha'_i$ for the row corresponding to row $r$ to be $-1$. As $x$ is a solution to  $\ma_i^\top \ma_i x = r$, clearly $\mx^\top \alpha' = 0$. Now consider the solution $\alpha+c\alpha'$ for some scaling $c$. Since the objective and constraints are linear, there exists a direction, that is, sign of $c$ such that the objective is non-increasing and the solution $\alpha+c\alpha'$ satisfies all the constraints (\Cref{eq:falpha}). We start with $c=0$ and keep  increasing it in the direction where the objective in non-increasing till one of the following two conditions hold: either a new coordinate in the solution $\alpha+c\alpha'$ becomes zero or the objective value of the LP is infinity. In the first case, we update our current solution $\alpha$ to $\alpha+c\alpha'$ and repeat the procedure. As the goal our algorithm is to find a sparse solution, we fix the co-ordinates in $\alpha$ that have value zero and never change (or consider) them in the later iterations of our algorithm. We repeat this procedure till all the non-zero co-ordinates in $\alpha$ are considered at least once and the solution $\alpha$ returned at the end corresponds to a BFS that satisfies the desired conditions. As the total number of rows is at most $\ell$, our algorithm has at most $\ell$ iterations and each iteration takes only $O(k^{\omega})$ time (note that we only update $O(k)$ coordinates in each iteration). Therefore the final running time of the algorithm $\sparse$ is $\otilde(\ell k^{\omega})$ time and we conclude the proof. 

\end{proof}

\subsection{Remaining Parts of the Proof for \Cref{thm:matrixround}}\label{app:matrixround}

We first finish the proof of \cref{lem:mod}. That only leaves us with proving \cref{lem:connected-alg}.
\begin{proof}[Proof of \cref{lem:mod} in the general case]
	Since the input graph is arbitrary, we have no guarantee about edge-connectivity. We will show that we can remove $O(k\card{V})$ edges from $G$ so that the remaining subgraph is a vertex-disjoint union of $6k$-edge-connected induced subgraphs. To do this, look at the connected components of $G$. Either they are all $6k$-edge-connected or at least one of them has a cut with $<6k$ edges. Moreover we can check this in polynomial time (and find violating cuts if there are any) by a global minimum cut algorithm \cite{Karger00}. If a component is not $6k$-edge-connected, remove all edges of the small cut, and repeat. Every time we remove the edges of a cut, the number of connected components increases by $1$, so this can go on for at most $O(\card{V})$ iterations. In each iteration, at most $6k$ edges are removed, so the total number of removed edges is $O(k\card{V})$.
	
	So by removing $O(k\card{V})$ edges, we have transformed $G$ into a vertex-disjoint union of $6k$-edge-connected graphs. We simply apply the already-proved case of \cref{lem:mod} to each of these components to get our desired result for the original graph $G$.
\end{proof}
In the remainder of this section we prove \cref{lem:connected-alg}. We do this by showing how to make the proof of \cref{lem:connected} due to \cite{thomassen2014graph} algorithmic. \cite{thomassen2014graph} reduced \cref{lem:connected} to an earlier result by \cite{LMTWZ13} which we state below.
\begin{lemma}[{\cite[Theorem 1.12]{LMTWZ13}}]\label{lem:orientation}
	Let $k\geq 3$ be an odd integer and $G=(V,E)$ a $(3k-3)$-edge connected undirected graph. For any given $\beta:V\to \set{0,\dots,k-1}$ where $\sum_{v}\beta(v)\equiv 0\pmod{k}$, there is an orientation of $G$ which makes $\deg_{\mathrm{out}}(v)-\deg_{\mathrm{in}}(v)$ equal to $\beta(v)$ modulo $k$ for every vertex $v$.
\end{lemma}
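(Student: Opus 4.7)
The plan is to reduce the prescribed-orientation problem to a modular edge-reversal problem, then handle that via an inductive argument using Mader's edge-splitting lemma. First, fix an arbitrary initial orientation of $G$, obtaining an imbalance function $\alpha:V\to\mathbb{Z}_k$ with $\sum_v\alpha(v)\equiv 0\pmod k$. Reversing the orientation of an edge $uv$ changes the imbalance at $u$ by $-2$ and at $v$ by $+2$. Since $k$ is odd, $2$ is invertible in $\mathbb{Z}_k$, so finding the desired orientation is equivalent to finding a subset $F\subseteq E$ of edges to reverse such that for every vertex $v$,
\[
\deg_{\mathrm{in}}^{F}(v)-\deg_{\mathrm{out}}^{F}(v)\equiv \gamma(v)\pmod k,\qquad \gamma(v):=2^{-1}\bigl(\beta(v)-\alpha(v)\bigr)\bmod k.
\]
Equivalently, we seek a $\mathbb{Z}_k$-valued function on the edges (oriented arbitrarily) with prescribed boundary $\gamma$, and the boundary condition $\sum_v\gamma(v)\equiv 0\pmod k$ follows from the hypothesis $\sum_v\beta(v)\equiv 0\pmod k$.

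Next, I would induct on $|E|$, using Mader's splitting-off theorem to reduce to graphs on fewer edges while preserving sufficient edge-connectivity. Mader's theorem asserts that if $v$ is a non-cut vertex of degree $\ge 4$ in a graph, then one can pair up two edges $uv,vw$ incident to $v$ and replace them by a single edge $uw$ (a ``split'') in a way that preserves the edge-connectivity between every pair of vertices distinct from $v$. In our setting, splitting off the pair $uv,vw$ corresponds to committing to a particular pattern of orientations on this pair (one is reversed and one is not, or neither/both are), which can be absorbed by updating $\gamma(u)$ and $\gamma(w)$ appropriately while decrementing the effective constraint at $v$ by the local contribution. By repeatedly splitting off pairs at $v$, we reduce $\deg(v)$ below the connectivity threshold, at which point $v$ can be handled locally and removed, yielding a smaller $(3k-3)$-edge-connected graph on which the inductive hypothesis applies. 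A base case handles graphs small enough that direct verification suffices (e.g.\ graphs on a bounded number of vertices, or two-vertex multigraphs where the claim is a direct counting check).

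The hard part will be choosing the splitting pairs consistently with the modular constraints. Mader's theorem guarantees that \emph{some} admissible pair exists, but the inductive step also requires an admissible pair whose assigned orientation can be chosen to make progress on $\gamma$ without creating a contradiction at $v$. This is where the specific constant $3k-3$ enters: one needs enough ``room'' in the edge-connectivity to avoid every obstructing choice. The intended argument isolates a set of forbidden configurations (bad splittings that would destroy the inductive hypothesis on $\gamma$) and shows, by a counting/pigeonhole argument on pairs of incident edges at $v$, that strictly fewer than $\deg(v)$ pairs are forbidden once $\deg(v)\ge 3k-3$; hence an admissible pair compatible with the modular constraint exists.

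An alternative, more algebraic, route would be to phrase the problem as surjectivity of the coboundary map $\partial:C_1(G;\mathbb{Z}_k)\to C_0^0(G;\mathbb{Z}_k)$ restricted to edge functions that can arise from orientations, and to prove surjectivity by showing that under $(3k-3)$-edge-connectivity the cokernel must be trivial. This route avoids explicit splitting but ultimately requires the same sharp analysis of cuts, and the splitting proof is typically cleaner because it keeps the combinatorics explicit; I would therefore pursue the splitting-based induction as the primary strategy.
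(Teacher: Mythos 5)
The lemma you are proving is an external result (Theorem 1.12 of \cite{LMTWZ13}) that the paper cites rather than reproves; the paper only recalls the structure of the original argument — a strengthened statement (\Cref{lem:induct}) with a special vertex $z_0$ whose incident edges are pre-oriented and with cut-dependent lower bounds $(2k-2)+\abs{\tau(S)}$, proved by a minimal-counterexample induction — in order to make it constructive. Your proposal takes a genuinely different route (reduce to a modular edge-reversal problem, then induct via Mader splitting-off), but it has a real gap: the entire difficulty is concentrated in the step you defer, namely the claim that once $\deg(v)\geq 3k-3$ a counting/pigeonhole argument yields an admissible splitting pair compatible with the modular constraints. This is asserted, not proved, and there is no known argument of this local form that achieves the sharp constant $3k-3$; indeed the statement is the resolution of the weak circular-flow conjecture, for which Thomassen's earlier proof needed connectivity on the order of $k^2$, and the improvement to $3k-3$ in \cite{LMTWZ13} required the global, cut-by-cut strengthening via $\tau$ rather than a degree argument at a single vertex.

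There is also a concrete structural obstruction to the reduction step as you describe it. If you split off pairs of edges at $v$ and then delete $v$, an orientation of the reduced graph lifts to $G$ only by orienting each split pair as a directed path through $v$, so the lifted orientation necessarily gives $\deg_{\mathrm{out}}(v)-\deg_{\mathrm{in}}(v)$ determined entirely by the (few) unsplit edges at $v$; in particular, if all edges at $v$ are split off, the imbalance at $v$ is forced to be $0$, and realizing a general prescription $\beta(v)\not\equiv 0$ requires keeping up to $k-1$ edges at $v$ unsplit, after which $v$ cannot be ``handled locally and removed.'' This is exactly the issue the strengthened hypothesis of \Cref{lem:induct} is designed to handle: one vertex $z_0$ has its incident edges pre-oriented (so its prescription is absorbed exactly), and the induction proceeds by contracting sets violating the $\tau$-dependent bound, not by splitting off at a vertex. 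Without either adopting that strengthened statement or actually carrying out the forbidden-pair count (and explaining how the orientation freedom at $v$ is retained after splitting), the inductive step in your sketch does not close.
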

Here an orientation is an assignment of one of the two possible directions to each edge, and $\deg_{\mathrm{out}}$ and $\deg_{\mathrm{in}}$ count outgoing and incoming edges of a vertex in such an orientation. We simply note that the reduction of \cref{lem:connected} to \cref{lem:orientation}, as stated in \cite{thomassen2014graph}, is already efficient. This is done by a simple transformation on $f$ from \cref{lem:connected} to get $\beta$, and at the end a subgraph is extracted from an orientation by considering edges oriented from one side to the other. Since the reduction is efficient, we simply need to prove \cref{lem:orientation} can be made efficient.
\begin{lemma}\label{lem:orientation-alg}
	There is a polynomial time algorithm that outputs the orientation of \cref{lem:orientation}.
\end{lemma}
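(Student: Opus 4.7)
The plan is to make the inductive proof of \cref{lem:orientation} given in \cite{LMTWZ13} constructive. That proof does not use bare $(3k-3)$-edge-connectivity directly; instead it establishes a more refined ``partitioned'' connectivity inequality: for every partition of $V$ into parts $V_1,\dots,V_t$, the number of inter-part edges must exceed a certain linear function of $t$, $k$, and the residues $\beta(V_i)\pmod k$. The $(3k-3)$-edge-connectivity of $G$ is exactly what is needed to get the base case of this refined inequality started, and the inductive argument proceeds by removing or contracting a small local structure (an edge, a short cycle, or a carefully chosen subgraph on which the orientation is fixed), verifying that the refined inequality is preserved, and recursing. The existential statement becomes algorithmic provided every step in this recursion can be carried out in polynomial time.

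First, I would transcribe the proof into an explicit recursive procedure whose single inductive step either (i) orients a constant-size local structure and deletes/contracts the corresponding edges, producing a strictly smaller graph on which the same partitioned connectivity hypothesis holds, or (ii) reduces $\beta$ to a shifted target via flipping edges along a path or cycle whose existence is guaranteed by the refined hypothesis. Since each call strictly decreases a non-negative integer potential (for instance, the number of unoriented edges plus $\sum_v \beta(v)$), the recursion depth is polynomial, and each step only needs to find the local structure and then update $\beta$ on the relevant vertices, which is clearly polynomial once the structure is in hand.

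The main obstacle is the verification step: after each modification, the procedure must test whether the refined partitioned connectivity inequality still holds, and, if it does not, produce a certifying partition on which to recurse. A direct search over all partitions is exponential, so I would prove a structural lemma analogous to the remark that appeared for \cref{lem:connected-alg}: any partition that violates the refined inequality must be induced by edge cuts of size at most a constant (depending only on $k$) times the global minimum cut of the current graph. Near-minimum cuts within a constant factor of the minimum can be enumerated in polynomial time by Karger's algorithm~\cite{Karger96} (there are only polynomially many of them), so checking the refined inequality and, if necessary, extracting a certifying partition reduces to scanning this enumerated list and solving a small combinatorial problem on its elements.

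Combining these pieces: initialize with an arbitrary orientation of $G$; in each iteration use Karger's near-minimum-cut enumeration together with the structural lemma to test the refined partitioned connectivity inequality, identify the local structure prescribed by the inductive step of \cite{LMTWZ13}, fix its orientation, update $\beta$ on its endpoints modulo $k$, and recurse on the smaller instance. The per-iteration work is polynomial, the number of iterations is polynomial, and on termination the orientation satisfies $\deg_{\mathrm{out}}(v)-\deg_{\mathrm{in}}(v)\equiv \beta(v)\pmod k$ at every vertex, proving \cref{lem:orientation-alg} and hence, via the efficient reduction of \cite{thomassen2014graph}, \cref{lem:connected-alg}.
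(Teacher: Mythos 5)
Your overall plan coincides with the paper's: follow the induction of \cite{LMTWZ13} behind \cref{lem:orientation} step by step, note that every reduction is already efficiently implementable except for locating a set violating the strengthened cut inequality, and find such a set by enumerating near-minimum cuts via \cite{Karger96}. However, the step you defer to an unproven ``structural lemma'' is precisely the content of the paper's proof, and as you state it the lemma does not suffice. Karger's enumeration is polynomial only when the cut-size threshold is an \emph{absolute}-constant multiple of the minimum cut; you allow ``a constant depending only on $k$'' times the minimum cut, which yields $n^{O(k)}$ candidate cuts, and $k$ grows with the instance (it is $\min(s,t)$ in the application to \cref{thm:matrixround}). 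What actually makes the enumeration polynomial is that condition 2 of \cref{lem:induct} already forces every relevant cut to have at least $2k-2$ edges, while a violation of the strict inequality can only occur for cuts with fewer than $3k$ edges (since $|\tau(\cdot)|\leq k$), so the ratio to the minimum cut is an absolute constant.

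Moreover, this connectivity lower bound has two exceptions, the singletons $\{v_0\}$ and $\{z_0\}$ of \cref{lem:induct}, which may have small degree and hence make the global minimum cut of the current graph tiny, breaking the near-minimum-cut argument. The paper neutralizes them by enumerating one edge adjacent to each of $v_0$ and $z_0$, contracting that pair, and listing all cuts of size $<3k$ in each contracted graph: any violating cut that does not isolate $v_0$ or $z_0$ survives some choice of the two contracted edges. Your proposal never addresses this caveat, and it also recasts the induction hypothesis of \cite{LMTWZ13} as a partition condition rather than the per-cut condition with the pre-oriented special vertex $z_0$ and the function $\tau$ of \cref{def:tau}; the latter slip is cosmetic, but without the absolute-constant-ratio argument and the $v_0,z_0$ contraction trick the verification step is not shown to run in polynomial time, and that verification is exactly what \cref{lem:orientation-alg} asserts.
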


To obtain this algorithm, our strategy is to make the steps of the proof presented in \cite{LMTWZ13} (efficiently) constructive. \cite{LMTWZ13} prove \cref{lem:orientation} by generalizing the statement and using a clever induction. To state this generalization, we need a definition from \cite{LMTWZ13}.
\begin{defn}[\cite{LMTWZ13}]\label{def:tau}
	Suppose that $k$ is an odd integer, and $G=(V, E)$ is an undirected graph. For a given function $\beta:V\to \set{0,\dots,k-1}$, we define a set function $\tau:2^{V}\to \set{0,\pm 1,\dots,\pm k}$ by the following congruences
	\begin{align*}
		\tau(S) & \equiv\sum_{v\in S} \beta(S) \pmod{k}\\
		\tau(S) & \equiv\sum_{v\in S} \deg(S) \pmod{2}\\
	\end{align*}
\end{defn}
The two given congruences uniquely determine $\tau(S)$ modulo $2k$; this in turn is a unique element of $\set{0,\pm 1,\dots,\pm k}$, except for $k$ and $-k$ which are the same value modulo $2k$. The choice of which value to take in this case is largely irrelevant, as we will mostly be dealing with $\abs{\tau(\cdot)}$. Note that $\tau(S)$ is the same, modulo $2k$, as the number of edges going from $S$ to $S^c$ minus the number of edges going from $S^c$ to $S$ in any valid orientation as promised by \cref{lem:orientation}.

The definition of $\tau$ is used to give a generalization of \cref{lem:orientation} that is proved by induction.
\begin{lemma}[{\cite[Theorem 3.1]{LMTWZ13}}]\label{lem:induct}
	Let $k$ be an odd integer, $G=(V, E)$ an undirected graph on at least $3$ vertices, and $\beta:V\to\set{0,\dots,k-1}$ be such that $\sum_v \beta(v)\equiv 0\pmod{k}$. Let $z_0$ be a ``special'' vertex of $G$ whose adjacent edges are already pre-oriented in a specified way. Assume that $\tau$ is defined as in \cref{def:tau} and $V_0=\set{v\in V-\set{z_0}\mid \tau(\set{v})=0}$; let $v_0$ be a vertex of minimum degree in $V_0$. If the following conditions are satisfied, then there is an orientation of edges, matching the pre-orientation of $z_0$, for which $\deg_{\mathrm{out}}(v)-\deg_{\mathrm{in}}(v)\equiv \beta(v) \pmod{k}$ for every $v$.
	\begin{enumerate}
		\item $\deg(z_0)\leq (2k-2)+\abs{\tau(\set{z_0})}$,
		\item $\card{E(S, S^c)}\geq (2k-2)+\card{\tau(S)}$ for every set $S$ where $z_0\notin S$, and $S\neq \emptyset, \set{v_0}, V-\set{z_0}$.
	\end{enumerate}
\end{lemma}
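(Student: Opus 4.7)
The plan is to prove this by induction on $|E(G)|$, using the classical \emph{splitting-off} technique of Lovász and Mader. The base case covers graphs with very few edges, where the pre-orientation at $z_0$ combined with condition 2 forces the rest of the orientation, so correctness can be checked directly. For the inductive step I would focus all activity at the low-degree vertex $v_0\in V_0$ promised by the hypothesis. Since $v_0\in V_0$, we have $\tau(\{v_0\})=0$, which by \cref{def:tau} forces $\deg(v_0)$ to be even and $\beta(v_0)\equiv 0 \pmod k$. The idea is to pair up the edges incident to $v_0$, split off each pair $\{uv_0,v_0 w\}$ by replacing it with a single new edge $uw$, and then delete $v_0$ from the resulting graph $G'$. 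By induction $G'$ admits a valid orientation, and one lifts it back to $G$ by orienting the two edges of each split pair so that together they contribute $+1$ in and $+1$ out at $v_0$, matching the direction of $uw$ in $G'$. This automatically achieves $\deg_{\mathrm{out}}(v_0)-\deg_{\mathrm{in}}(v_0)=0\equiv \beta(v_0) \pmod k$ and preserves the prescribed imbalances at every other vertex.

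The crux of the argument, and the step I expect to be hardest, is proving that a \emph{good} pairing at $v_0$ always exists: one for which $G'$ together with the restricted $\beta'=\beta|_{V-\{v_0\}}$ (still summing to $0$ modulo $k$) satisfies the two hypotheses of the lemma. Concretely, for every $S$ with $z_0\notin S$ not excluded by the statement, the inequality $|E_{G'}(S,S^c)|\geq (2k-2)+|\tau(S)|$ must continue to hold, and condition 1 at $z_0$ must be maintained. Each split-off operation changes the edge count across a cut by $0$ or $-2$, so bad pairings correspond to tight cuts. To rule these out, I would use a Mader-style exchange argument: assume for contradiction that every admissible pair is blocked by some tight cut, select two such tight cuts, and apply condition 2 to their union and intersection (or symmetric difference), using submodularity of $|E(S,S^c)|$ and the modular arithmetic of $\tau$ to derive a contradiction with either the minimality of $\deg(v_0)$ inside $V_0$ or with condition 2 itself.

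Finally, I would handle the degenerate cases: when a pair produces a loop $u=w$, simply delete it; when one of the edges to be split is already pre-oriented because it is incident to $z_0$, enforce the pairing to respect that orientation and update condition 1 accordingly; and when $V-\{v_0\}$ shrinks to two vertices, verify the claim directly rather than recursing. The iteration strictly decreases $|E(G)|$, so induction terminates. To make the argument constructive, as needed for \cref{lem:orientation-alg}, I would observe that each candidate pair at $v_0$ can be tested by checking the cut conditions in $G'$ via a polynomial number of min-cut computations — together with the enumeration of near-minimum cuts from \cite{Karger96} this yields a polynomial-time procedure, mirroring the strategy already used to make \cref{lem:connected} algorithmic.
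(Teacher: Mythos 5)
This statement is not proved in the paper at all: it is quoted verbatim as Theorem~3.1 of \cite{LMTWZ13}, and the paper only sketches that the original proof is a minimal-counterexample induction whose key move is to take a non-singleton set $S$ that is (near-)tight for condition~2, contract $S$ and contract $S^c$, solve both smaller instances, and glue the orientations; the paper's own contribution is merely to make that argument constructive. Your proposal is therefore an attempt to reprove a deep external theorem by a genuinely different route (complete splitting-off at $v_0$ followed by deletion of $v_0$), and as written it has two concrete gaps.

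First, the existence of a ``good pairing'' at $v_0$ is not a technical detail to be dispatched by a Mader-style exchange; it can simply fail under the stated hypotheses. Condition~2 allows tight cuts, i.e.\ sets $S$ with $\card{E(S,S^c)}=(2k-2)+\abs{\tau(S)}$. Take $\deg(v_0)=2$ (perfectly legal, since $\set{v_0}$ is exempt from condition~2) with both neighbours $u,w$ inside such a tight set $S$ with $z_0\notin S$. The only complete pairing at $v_0$ creates the edge $uw$ inside $S$, dropping $\card{E(S,S^c)}$ by $2$ while $\tau(S)$ is unchanged, so the inductive hypothesis fails for $G'$; there is no alternative pairing to exchange to. This is precisely why the proof in \cite{LMTWZ13} begins by eliminating near-tight non-singleton sets via contraction rather than by splitting off. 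Second, your induction has no reduction step for a large class of instances: the splitting operation only ever removes vertices $v$ with $\tau(\set{v})=0$, i.e.\ $\beta(v)\equiv 0\pmod k$ and $\deg(v)$ even, but the lemma does not guarantee $V_0\neq\emptyset$ (e.g.\ $\beta\equiv 1$ everywhere), and vertices with $\tau(\set{v})\neq 0$ can never be removed by your operation. So the induction on $\card{E}$ cannot reach the base case in general; the real argument must also perform reductions at the pre-oriented vertex $z_0$ and handle arbitrary $\beta$, which your sketch does not address beyond a passing remark. The parity bookkeeping you do carry out (splitting preserves $\tau$, cut sizes drop by $0$ or $2$, lifting the orientation through $v_0$ gives imbalance $0\equiv\beta(v_0)$) is correct, but the two missing pieces above are the heart of the theorem.
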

Here $E(S,S^c)$ is the set of edges between $S$ and $S^c$. Note that we always have $\abs{\tau(\cdot)}\leq k$. So a $(3k-3)$-edge-connected graph automatically satisfies condition 2 in \cref{lem:induct}. \Cref{lem:connected} is proved by adding an isolated vertex $z_0$ and setting $\beta(z_0)=0$, for which condition 1 is automatically satisfied.

The reason behind this generalization is the ability to prove it by induction. The authors of \cite{LMTWZ13} state this induction in the form of proof by contradiction. They consider a minimal counterexample, and argue the existence of a smaller counterexample. We do not state all of their proof again here, but note that all processes used to produce smaller counterexamples are readily efficiently implementable, except for one. In the proof of Theorem 3.1 in \cite{LMTWZ13}, in Claim 1, the authors argue that for non-singleton $S$ the inequality in condition 2 of \cref{lem:induct} cannot be strict, or else the size of the problem can be reduced. They formally prove that a smallest counterexample must satisfy for $\card{S}\geq 2$,
\begin{equation}\label{eq:falsify}  \card{E(S, S^c)}\geq 2k+\abs{\tau(S)}>(2k-2)+\abs{\tau(S)}. \end{equation}
In case a non-singleton does not satisfy the above inequality, the authors produce two smaller instances, once by contracting $S$ into a single vertex, and once by contracting $S^c$, and combining the resulting orientations together for all of $G$. The main barrier in making this into an efficient algorithm is \emph{finding} the set $S$ that violates the inequality. A priori, it might seem like an exhaustive search over all subsets $S$ is needed, but we show that this is not the case.

We now show how to make this part algorithmic.
\begin{lemma}
	Suppose that the graph $G$ satisfies the conditions of \cref{lem:induct}. Then there is a polynomial time algorithm which produces a list of sets $S_1,\dots,S_m$ for a polynomially bounded $m$, such that any violation of \cref{eq:falsify} must happen for some $S_i$.
\end{lemma}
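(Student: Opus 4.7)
The plan is to reduce the hunt for sets violating \eqref{eq:falsify} to enumerating near-minimum $s$-$t$ cuts via \cite{Karger96}. First I would observe that any $S$ violating \eqref{eq:falsify} satisfies $|E(S, S^c)| < 2k + |\tau(S)| \leq 3k$, since $|\tau(\cdot)| \leq k$ by \Cref{def:tau}. On the other hand, condition 2 of \Cref{lem:induct} forces $|E(T, T^c)| \geq 2k-2$ for every $T$ with $z_0 \notin T$ and $T \neq \emptyset, \{v_0\}, V - \{z_0\}$. Combining these, every violating cut has weight in the range $[2k-2, 3k)$, so it lies within a factor $3k/(2k-2) < 3$ (for $k\geq 2$) of a ``restricted'' minimum cut, i.e.\ a cut that avoids the exceptional singleton partitions that condition 1 is allowed to make cheap.

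Next I would expose this bounded ratio inside a standard min-cut framework by iterating over ordered pairs $(u, v)$ with $u \in V - \{z_0, v_0\}$ and $v \in V - \{z_0, u\}$. For each such pair, form the graph $G^{(v, z_0)}$ by contracting $v$ and $z_0$ into a single terminal $t$, and enumerate all $u$-$t$ cuts in $G^{(v, z_0)}$ of weight strictly less than $3k$ via the near-minimum $s$-$t$ cut enumeration in the style of \cite{Karger96}. Each such cut $(A, A^c)$ pulls back to a partition of $V$ with $u \in A$ and $v, z_0 \in A^c$; because $u \neq v_0$ we have $A \neq \{v_0\}$, and because $v \in A^c$ we have $A \neq V - \{z_0\}$, so condition 2 gives $|E(A, A^c)| \geq 2k-2$. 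Consequently the minimum $u$-$t$ cut in $G^{(v, z_0)}$ is at least $2k-2$, the enumeration ratio is bounded by the constant $3k/(2k-2)$, and we obtain $n^{O(1)}$ cuts per pair in $n^{O(1)}$ time. We output the union of the pullback sides $A$, across all $O(n^2)$ pairs, as the list $S_1, \ldots, S_m$.

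For correctness, any violating $S$ has $|S| \geq 2$, $z_0 \notin S$, and $S \neq V - \{z_0\}$; hence $S$ contains some $u \in V - \{z_0, v_0\}$ (at most one element of $S$ can be $v_0$), and $S^c$ contains some $v \neq z_0$. For this choice of $(u, v)$ the partition $(S, S^c)$ is exactly one of the $u$-$t$ cuts that gets enumerated in the iteration for $(u,v)$. The step I expect to be the main obstacle is the ratio bound itself: applying Karger's enumeration naively to $G$ or to a single contraction fails whenever $\{z_0\}$ or $\{v_0\}$ is a small cut, which condition 1 explicitly permits. The double contraction of $v$ and $z_0$, combined with the restriction $u \neq v_0$, is precisely what forces both potentially-small singleton cuts out of the enumeration space and legitimizes the uniform lower bound $2k-2$ on the min $u$-$t$ cut in $G^{(v, z_0)}$; everything else (pullback, union, polynomial count) is then routine.
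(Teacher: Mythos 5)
There is a genuine gap at precisely the step you flag as the crux. Your enumeration tool does not exist as invoked: Karger's cut-counting and enumeration result \cite{Karger96} is about cuts whose weight is within a constant factor of the \emph{global} minimum cut, not about $s$-$t$ cuts measured against the $s$-$t$ minimum cut. Lower-bounding the minimum $u$-$t$ cut of $G^{(v,z_0)}$ by $2k-2$ does not make the number of $u$-$t$ cuts of weight $<3k$ polynomial, and as a general statement it is false: in a graph consisting of $c$ internally disjoint $u$-$t$ paths of length $m$, the minimum $u$-$t$ cut is $c$ and there are $(m+1)^{c}$ cuts achieving it, which is superpolynomial once $c$ grows with $n$. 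What actually saves the count in the present setting is the \emph{global} near-$(2k-2)$-edge-connectivity coming from condition 2 of \cref{lem:induct}, and your argument never brings that to bear where it is needed. Concretely, after contracting $v$ and $z_0$ into $t$, the singleton cut $\{v_0\}$ generally survives (it is excluded from condition 2, and the lemma gives no lower bound on $\deg(v_0)$, which may be $1$), so the global minimum cut of $G^{(v,z_0)}$ can be far below $2k-2$; Karger's theorem then only yields a bound like $n^{O(k)}$ for cuts of weight $<3k$, and the claim ``we obtain $n^{O(1)}$ cuts per pair'' is unsupported. Restricting attention to cuts that separate $u$ from $t$ does not help, because the counting argument you would need is about the ambient cut structure, not about which cuts you keep after the fact.

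The paper avoids this by eliminating \emph{both} potentially small singletons before invoking \cite{Karger96}: it enumerates over pairs of edges $e_1,e_2$ adjacent to $v_0$ and $z_0$ and contracts the endpoints of each, so every cut of the contracted graph corresponds to a cut of $G$ having neither $v_0$ nor $z_0$ as a singleton and hence has weight $\geq 2k-2$ by condition 2; the contracted graph is therefore genuinely $(2k-2)$-edge-connected, Karger's global enumeration applies with ratio $3k/(2k-2)=O(1)$, and any violating $S$ (which has $|S|\geq 2$ and is not $V-\{z_0\}$) survives some choice of $e_1,e_2$. Your reduction to cuts of weight $<3k$ via $|\tau(\cdot)|\leq k$ and your coverage argument (every violating $S$ contains some $u\neq v_0,z_0$ and misses some $v\neq z_0$) are fine, and your scheme could be repaired, e.g.\ by additionally contracting an edge incident to $v_0$ (or relating cuts of $G^{(v,z_0)}$ to cuts of the graph with $v_0$ contracted into a neighbor, at the cost of a slightly larger weight threshold), but as written the polynomial bound on $m$ — the heart of the lemma — is not established.
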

\begin{proof}
	Our high-level strategy is to use the fact that condition 2 of \cref{lem:induct} implies $G$ is already sufficiently edge-connected. If $z_0, v_0$ did not exist, condition 2 would imply that $G$ is $(2k-2)$-edge-connected. On the other hand any violation of \cref{eq:falsify} can only happen when $\card{E(S, S^c)}<2k+k=3k$. So it would be enough to simply produce a list of all near-minimum-cuts $S$ with $\card{E(S, S^c)}< 3k$. If $G$ was $(2k-2)$-edge-connected, we could appeal to results of \cite{Karger96}, who proved that for any constant $\alpha$, the number of cuts of size at most $\alpha$ times the minimum cut is polynomially bounded and all of them can be efficiently enumerated.
	
	The one caveat is the existence of $v_0, z_0$, which might make $G$ not $(2k-2)$-edge-connected. Note that the only cuts that can potentially be ``small'' are the singletons $\set{v_0}, \set{z_0}$. We can solve this problem by contracting the graph. We enumerate over the edges $e_1, e_2$ that are adjacent to $v_0, z_0$, and for every choice of $e_1, e_2$, we produce a new graph by contracting the endpoints of $e_1$ followed by contracting the endpoints of $e_2$. If a cut $(S, S^c)$ does not have $v_0, z_0$ as a singleton on either side, there must be a choice of $e_1, e_2$ that do not cross the cut, which means that the cut ``survives'' the contraction. Note that the contracted graph is always $(2k-2)$-edge-connected, so we can proceed as before and produce a list of all of its cuts of size $<3k$. Taking the union of the list of all such cuts for all choices of $e_1, e_2$ produces the desired list we are seeking.
\end{proof}
We remark that a simple modification of our proof also shows that checking conditions 1 and 2 of \cref{lem:induct} can be done in polynomial time.

\subsection{Simplification and Details on \Cref{lem:create}}\label{app:create}
Here we state the lemma that captures the guarantees of the algorithm $\create$ from \cite{ACSS20}. We later apply this lemma in a specific setting where the conditions of \Cref{lem:create} are met and provide its proof.
\renewcommand{\bttpo}{(\btt+1)}
\renewcommand{\bSp}{\textbf{\ma}}
\renewcommand{\bS}{\textbf{\mb}}
\renewcommand{\boo}{\ell}
\renewcommand{\ztbtt}{[0,\btt]}

For a given profile $\phi$, the algorithm $\create$ takes input $(\bSp,\bS,\bR)$ and creates a solution pair $(\bS',\bR')$ that satisfy the following lemma.

\begin{lemma}\label{lem:createoriginal}
	Given a profile $\phi \in \Phi^n$ with $k$ distinct frequencies, a probability discretization set $\bR$ and matrices $\ma,\mb \in \R^{[\boo]\times \ztbtt}$ that satisfy: $\bSp \in \bZfrac$ and $\bS_{i,j} \leq \bSp_{i,j}$ for all $i\in [\boo]$ and $j\in \ztbtt$. There exists an algorithm that outputs a probability discretization set $\bR'$ and $\bSp'\in \R^{[\boo+\newk]\times \ztbtt}$ that satisfy the following guarantees,
	\begin{enumerate}[noitemsep,nosep,leftmargin=*]
		\item $\sum_{j\in \ztbtt}\bSp'_{i,j}=\sum_{j\in \ztbtt}\bS_{i,j}$ for all $i\in [\boo]$.
		\item For any $i \in [\boo+1,\boo+\newk]$, let $j \in \ztbtt$ be such that $i=\boo+1+j$ then $\bSp'_{\boo+1+j,j'}=0$ for all $j' \in \ztbtt$ and $j' \neq j$. (Diagonal Structure)
		\item For any $i\in [\boo+1,\boo+\newk]$, let $j \in \ztbtt$ be such that $i=\boo+1+j$, then $\sum_{j'\in \ztbtt}\bSp'_{i,j'}=\bSp'_{\boo+1+j,j}=\phi_{j}-\sum_{i' \in [\boo]}\bS_{i',j}$.
		\item $\bSp' \in \textbf{Z}^{\phi,frac}_{\bR'}$ and $\sum_{i \in [\boo+\newk]} \sum_{j \in \ztbtt} \bSp'_{i,j}=\sum_{i \in [\boo]} \sum_{j \in \ztbtt} \bSp_{i,j}$.
		\item Let $\alpha_{i}\defeq \sum_{j\in \ztbtt}\bSp_{i,j} - \sum_{j\in \ztbtt}\bS_{i,j}$ for all $i\in [\boo]$ and $\logparam \defeq \max(\sum_{i \in [\boo]}(\bSp \onevec)_{i} , \boo \times \btt)$, then 
		$\bg(\bSp') \geq \expo{-O\left(  \sum_{i\in [\boo]} \alpha_{i}\log \logparam \right)} \bg(\bSp)~.$
		\item For any $j \in \ztbtt$, the new level sets have probability value equal to,
		$\pvec_{\boo+1+j}=\frac{\sum_{i \in \otboo}(\bSp_{ij}-\bS_{ij})\pvec_{i}}{\sum_{i \in \otboo}(\bSp_{ij}-\bS_{ij})}$.
	\end{enumerate}
\end{lemma}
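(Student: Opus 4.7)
The plan is to construct $\bSp'$ and $\bR'$ directly from $\bSp$, $\bS$, $\bR$, verify the six listed conditions, and concentrate the analytic work on the objective-preservation bound (condition 5), which naturally splits into a clean Jensen step and a more delicate entropy estimate.

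\emph{Construction and structural verification (conditions 1--4 and 6).} For $i\in[\boo]$ and $j\in\ztbtt$ set $\bSp'_{ij}=\bS_{ij}$, and for each $j\in\ztbtt$ place the column deficit on a single new diagonal row: $\bSp'_{\boo+1+j,j}=b_j\defeq\phi_j-\sum_{i\in[\boo]}\bS_{ij}=\sum_i(\bSp_{ij}-\bS_{ij})\geq 0$, with all other entries of that row equal to $0$. For each new row with $b_j>0$, append to $\bR$ the probability $\pvec_{\boo+1+j}\defeq b_j^{-1}\sum_i(\bSp_{ij}-\bS_{ij})\pvec_i$. Conditions 1, 2, 3, and 6 are immediate from the definitions. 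For condition 4, the column sums are $\sum_i\bS_{ij}+b_j=\phi_j$ by construction, and the identity $\pvec_{\boo+1+j}\cdot b_j=\sum_i(\bSp_{ij}-\bS_{ij})\pvec_i$ lets the weighted row-sum of $\bSp'$ telescope into $\sum_{ij}\pvec_i\bSp_{ij}\leq 1$; preservation of total mass follows the same way.

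\emph{Objective decomposition and the Jensen step.} Since $\bff=-\log\bg$ splits rowwise and each new diagonal row of $\bSp'$ carries only the single entry $b_j$, the terms $v_j\log v_j$ and $(v^\top\onevec)\log(v^\top\onevec)$ of that row both equal $b_j\log b_j$ and cancel, leaving only the linear contribution $-\mmjj b_j\log\pvec_{\boo+1+j}$. Combined with the change from $\bSp_i$ to $\bS_i$ in the first $\boo$ rows, the linear-in-$\log\pvec$ portion of $\bff(\bSp')-\bff(\bSp)$ reduces to
\begin{equation*}
\sum_j \mmjj\Big[\sum_i(\bSp_{ij}-\bS_{ij})\log\pvec_i - b_j\log\pvec_{\boo+1+j}\Big]=-\sum_j \mmjj b_j\Big(\log\big(\textstyle\sum_i w_{ij}\pvec_i\big)-\sum_i w_{ij}\log\pvec_i\Big),
\end{equation*}
where $w_{ij}=(\bSp_{ij}-\bS_{ij})/b_j$ is a probability distribution over $i$. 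Concavity of $\log$ (Jensen) makes this nonpositive; this is precisely where the weighted-average definition of $\pvec_{\boo+1+j}$ pays off. The remaining loss in $\bff$ is then the pure \emph{entropy surplus} $\sum_{i\in[\boo]}[F(\bSp_i)-F(\bS_i)]$, with $F(v)\defeq(v^\top\onevec)\log(v^\top\onevec)-\sum_j v_j\log v_j$ being the row mass times the Shannon entropy of the row profile.

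\emph{Entropy bound -- the main obstacle.} The technical heart of the lemma is bounding the entropy surplus by $O(\sum_i\alpha_i\log\logparam)$. Writing $\bSp_{ij}=\bS_{ij}+\delta_{ij}$ with $\sum_j\delta_{ij}=\alpha_i$, applying convexity of $g(x)=x\log x$ to get $g(\bSp_i^\top\onevec)-g(\bS_i^\top\onevec)\leq \alpha_i\log(\bSp_i^\top\onevec)+\alpha_i$, and combining with the elementary Shannon-entropy inequality for $-\sum_j\delta_{ij}\log\delta_{ij}$ produces a per-row estimate of roughly $F(\bSp_i)-F(\bS_i)\leq O\big(\alpha_i\log((\bSp_i^\top\onevec)/\min_{j:\delta_{ij}>0}\bSp_{ij})\big)$. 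The two components of $\logparam=\max(\sum_i(\bSp\onevec)_i,\boo\btt)$ are precisely calibrated to make this $O(\alpha_i\log\logparam)$: the first bounds the numerator $\bSp_i^\top\onevec\leq\logparam$, and the second lets me handle very small denominators via a preprocessing step that clips entries $\bSp_{ij}<1/(\boo\btt)$ up to the threshold, an alteration of total mass $\leq 1$ that is absorbed into the implicit constant. I expect the real subtlety, and the main obstacle, to be correctly bookkeeping the regimes where $\alpha_i$ is comparable to versus much smaller than $1/\logparam$ so that no spurious additive term of the form $O(\boo)$ survives. Once this is handled, summation over $i$ and combination with the nonpositive Jensen contribution delivers condition 5 and completes the proof.
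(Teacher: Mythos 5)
First, note that the paper you are reconstructing does not actually prove this statement: \Cref{lem:createoriginal} is quoted from prior work (it is Lemma 6.13 of \cite{ACSS20}, and the algorithm $\create$ is taken from there); the present paper only derives the simplified \Cref{lem:create} from it. Judged on its own, the constructive half of your proposal is correct and is indeed the intended construction: keeping $\mb$ on the first $\ell$ rows, putting each column deficit $b_j=\sum_{i}(\ma_{ij}-\mb_{ij})$ on a fresh diagonal row, and taking the new probability value to be the $(\ma_{ij}-\mb_{ij})$-weighted average of the old ones yields guarantees 1--4 and 6 exactly as you argue, and your two observations for the objective --- that a single-entry row contributes no entropy terms, and that the weighted-average choice makes the linear part of $-\log\bg$ only improve via Jensen --- are both right and are precisely what guarantee 6 is designed for.

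The gap is in guarantee 5, exactly where you flag ``the main obstacle,'' and the fix you sketch does not work. Your per-row estimate leaves terms of the form $\sum_j \delta_{ij}\log\bigl([\ma\vones]_i/\ma_{ij}\bigr)$ with $\delta_{ij}=\ma_{ij}-\mb_{ij}$, and to convert these into $O(\alpha_i\log\logparam)$ you propose clipping every entry $\ma_{ij}<1/(\ell k)$ up to that threshold and ``absorbing'' the resulting mass change ($\le 1$) into the implicit constant. Two problems. First, the guarantee is multiplicative with exponent $O\bigl(\sum_i\alpha_i\log\logparam\bigr)$, which can be arbitrarily close to zero (all $\alpha_i$ tiny), so there is no constant slack to absorb anything into: any loss caused by preprocessing must itself be charged to $O\bigl(\sum_i\alpha_i\log\logparam\bigr)$. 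Second, clipping modifies the reference matrix $\ma$, not just the rounded matrix: the conclusion compares the output against $\bg(\ma)$ for the \emph{original} $\ma$, and raising small entries changes the linear term $\sum_{ij}\mj\log(\ri)\,\ma_{ij}$ by up to $\mj\log(1/\ri)$ per unit of clipped mass (polynomially large in $n$), and also breaks the column-sum and distributional constraints, so the clipped matrix is no longer in $\bZfrac$; none of this is controlled by the allowed loss. Relatedly, the bookkeeping you explicitly postpone --- rows with $\alpha_i\ll 1/\logparam$, where terms of the type $\alpha_i\log(1/\alpha_i)$ genuinely exceed $\alpha_i\log\logparam$ --- is not a routine detail but the crux: closing it requires either aggregating across columns and rows by concavity (e.g.\ $\sum_j\delta_{ij}\log(\alpha_i/\delta_{ij})\le\alpha_i\log(k+1)$, and similarly over $i$) together with a lower bound on the relevant masses, or exploiting additional structure of $\ma,\mb$ beyond the stated entrywise domination (in the application, $\mb$ has integral row and column sums). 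As written, guarantee 5 is not established, so the proposal is incomplete at its central claim.
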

W are now ready to provide the proof of \Cref{lem:create}
\begin{proof}[Proof of \Cref{lem:create}]
	By \Cref{lem:createoriginal}, we get a matrix $\ma' \in \R^{[\boo+\newk]\times \ztbtt}$ that satisfies $\ma' \in \textbf{Z}^{\phi,frac}_{\bR'}$ (guarantee 4 in \Cref{lem:createoriginal}) and $\bg(\bSp') \geq \expo{-O\left(  \sum_{i\in [\boo]} \alpha_{i}\log \logparam \right)} \bg(\bSp)$, where $\alpha_{i}\defeq \sum_{j\in \ztbtt}\bSp_{i,j} - \sum_{j\in \ztbtt}\bS_{i,j}$ for all $i\in [\boo]$ and $\logparam \defeq \max(\sum_{i \in [\boo]}(\bSp \onevec)_{i} , \boo \times \btt)$. 
	
	To prove the lemma we need to show two things: $\ma' \in \textbf{Z}^{\phi}_{\bR'}$ and $\bg(\ma') \geq \expo{-O\left(t\log n\right)}\bg(\ma)$. We start with the proof of the first expression. Note that $\ma' \in \textbf{Z}^{\phi,frac}_{\bR'}$ and we need to show that $\ma'$ has all integral row sums. For $i \in [\ell]$, the $i$'th row sum, that is $[\ma'\vones]_{i}$ is integral by combining guarantee 1 of \Cref{lem:createoriginal} and $[\mb\vones]_{i} \in \Z$ (condition of our current lemma). For $i\in [\boo+1,\boo+\newk]$, $[\ma'\vones]_{i}=\phi_{j}-[\mb^\top \vones]_{j}$ (guarantee 3 of \Cref{lem:createoriginal}) and the $i$'th row sum is integral because $[\mb^\top \vones]_{j} \in \Z$ (condition of our current lemma) and $[\mb^\top \vones]_{j} \leq [\ma^\top \vones]_{j} \leq \phi_j$.
	
	We now shift our attention to the second expression, that is $\bg(\ma') \geq \expo{-O\left(t\log n\right)}\bg(\ma)$. We prove this inequality by providing bounds on the parameters $\Delta$, $\alpha_{i}$. Observe that $\Delta \leq 1/\rmin + \ell k \leq 1/\rmin+k(k+1) \leq O(n^2)$ because $\ma \in \bZfrac$ and therefore satisfies $\sum_{i \in [1,k+1]} \ri[\ma\vones]_{i} \leq 1$ that further implies $\sum_{i \in [1,k+1]} [\ma'\vones]_{i} \leq 1/\rmin \leq 2n^2$ (see the definition of probability discretization). In the second inequality for the bound on $\Delta$ we used $\ell \leq k+1$, as without loss of generality the number of probability values in $|\bR|$ can be assumed to be at most $k+1$ (because of the sparsity lemma \Cref{lem:sparse}) and the actual size of $|\bR|$ only reflects in the running time. Now note that $\sum_{i\in [k+1]} \alpha_{i}=\sum_{i\in [\ell], j \in [0,k]} (\ma_{ij}-\mb_{ij})\leq t$ because of the condition of the lemma. Combining the analysis for $\Delta$ and $\alpha_{i}$, we get $\bg(\ma') \geq \expo{-O\left(t\log n\right)}\bg(\ma)$ and we conclude the proof. 
\end{proof}

\subsection{Proof of \Cref{thm:generalone} and \Cref{thm:resultmainone}}\label{app:thmgeneralone}
\renewcommand{\bS}{\textbf{S}}
\renewcommand{\bSp}{\textbf{S}'}
Here we provide the proof of \Cref{thm:generalone}, that provides the guarantees of our first rounding algorithm (Algorithm 1) for any probability descritization set $\bR$. Later we choose this discretization set carefully to prove our main theorem (\Cref{thm:resultmainone}).
\begin{proof}[Proof of \Cref{thm:generalone}]	
	By \Cref{lem:maximizer}, the Step 1 returns a solution $\bSp \in \bZfrac$ that satisfies,
	$\cphi \cdot \bg(\bSp)\geq \expo{\bigO{-k \log n}} \max_{\bq \in \dsimplex}\probpml(\bq,\phi)$.
	By \Cref{lem:sparse}, the Step 2 takes input $\bS'$ and outputs $\bS'' \in \bZfrac$ such that $\bg(\bS'') \geq \bg(\bS')$ and $\big|\{i \in [\ell]~|~[\bS''\1]_{i}  > 0\}\big| \leq k+1$. As the matrix $\bS''$ has at most $k+1$ non-zero rows and columns, by \Cref{thm:matrixround} the Step 3 returns a matrix $\mb''$ that satisfies: $\mb''_{ij} \leq \bS''_{ij} ~\forall ~i \in [\ell],j \in [0,k]$, $\mb''\1\in \Z^\ell$, $\mb''^\top \1\in \Z^{[0,k]}$ and $\sum_{i\in [\ell], j \in [0,k]} (\bS''_{ij}-\mb''_{ij})\leq O(k)$. The matrices $\bS''$ and $\mb''$ satisfy the conditions of \Cref{lem:create} with parameter $t=O(k)$ and the algorithm $\create$ returns a solution $(\bSext,\bRext)$ such that $\bSext \in \bZext$ and $\bg(\bSext) \geq \exp(-O(k\log n))\bg(\bS'')$. Further substituting $\bg(\bS'') \geq \bg(\bS')$ from earlier (Step 2) we get, $\bg(\bSext) \geq \exp(-O(k\log n))\bg(\bS')$. As $\bSext \in \bZext$, by \Cref{lem:associateddist} the associated distribution $\bp'$ satisfies $\probpml(\bp',\phi) \geq \exp(-O(k\log n)) \cphi \cdot \bg(\bSext)\geq \exp(-O(k\log n))\cphi \cdot \bg(\bS')$. Further combined with inequality $\cphi \cdot \bg(\bSp)\geq \expo{\bigO{-k \log n}} \max_{\bq \in \dsimplex}\probpml(\bq,\phi)$ (Step 1) we get,
	$$\probpml(\bp',\phi)\geq \expo{\bigO{-k \log n}} \max_{\bq \in \dsimplex}\probpml(\bq,\phi)~.$$
	All the steps in our algorithm run in polynomial time and we conclude the proof.
\end{proof}

\begin{proof}[Proof of \Cref{thm:resultmainone}]
Choose $\bR$ with parameters $\alpha=k\log n/n$ and $|\bR|=\ell=O(n/k)$ in \Cref{lem:probdisc} and we get that $\max_{\bq\in \dsimplex} \probpml(\bq,\phi) \geq \expo{-k\log n}\max_{\bp\in \simplex}\probpml(\bp,\phi)$.
As the $|\bR|$ is polynomial in $n$, the previous inequality combined with \Cref{thm:generalone} proves our theorem.
\end{proof}	
\section{PseudoPML Approach, Remaining Proofs from \Cref{sec:practical} and Experiments}\label{app:pseudoall}
Here we provide all the details regarding the PseudoPML approach. PseudoPML also known as TrucatedPML was introduced independently in \cite{CSS19pseudo} and \cite{HO19}. In \Cref{app:pseudoapproxpml}, we provide the proof for the guarantees achieved by our second rounding algorithm (\Cref{thm:secmain}) that in turn helps us prove \Cref{thm:resultmaintwo}. In \Cref{app:notation}, we provide notations and definitions related to the PseudoPML approach. In \Cref{app:pseduo}, we provide the proof of \Cref{lem:pseudomain}. Finally in \Cref{app:experiments}, we provide the remaining experimental results and the details of our implementation.

\subsection{Proof of \Cref{thm:secmain} and \Cref{thm:resultmaintwo}}\label{app:pseudoapproxpml}
Here we provide the proof of \Cref{thm:secmain} that provides the guarantees satisfied by our second approximate PML algorithm. Further using this theorem , we provide the proof for \Cref{thm:resultmaintwo}.
\renewcommand{\bS}{\textbf{S}}

\begin{proof}[Proof of \Cref{thm:secmain}]
By \Cref{lem:maximizer}, the first part of Step 1 returns a solution $\mx \in \bZfrac$ that satisfies,
	\begin{equation}\label{eq:ddtwo}
	\cphi \cdot \bg(\mx)\geq \expo{\bigO{-k \log n}} \max_{\bq \in \dsimplex}\probpml(\bq,\phi)~.
	\end{equation}
	We also sparsify the solution $\mx$ in Step 1 that we call $\mx'$. By \Cref{lem:sparse}, the solution $\mx' \in \bZfrac$ satisfies $\bg(\mx') \geq \bg(\mx)$ and $\big|\{i \in [\ell]~|~[\mx'\1]_{i}  > 0\}\big| \leq k+1$.
	The Steps 2-3 of our algorithm throw away the zero rows of matrix $\mx'$ and consider the sub matrix $\bS'$ corresponding to its non-zeros rows. Let $\bR'$ be the probability values that correspond to these non-zero rows of $\mx'$ and $\bS' \in \textbf{Z}^{\phi,frac}_{\bR'}$. As $\bS'$ changes during Steps 4-8 of the algorithm, we use $\my$ to denote the unchanged $\bS'$ from Step 2. The matrix $\my \in \textbf{Z}^{\phi,frac}_{\bR'}$ satisfies: $\bg(\my) = \bg(\mx') \geq \bg(\mx)$ and  has $\ell' \leq k+1$ rows. In the remainder of the proof we show that the distribution $\bp'$ outputted by our algorithm satisfies $\probpml(\bp',\phi) \geq \expo{-O((\rmax-\rmin)n +k \log (\ell n))} \cphi \cdot \bg(\my)$ that further combined with $\bg(\my) \geq \bg(\mx)$ and \Cref{eq:ddtwo} proves the theorem. Now recall the definition of $\bg(\my)$,
	\newcommand{\rlp}{\textbf{r}_{\ell'}}
	\begin{equation}
	\bg(\my)\defeq \exp\Big(\sum_{i\in[1,\ell'],j\in[0,k]}\left[\mc'_{ij}\my_{ij}-\my_{ij}\log\my_{ij}\right]+\sum_{i\in[1,\ell']}[\my\vones]_{i}\log[\my\vones]_{i}\Big)~,
	\end{equation}
	where $\mc'_{ij}= \mj \log \ri'$. We refer to the linear term in $\my$ of function $\bg(\my)$ as the first term and the remaining entropy like terms as the second. We denote the elements of set $\bR'$ by $\ri'$ and let $\ro' < \dots \rlp'$. The Steps 4-8 of our rounding algorithm transfer the mass of $\bS'$ from lower probability value rows to higher ones while maintaining the integral row sum for the current row . Formally at iteration $i$, our algorithm takes the current fractional part of the $i$'th row sum ($[\bS'\vones]_{i}-\floor{[\bS'\vones]_{i}}$) and moves it to row $i+1$ (corresponding to higher probability value) by updating matrix $\bS'$. As the first term in function $\bg(\cdot)$ is strictly increasing in the values of $\ri'$, it is immediate that the final solution $\bSext$ satisfies,
	\begin{equation}\label{eq:first}
	\sum_{i\in[1,\ell'],j\in[0,k]}\mc'_{ij}\bSext_{ij} \geq \sum_{i\in[1,\ell'],j\in[0,k]}\mc'_{ij}\my_{ij}~.
	\end{equation}
	The movement of the mass between the rows happen within the same column, therefore $\bSext$ satisfies the column constraints, that is $[{\bSext}^{\top}\vones]_{j}=\phi_j$ for all $j \in [k]$. As $[\bSext\vones]_{i}=\floor{[\bS'\vones]_{i}}$ for all $i \in [1,\ell]$, we also have that all the row sums are integral. Therefore to prove the theorem all that remains is to bound the loss in objective corresponding to the second term for Steps 4-8 and analysis of Steps 9-11. 
	 
	In Steps 4-8 at iteration $i$, note that we move at most $1$ unit of mass ($\frac{\floor{[\bS'\vones]_{i}}}{[\bS'\vones]_{i}}$) from row $i$ to $i+1$. Therefore the updated matrix $\bS'$ after Step 6 satisfies $\sum_{j \in [0,k]}(\bS'_{i+1,j}-\my_{i+1,j}) \leq 1$. As $\bSext_{i+1,j}=\bS'_{i+1,j}\frac{\floor{\|\bS'_{i+1}\|_1}}{\|\bS'_{i+1}\|_1}$ we have $\sum_{j \in [0,k]}(\bS'_{i+1,j}-\bSext_{i+1,j}) \leq 1$ and further combined with the previous inequality we get $\sum_{j \in [0,k]}|\bSext_{i+1,j}-\my_{i+1,j}| \leq 1$ for all $i \in [1,\ell'-1]$. For the first row, we have $\bSext_{1,j}=\my_{1,j}\frac{\floor{\|\my_{1}\|_1}}{\|\my_{1}\|_1}$ which also gives $\sum_{j \in [0,k]}|\bSext_{1,j}-\my_{1,j}| \leq 1$. Therefore for all $i \in [1,\ell']$ the following inequality holds,
	\begin{equation}\label{eq:massbound}
	\sum_{j \in [0,k]}|\bSext_{i,j}-\my_{i,j}| \leq 1~.
	\end{equation}
	As the function $x\log x$ and $-x\log x$ are $O(\log n)$-Lipschitz when $x\in [\frac{1}{n^{10}},\infty]\cup \{0\}$ and all the terms where $\my_{i,j},[\my\vones]_{i},\bSext_{i,j},[\bSext\vones]_{i}$ take values less than $1/n^{10}$ contribute very little (at most $\exp(O(1/n^8))$) to the objective. Therefore by \Cref{eq:massbound} we get,
	\begin{equation}\label{eq:second}
	\sum_{i\in[1,\ell'],j\in[0,k]}\left(-\bSext_{ij}\log\bSext_{ij}\right)\geq \sum_{i\in[1,\ell'],j\in[0,k]}\left(-\my_{ij}\log\my_{ij}\right) - O(\ell'\log n)~,
	\end{equation}
	\begin{equation}\label{eq:third}
	\sum_{i\in[1,\ell']}[\bSext\vones]_{i}\log[\bSext\vones]_{i} \geq \sum_{i\in[1,\ell']}[\my\vones]_{i}\log[\my\vones]_{i}- O(\ell'\log n)~,
	\end{equation}
	where in the above inequalities we used the Lipschitzness of entropy and negative of entropy functions. Therefore Steps 4-8 of the algorithm outputs a solution $\bSext$ that along with other conditions also satisfies \Cref{eq:first,eq:second,eq:third}. Now observe that we are not done yet as the solution $\bSext$ might violate the distributional constraint $\sum_{i\in [1,\ell']}\ri' \|\bSext_{i}\|_{1} \leq 1$; to address this in Steps 9-10 we construct a new probability $\bRext$ where we scale down the probability values in $\bR'$ by $c=\sum_{i\in [1,\ell']}\ri' \|\bSext_{i}\|_{1}$. Such a scaling immediately ensures the satisfaction of the distributional constraint with respect to $\bRext$. As the row sums of $\bSext$ are integral and it satisfies all the column constraints as well, we have that $\bSext \in \textbf{Z}^{\phi}_{\bRext}$. Let $\ri''=\ri'/c$ be the probability values in set $\bRext$, then note that,
	 \begin{equation}\label{eq:four}
	 \begin{split}
	 \sum_{i\in[1,\ell'],j\in[0,k]}\mj\bSext_{ij}\log \ri''&=\sum_{i\in[1,\ell'],j\in[0,k]}\mj\bSext_{ij}\log \frac{\ri'}{c}\\
	 &=\sum_{i\in[1,\ell'],j\in[0,k]}\mc'_{i,j}\bSext_{ij}-\log c \sum_{i\in[1,\ell'],j\in[0,k]}\mj\bSext_{ij} \\
	 & = \sum_{i\in[1,\ell'],j\in[0,k]}\mc'_{i,j}\bSext_{ij} - \log c\sum_{j\in[0,k]}\mj\phi_j \\
	 &=\sum_{i\in[1,\ell'],j\in[0,k]}\mc'_{i,j}\bSext_{ij} - n\log c ~. 
	 \end{split}
	 \end{equation}
	 All that remains is to provide an upper bound on the value of $c$. Observe that, $c=\sum_{i\in [1,\ell']}\ri' \|\bSext_{i}\|_{1}=\sum_{i\in [1,\ell']}\ri'\|\my_{i}\|_{1}+ \sum_{i\in [1,\ell']}\ri'( \|\bSext_{i}\|_{1}-\|\my_{i}\|_{1}) \leq 1+\rmax-\rmin$, where in the last inequality we used $\my \in \textbf{Z}_{\bR'}^{\phi}$ and $\sum_{i\in [1,\ell']}( \|\bSext_{i}\|_{1}-\|\my_{i}\|_{1})=0$. Substituting the bound on $c$ back into \Cref{eq:four} we get,
	 \begin{equation}\label{eq:five}
	 \begin{split}
	 \sum_{i\in[1,\ell'],j\in[0,k]}\mj\bSext_{ij}\log \ri'' &= \sum_{i\in[1,\ell'],j\in[0,k]}\mc'_{i,j}\bSext_{ij} - n\log c \\
	 & \geq \sum_{i\in[1,\ell'],j\in[0,k]}\mc'_{i,j}\bSext_{ij}- O((\rmax-\rmin)n) ~.
	 \end{split} 
	 \end{equation}
	 Using \Cref{eq:first,eq:second,eq:third,eq:five}, the function value $\bg(\bSext)$ with respect to $\bRext$ satisfies,
	 \begin{equation}\label{eq:final}
	 \begin{split}
	 \bg(\bSext) & \geq \expo{- O(\rmax-\rmin)n-O(\ell'\log n)}\bg(\my)\\
	 & \geq \expo{- O(\rmax-\rmin)n-O(k\log n)}\bg(\my),
	 \end{split}
	 \end{equation}
	 where in the last inequality we used $\ell'\leq k+1$.
	 %
	 %
	 %
	 As $\bSext \in \bZext$, by \Cref{lem:associateddist} the associated distribution $\bp'$ satisfies $\probpml(\bp',\phi) \geq \exp(-O(k\log n)) \cphi \cdot \bg(\bSext)$. Further combined with \Cref{eq:final}, $\bg(\my) \geq \bg(\mx)$ and \Cref{eq:ddtwo} we get,
	 $$\probpml(\bp',\phi)\geq \expo{- O(\rmax-\rmin)n-O(k\log n)}\max_{\bq \in \dsimplex}\probpml(\bq,\phi)~.$$
	 
	 In the remainder we provide the analysis for the running time of our algorithm. By \Cref{thm:convrt} we can solve the convex optimization problem in Step 1 in time $\otilde(|\bR|k^2)$. By \Cref{lem:sparse}, the sub routine $\sparse$ can be implemented in time $\otilde(|\bR|k^{\omega})$ and all the remaining steps correspond to the low order terms; therefore the final run time of our algorithm is $\otilde(|\bR|k^{\omega})$ and we conclude the proof. 
\end{proof}
The above result holds for a general $\bR$ and we choose this set carefully to prove \Cref{thm:resultmaintwo}.

\begin{proof}[Proof of \Cref{thm:resultmaintwo}]
As the probability values lie in a restricted range, we just need to discretize the interval $[\ell,u]$. We choose the probability discretization set $\bR$ with parameters $\alpha=k/n$, $\rmax=u$, $\rmin=\ell$ and $|\bR|=O(\frac{n\log \frac{u}{\ell}}{k})$. By \Cref{lem:probdisc}, we have $\max_{\bq \in \dsimplex}\Prob{\bq, \phi} \geq \expo{-k-6}\Prob{\bp, \phi}$. Further combined with \Cref{thm:secmain}, we conclude our proof.
\end{proof}

\subsection{Notation and the General Framework}\label{app:notation}

Here we provide all the definitions and description of the general framework for symmetric property estimation using the PseudoPML~\cite{CSS19pseudo,HO19}. We start by providing definitions of pseudo profile and PseudoPML distributions.
\newcommand{\Fsj}{\phis(j)}
\begin{defn}[$S$-pseudo Profile]
	For any sequence $y^{n} \in \bX^{n}$ and $S \subseteq \bX$, let $\setd\defeq \{ \bff(y^n,x) \}_{x \in S}$ be the set of distinct frequencies from $S$ and let $\eled_1,\eled_2,\dots, \eled_{|\setd|}$ be these distinct frequencies. The $S$-\emph{pseudo} profile of a sequence $y^n$ and set $S$ denoted by $\phis=\Phis(y^n)$ is a vector in $\Z^{|\setd|}$, where $\Fsj\defeq|\{x\in S ~|~\bff(y^n,x)=\eled_j \}|$ is the number of domain elements in $S$ with frequency $\eled_{j}$. We call $n$ the length of $\phis$ as it represents the length of the sequence $y^n$ from which the pseudo profile was constructed. Let $\Phisn$ denote the set of all $S$-pseudo profiles of length $n$.
\end{defn}
The probability of a $S$-pseudo profile $\phis \in \Phisn$ with respect to $\bp \in \simplex$ is defined as follows,
\begin{equation}\label{eqpml1}
\Pr(\bp,\phis)\defeq\sum_{\{y^n \in \bX^n~|~ \Phis(y^n)=\phis \}} \bbP(\bp,y^n),
\end{equation}
we use notation $\Pr$ instead of $\probpml$ to differentiate between the probability of a pseudo profile from the profile.
\begin{defn}[$S$-PseudoPML distribution]
	For any $S$-pseudo profile $\phis \in \Phisn$, a distribution $\bp_{\phis} \in \simplex$ is a $S$-\emph{PseudoPML} distribution if $\bp_{\phis} \in \argmax_{\bp \in \simplex} \probpml(\bp,\phis)$. Further, a distribution $\bpml \in \simplex$ is a $(\beta,S)$-\emph{approximate PseudoPML} distribution if $\probpml(\bpml,\phis)\geq \beta \cdot \probpml(\bp_{\phis},\phis)$.
\end{defn}

We next provide the description of the general framework from \cite{CSS19pseudo}. The input to this general framework is a sequence of $2n$ i.i.d sample denoted by $x^{2n}$ from an underlying hidden distribution $\bp$, a symmetric property of interest $\bff$ and a set of frequencies $\fsub$. The output is an estimate of $f(\bp)$ using a mixture of PML and empirical distributions.  
\begin{algorithm}[H]
	\caption{General Framework for Symmetric Property Estimation}\label{algpml}
	\begin{algorithmic}[1]
		\Procedure{Property estimation}{$x^{2n},\bff, \fsub$}
		\State Let $x^{2n}=(\xon,\xtn)$, where $\xon$ and $\xtn$ represent first and last $n$ samples of $x^{2n}$ respectively.
		\State Define $S\defeq \{y \in \bX~|~f(\xon,y) \in \fsub \}$.
		\State Construct profile $\phis$, where $\Fsj\defeq|\{y\in S ~|~\bff(\xtn,y)=j \}|$.
		\State Find a $(\beta,S)$-approximate PseudoPML distribution $\bpml$ and empirical distribution $\hat{\bp}$ on $\xtn$.
		\State \textbf{return} $\bff_{S}(\bpml)+\bff_{\bar{S}}(\hat{\bp}) + \text{correction bias with respect to }\bff_{\bar{S}}(\hat{\bp})$.
		\EndProcedure
	\end{algorithmic}
\end{algorithm}
We call the procedure of estimation using the above general framework as the PseudoPML approach.

\subsection{Proof of \Cref{lem:pseudomain} and the Implementation of General Framework}\label{app:pseduo}

Here we provide the proof of \Cref{lem:pseudomain}. The main idea behind the proof of this lemma is to use an efficient solver for the computation of approximate PML to return an approximate PseudoPML distribution. The following lemma will be useful to establish such a connection and we define the following notations: $\Delta^{S}_{[\ell,u]}\defeq \{\bp\in \Delta^{S}\Big|\bp_x \in [\ell,u] ~\forall x\in S\}$ and further define $\Delta^{\bX}_{S,[\ell,u]}\defeq \{\bp\in \Delta^{\bX}\Big|\bp_x \in [\ell,u] ~\forall x\in S\}$, where $\Delta^{S}$ are all distributions that are supported on domain $S$.

\newcommand{\ns}{n'}
\newcommand{\nsbar}{n''}
\newcommand{\yns}{y^{\ns}}
\newcommand{\ynsbar}{y^{\nsbar}}

\begin{lemma}\label{lem:computepseudo}
	For any profile $\phi' \in \Phi^{n'}$ with $k'$ distinct frequencies, domain $S \subset \bX$ and $\ell',u' \in [0,1]$.
	If there is an algorithm that runs in time $T(n',k',u',\ell')$ and returns a distribution $\bp' \in \Delta^{S}$ such that,
	$$\probpml(\bp',\phi') \geq \expo{-O((u'-\ell')n'\log n'+k'\log n')} \max_{\bq \in \Delta^{S}_{[\ell,u]}}\probpml(\bq,\phi')~.$$
	Then for domain $\bX$, any pseudo $\phis \in \Phisn$ with $k$ distinct frequencies and $\ell,u \in [0,1]$, such an algorithm can be used to compute $\bp''_{S}$, part corresponding to $S\subseteq \bX$ of distribution $\bp'' \in \simplex
	$ in time $T(n,k,u,\ell)$ where the distribution $\bp''$ further satisfies,
	$$\Pr(\bp'',\phis) \geq \expo{-O((u-\ell)n\log n+k\log n)} \max_{\bq \in \simplex_{S,[\ell,u]}}\Pr(\bq,\phis)~.$$
	\end{lemma}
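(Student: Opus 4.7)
The plan is to derive an explicit factorization of the pseudo-profile probability that reduces the PseudoPML maximization on $\bX$ to ordinary PML maximization on the sub-domain $S$, and then invoke the hypothesized algorithm on $S$ with suitably rescaled probability bounds. Writing $\bp_S := (\bp_x)_{x \in S}$ for the restriction, $c := \|\bp_S\|_1$, $m := \sum_j \eled_j\,\phis(j)$ for the total number of samples landing in $S$ (a quantity determined by $\phis$ alone), and $\phi'_S$ for the length-$m$ profile on $S$ induced by $\phis$, the key identity is
\[
\Pr(\bp, \phis) \;=\; \binom{n}{m}\, c^{m}\, (1-c)^{n-m}\, \probpml\!\left(\bp_S/c,\; \phi'_S\right).
\]
This follows by conditioning on which of the $\binom{n}{m}$ position subsets contain the $S$-samples, summing out the $\bar S$-samples to produce $(1-c)^{n-m}$, and pulling $c^m$ out of the remaining factor via the homogeneity $\probpml(c\bq, \phi'_S) = c^m\, \probpml(\bq, \phi'_S)$. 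The identity turns the PseudoPML maximum over $\simplex_{S,[\ell,u]}$ into $\binom{n}{m}\max_{c \in I}\,c^m(1-c)^{n-m}\,\max_{\bq \in \Delta^{S}_{[\ell/c,\,u/c]}} \probpml(\bq,\phi'_S)$, where $I := [\,|S|\ell,\; \min(1,|S|u)\,]$.

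I would then pick a single $\bar c$ equal to the projection of the empirical frequency $m/n$ onto $I$, invoke the hypothesized algorithm once on $\phi'_S$ (which has length $\leq n$ and at most $k$ distinct frequencies) with rescaled probability bounds $[\ell/\bar c,\, u/\bar c]$ to obtain $\bq''$, and output $\bp''_S := \bar c\, \bq''$; the complementary block $\bp''|_{\bar S}$ can be any completion making $\bp''$ a valid distribution in $\simplex$. Because $c \mapsto c^m(1-c)^{n-m}$ is log-concave on $(0,1)$ with peak at $c = m/n$, the projection $\bar c$ maximizes the outer binomial factor over $I$, and the standard identity $\log\!\bigl[(c^*)^m(1-c^*)^{n-m}/(m/n)^m(1-m/n)^{n-m}\bigr] = -n\,\mathrm{KL}(m/n\,\|\,c^*) \leq 0$ at the true optimizer $c^* \in I$ absorbs any mismatch in this outer factor.

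For the composition of approximations, the algorithm's guarantee contributes a factor $\exp\!\bigl(-O(\tfrac{u-\ell}{\bar c}\,m\log m + k\log m)\bigr)$ on the inner PML, which is at most $\exp\!\bigl(-O((u-\ell)n\log n + k\log n)\bigr)$ whenever $\bar c \geq m/n$ (using $m \leq n$). To pass from the inner maximum at $\bar c$ to the inner maximum at $c^*$, I would reparameterize through pseudo-distributions $\bp' = c\bq$ on $S$ with entries in $[\ell,u]$: both inner maxima become a common pseudo-distribution maximum divided by $\bar c^{\,m}$ and $(c^*)^m$ respectively, so their ratio is exactly the factor that the KL-bound on the outer binomial factor already compensates.

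The main obstacle is the degenerate case where $m/n$ falls outside $I$, so $\bar c$ is a boundary point of $I$ and the boxes $[\ell/\bar c, u/\bar c]$ and $[\ell/c^*, u/c^*]$ may differ by more than the reparameterization can absorb for free. I would handle this either by a short case analysis that pins both $\bar c$ and $c^*$ to the same boundary of $I$ (forcing them close enough that the residual loss is inside the target budget), or more robustly by replacing the single call by $\otilde(1)$ calls on a geometric $c$-grid $\{(1+1/n)^j\,|S|\ell\} \cap I$ and taking the best; the polylogarithmic overhead is absorbed into the $\otilde$ hidden in $T(n,k,u,\ell)$. The final bound $\exp(-O((u-\ell)n\log n + k\log n))$ follows by combining the PML loss, the KL-bound on the binomial factor, and the constant-factor discretization loss.
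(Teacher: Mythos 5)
Your starting point --- the factorization $\Pr(\bp,\phis)=\binom{n}{m}\,c^{m}(1-c)^{n-m}\,\probpml(\bp_S/c,\phi'_S)$ with $c=\|\bp_S\|_1$ --- is essentially the same decomposition the paper uses, and it is correct. The genuine gap is in how you decouple the outer mass variable from the inner box-constrained maximization. Your key step asserts that, after reparameterizing via $\bp'=c\bq$, ``both inner maxima become a common pseudo-distribution maximum divided by $\bar c^{\,m}$ and $(c^*)^m$.'' This is false: the map $\bq\mapsto c\bq$ identifies $\Delta^{S}_{[\ell/c,\,u/c]}$ with the set of pseudo-distributions on $S$ whose entries lie in $[\ell,u]$ \emph{and whose total mass is exactly $c$}; that mass-slice constraint still depends on $c$, so the two inner maxima are taken over different sets and their ratio is not $(c^*/\bar c)^m$. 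Hence the claimed exact cancellation against the KL bound on the outer factor $c^m(1-c)^{n-m}$ does not go through, and the real crux --- comparing $\max_{\bq\in\Delta^{S}_{[\ell/\bar c,\,u/\bar c]}}\probpml(\bq,\phi'_S)$ with $\max_{\bq\in\Delta^{S}_{[\ell/c^*,\,u/c^*]}}\probpml(\bq,\phi'_S)$ when $\bar c\neq c^*$ --- is never addressed. (Even granting your premise, the accounting does not close: the residual factor would be $\bigl((1-\bar c)/(1-c^*)\bigr)^{n-m}$, which is not controlled by the stated budget.) Moreover, the ``degenerate case'' $m/n>\min(1,|S|u)$ is precisely where your exponent bound needs $\bar c\ge m/n$ and fails, and it is left as an unresolved either/or; and the fallback of calls on a $(1+1/n)$-geometric grid is $\Theta(n\log(u/\ell))$ calls with rescaled arguments, which does not give the single invocation in time $T(n,k,u,\ell)$ claimed by the lemma for an abstract $T$.

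The paper's proof avoids this coupling altogether: it writes $\max_{\bq\in\simplex}\Pr(\bq,\phis)$ as $\max_{\alpha\in[0,1]}\alpha^{m}(1-\alpha)^{n-m}$ times inner maxima over distributions supported on $S$ and on $\bar S$ that do not depend on $\alpha$ (this is exactly the $1$-homogeneity you also use), so the optimal mass on $S$ is exactly $m/n$ in closed form; one then makes a single call to the hypothesized algorithm on the $S$-profile with the original parameters $(n,k,u,\ell)$ and scales the returned distribution by $m/n$ --- no KL compensation, no grid, no case analysis. If you insist on carrying the $c$-dependent boxes $[\ell/c,u/c]$, you must supply the missing comparison between box-constrained maxima at different masses; otherwise restructure the argument, as the paper does, so that the benchmark box handed to the algorithm does not vary with $c$.
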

\begin{proof}
	\renewcommand{\ns}{n_1}
	\renewcommand{\nsbar}{n_2}
Recall that, 
$$\Pr(\bq,\phis)\defeq\sum_{\{y^n \in \bX^n~|~ \Phis(y^n)=\phis \}} \bbP(\bq,y^n)~.$$
Let $\bq_{S}$ and $\bq_{\bar{S}}$ denote the part of distribution $\bq$ corresponding to $S,\bar{S} \subseteq \bX$; they are pseudo distributions supported on $S$ and $\bar{S}$ respectively. Let $\ns= \sum_{\mj \in \phis} \mj$ and $\nsbar \defeq \sum_{\mj \in \phi_{\bar{S}}} \mj$ then, 
$$\probpml(\bq_{S},\phi_{S}) \defeq \sum_{\{y^{\ns} \in S^{\ns}~|~\Phi(\yns) = \phis\}} \prod_{x \in S} \bq_{x}^{\bff(\yns,x)}$$
$$\probpml(\bq_{\bar{S}},\phi_{\bar{S}}) \defeq \sum_{\{y^{\nsbar} \in {\bar{S}}^{\nsbar}~|~\Phi(\ynsbar) = \phi_{\bar{S}}\}} \prod_{x \in \bar{S}} \bq_{x}^{\bff(\ynsbar,x)}$$
We can write the probability of a pseudo profile in terms of the above functions as follows,
$$\Pr(\bq,\phis) = \probpml(\bq_{S},\phi_{S}) \probpml(\bq_{\bar{S}},\phi_{\bar{S}}).$$
Therefore,
\begin{align*}
\max_{\bq \in \simplex}\Pr(\bq,\phis) = \max_{\bq \in \simplex} \probpml(\bq_{S},\phi_{S}) \probpml(\bq_{\bar{S}},\phi_{\bar{S}})~,
\end{align*}
In the applications of PseudoPML, we just require the part of the distribution corresponding to $S\subseteq \bX$ and in the remainder we focus on its computation by exploiting the product structure in the objective. 
$$\max_{\bq \in \simplex} \probpml(\bq_{S},\phi_{S}) \probpml(\bq_{\bar{S}},\phi_{\bar{S}}) = \max_{\alpha \in [0,1]} \left(\alpha^{\ns} \max_{\bq' \in \Delta^{S}} \probpml(\bq',\phi_{S})\right) \left((1-\alpha)^{\nsbar}\max_{\bq'' \in \Delta^{\bar{S}}} \probpml(\bq'',\phi_{\bar{S}}) \right),$$
where in the above objective we converted the terms involving the pseudo distributions to distributions. The above equality holds because scaling all the probability values of a distribution by a factor of $\alpha$ scales the PML objective by a factor of $\alpha$ to the power of length of the profile, which is $\ns$ and $\nsbar$ for $\phis$ and $\phi_{\bar{S}}$ respectively.
The above objective is nice as we can just focus on the first term in the objective corresponding to $S$ given the optimal $\alpha$ value. Note in the above optimization problem the terms $\max_{\bq' \in \Delta^{S}} \probpml(\bq',\phi_{S})$ and $\max_{\bq'' \in \Delta^{\bar{S}}} \probpml(\bq'',\phi_{\bar{S}})$ are independent of $\alpha$ and we can solve for the optimum $\alpha$ by finding the maximizer of the following optimization problem.
$$\max_{\alpha \in [0,1]} \alpha^{\ns} (1-\alpha)^{\nsbar}~.$$ 
The above optimization problem has a standard closed form solution and the optimum solution is  $\alpha^{*}=\frac{\ns}{\ns+\nsbar}=\frac{\ns}{n}$.
To summarize, the part of distribution $\bp''$ corresponding to $S$ that satisfies the guarantees of the lemma can be computed by solving the optimization problem $\max_{\bq' \in \Delta^{S}} \Pr(\bq',\phi_{S})$ upto multiplicative accuracy of $ \expo{-O((u-\ell)n\log n+k\log n)}$ and then scaling all the entries of the corresponding distribution supported on $S$ by a factor of $\ns/n$; which by the conditions of the lemma can be computed in time $T(n,k,\ell,u)$ and we conclude the proof.
\end{proof}
Using the above lemma we now provide the proof for \Cref{lem:pseudomain}.
\newcommand{\issimplex}{\simplex_{S,I}}
\newcommand{\ssimplex}{\Delta^{S}}
\begin{proof}[Proof of \Cref{lem:pseudomain}]
	Let $\bp,\bpml$ be the underlying hidden distribution and $(\beta,S)$-approximate PseudoPML distribution. The guarantees stated in the lemma are the efficient version of Theorem 3.9 and 3.10 in \cite{CSS19pseudo}. Both these theorems are derived using Theorem 3.8 in \cite{CSS19pseudo} that in turn depends on Theorem 3.7 which captures the performance of an approximate PseudoPML distribution. In all these proofs the only expression where the definition of $(\beta,S)$-approximate PseudoPML distribution was used is the following: $\prob{\bpml,\phis} \geq \beta \prob{\bp,\phis}$. Any other distribution $\bp'$ that satisfies $\prob{\bp',\phis} \geq \beta \prob{\bp,\phis}$ also has the same guarantees and provides the efficient version of Theorem 3.9 and 3.10, that is the guarantees of our lemma.
	
	As described in \Cref{app:notation}, the general framework works in two steps. In the first step, it takes the first half of the samples ($x_1^n$) and determines the set $S\defeq \{y \in \bX~|~f(\xon,y) \in \fsub \}$, where $\fsub$ is a predetermined subset of frequencies (input to the general framework) that depends on the property of interest. The pseudo profile $\phis$ is computed on the second half of the samples, that is $\Fsj\defeq|\{y\in S ~|~\bff(\xtn,y)=j \}|$. Based on the frequency of the elements of $S$ in the first half of the sample (they all belong to $\fsub$), with high probability (in the number of samples) we have an interval $I=[\ell,u]$ in which all the probability values of elements in $S \subseteq \bX$ for $\bp$ lie. Therefore finding a distribution $\bp'$ that satisfies,
	$$\prob{\bp',\phis} \geq \beta \max_{\bq \in \issimplex} \prob{\bq,\phis} \implies \prob{\bp',\phis} \geq \beta \prob{\bp,\phis}~,$$
	 where $\issimplex \defeq \{\bq \in \simplex~\Big|~ \bq_{x} \in I \text{ for all }x\in S\}$; therefore $\bp'$ can be used as a proxy for $\bpml$ and both these distributions satisfy the guarantees of our lemma (for entropy and distance to uniformity) for an appropriately chosen $\beta$. The value of $\beta$ depends on the size of $\fsub$ that further depends on the property of interest and we analyze this parameter for each property in the final parts of the proof. 
	 
	 Now note that we need to find a distribution $\bp'$ that satisfies, $\prob{\bp',\phis} \geq \beta \max_{\bq \in \issimplex} \prob{\bp,\phis}$ and to implement the PseudoPML approach all we need is $\bp'_{S}$, the part of the distribution corresponding to $S$. The \Cref{lem:computepseudo} helps reduce the problem of computing PseudoPML to PML and we use the algorithm given to us by the condition of our lemma to compute $\bp'_{S}$.

	 In the remainder, we study the running time and the value of $\beta$ for entropy and distance to uniformity.
	
	\paragraph{Entropy:} In the application of general framework (\Cref{algpml}) to entropy, the authors in \cite{CSS19pseudo} choose $F=[0,c \log n]$, where $c>0$ is a fixed constant (See proof of Theorem 3.9 in \cite{CSS19pseudo}). Recall the definition of subset $S\defeq \{y \in \bX~|~f(\xon,y) \in \fsub \}$ and as argued in the proof of Theorem 3.9 in \cite{CSS19pseudo}, with high probability all the domain elements $x\in S$ have probability values $\bp_{x} \leq \frac{2c \log n}{n}$. Further, we can assume that the minimum non-zero probability of distribution $\bp$ to be $\Omega(1/\poly(n))$, because in our setting $n \in \Omega(N/\log N)$ for all error parameters $\epsilon$ and the probability values less than $1/\poly(n)$ contribute very little to the probability mass or entropy of the distribution and we can ignore them. Therefore to implement the PseudoPML approach for entropy all we need is the part corresponding to $S$ of distribution $\bp'$ that satisfies,
	\begin{equation}\label{eq:satisfy}
	\prob{\bp',\phis} \geq \beta \max_{\bq \in \issimplex} \prob{\bq,\phis}~,
	\end{equation}
	for any $\beta > \expo{-O(\log^2 n)}$ (Theorem 3.9 in \cite{CSS19pseudo}) and $I=[\frac{1}{\poly(n)},\frac{2c \log n}{n}]$. 
	Based on our discussion at the start of the proof, this corresponds to computing the $\beta$-approximate PML distribution supported on $S$ for the profile $\phis$. As the number of distinct frequencies in the profile $\phis$ is at most $O(\log n)$, length of the profile $\phis$ is at most $n$ and interval $I=[\ell,u]$ take values $\ell=1/\poly(n)$ and $u=O(\frac{\log n}{n})$, the algorithm given by the conditions of our lemma computes the part corresponding to $S$ of distribution $\bp'$ that satisfies \Cref{eq:satisfy} with approximation factor $\beta > \expo{-O(\log^2 n)}$ in time $T(n,O(\log n),1/\poly(n),O(\frac{\log n}{n}))$.

	The proof for distance to uniformity is similar to that of entropy and is described below.
	
	\paragraph{Distance to Uniformity:} For distance to uniformity, the authors in \cite{CSS19pseudo} choose $F=[\frac{n}{N}-\sqrt{\frac{c n \log n}{N}}, \frac{n}{N}+\sqrt{\frac{c n \log n}{N}}]$, where $c$ is a fixed constant (See proof of Theorem 3.10 in \cite{CSS19pseudo}). The subset $S\defeq \{y \in \bX~|~f(\xon,y) \in \fsub \}$ and as argued in the proof of Theorem 3.10 in \cite{CSS19pseudo}, with high probability all the domain elements $x\in S$ have probability values $\bp_{x} \in [\frac{1}{N}-\sqrt{\frac{2c \log n}{nN}}, \frac{1}{N}+\sqrt{\frac{2c \log n}{nN}}]$. Therefore to implement the PseudoPML approach for distance to uniformity all we need is the part corresponding to $S$ of distribution $\bp'$ that satisfies,
	\begin{equation}\label{eq:satisfy1}
	\prob{\bp',\phis} \geq \beta \max_{\bq \in \issimplex} \prob{\bq,\phis}~,
	\end{equation}
	for any $\beta > \expo{-O(\sqrt{\frac{c n \log^3 n}{N}}}$ (Theorem 3.10 in \cite{CSS19pseudo}) and $I=[\frac{1}{N}-\sqrt{\frac{2c \log n}{nN}}, \frac{1}{N}+\sqrt{\frac{2c \log n}{nN}}]$. This corresponds to computing the $\beta$-approximate PML distribution supported on $S$ for the profile $\phis$. As the number of distinct frequencies in the profile $\phis$ is at most $\sqrt{\frac{2c n \log n}{N}} \in O(1/\epsilon)$ (because $n=\Theta(\frac{N}{\epsilon^2\log N})$ for distance to uniformity), length of the profile $\phis$ is at most $n$ and interval $I=[\ell,u]$ take values $\ell=\frac{1}{N}-\sqrt{\frac{2c \log n}{nN}} \in \Omega(1/N)$ and $u=\frac{1}{N}+\sqrt{\frac{2c \log n}{nN}} \in O(1/N)$, the algorithm given by the conditions of our lemma computes the part corresponding to $S$ of distribution $\bp'$ that satisfies \Cref{eq:satisfy1} with approximation factor $\beta > \expo{-O(\sqrt{\frac{c n \log^3 n}{N}}}$ in time $T(n,O(1/\epsilon),\Omega(1/N),O(1/N)$. We conclude the proof.
\end{proof}

\subsection{Experiments}\label{app:experiments}

In this section, we provide details related to PseudoPML implementation and some additional experiments. We perform different sets of experiments for entropy estimation -- first to compare performance guarantees of PseudoPML approach implemented using our rounding algorithm to the other state-of-the-art estimators and the other to compare the performance of the PseudoPML approach implemented using our approximate PML algorithm (\Cref{alg:round2}) with a heuristic algorithm~\cite{PJW17}.

All the plots in this section depict the performance of various algorithms for estimating entropy of different distributions with domain size $N=10^5$. Each data point represents 50 random trials. ``Uniform'' is the uniform distribution, ``Mix 2 Uniforms'' is a mixture of two uniform distributions, with half the probability mass on the first $N/10$ symbols and the remaining mass on the last $9N/10$ symbols, and $\mathrm{Zipf}(\alpha) \sim 1/i^{\alpha}$ with $i \in [N]$. In the PseudoPML implementation for entropy, we divide the samples into two parts. We run the empirical estimate on one (this is easy) and the PML estimate on the other. Similar to \cite{CSS19pseudo}, we pick $threshold=18$ (same as \cite{WY16}) to divide the samples, i.e. we use the PML estimate on frequencies $\leq 18$ and empirical estimate on the rest. As in \cite{CSS19pseudo}, we do not perform sample splitting. In all the plots, ``Our work'' corresponds to the implementation of this PseudoPML approach using our second approximate PML algorithm presented in \Cref{sec:practical} (\Cref{alg:round2}). Refer to \cite{CSS19pseudo} for further details on the PseudoPML approach.

In \Cref{fig2}, we compare performance guarantees of our work to the other state-of-the-art estimators for entropy. We already did this comparison in \Cref{subsec:expirements} and here we do it for three other distributions. As described in \Cref{subsec:expirements}, MLE is the naive approach of using the empirical distribution with correction bias; all the remaining algorithms are denoted using bibliographic citations.
\begin{figure}[ht]
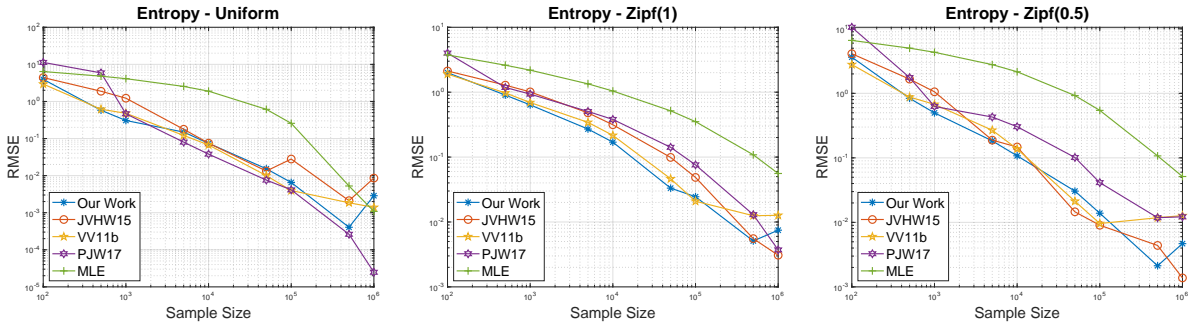

	\begin{center}
		\begin{tabular}{ccc}
			\begin{overpic}[width=.3\columnwidth]{
					Plots/Uni-crop}
			\end{overpic} &
			\begin{overpic}[width=.3\columnwidth]{
					Plots/Zipf1P-crop}
			\end{overpic} &
			\begin{overpic}[width=.3\columnwidth]{
					Plots/Zipf5P-crop}
			\end{overpic}
		\end{tabular}
		\caption{
			\label{fig2}
			Experimental results for entropy estimation.
		}
	\end{center}
\end{figure}

An advantage of the pseudo PML approach is that it one can use any algorithm to compute the part corresponding to the PML estimate as a black box. In \Cref{fig3}, we perform additional experiments for six different distributions comparing the PML estimate computed using our algorithm (``Our work'') versus the algorithm in \cite{PJW17} (``Pseudo-PJW17''), a heuristic approach to compute the approximate PML distribution.


\begin{figure}[ht]
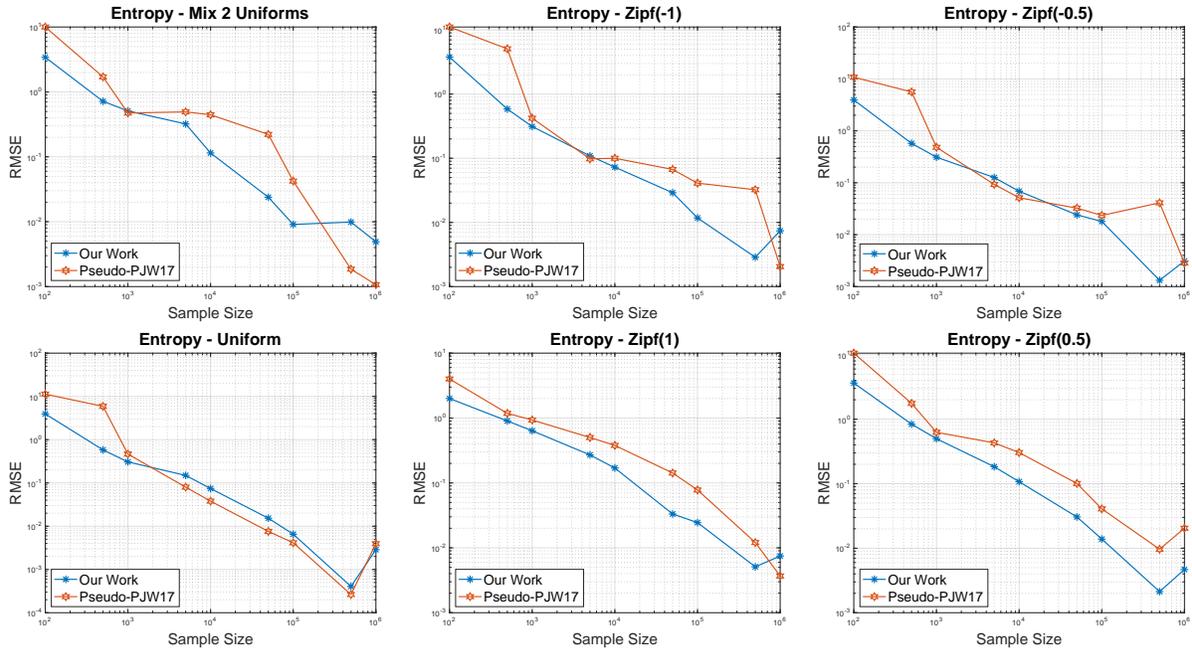

	\begin{center}
		\begin{tabular}{ccc}
			\begin{overpic}[width=.3\columnwidth]{
					Plots/MixPs-crop}
			\end{overpic} &
			\begin{overpic}[width=.3\columnwidth]{
					Plots/Zipf1NPs-crop}
			\end{overpic} &
			\begin{overpic}[width=.3\columnwidth]{
					Plots/Zipf5NPs-crop}
			\end{overpic}
		\end{tabular}
	\begin{tabular}{ccc}
		\begin{overpic}[width=.3\columnwidth]{
				Plots/UniPs-crop}
		\end{overpic} &
		\begin{overpic}[width=.3\columnwidth]{
				Plots/Zipf1PPs-crop}
		\end{overpic} &
		\begin{overpic}[width=.3\columnwidth]{
				Plots/Zipf5PPs-crop}
		\end{overpic}
	\end{tabular}
		\caption{
			\label{fig3}
			Experimental results for entropy estimation.
		}
	\end{center}
\end{figure}

In the remainder we provide further details on the implementation of our algorithm (\Cref{alg:round2}). In Step 1, we use CVX\cite{cvx} with package CVXQUAD\cite{FSP17} to solve the convex program. The accuracy of discretization determines the number of variables in the convex program and for practical purposes we perform very coarse discretization which reduces the number of variables to our convex program and helps implement Step 1 faster. The size of the discretization set we choose is slightly more than the number of distinct frequencies. Even with such coarse discretization, we still achieve results that are comparable to the other state-of-the-art entropy estimators. The intuition behind to choice of such a discretization set is because of \Cref{lem:sparse}, which guarantees the existence of a sparse solution. As the discretization set is already of small size, we do not require to perform further scarification and we avoid invoking the $\sparse$ subroutine; therefore providing a faster practical implementation. 

\end{document}